\def\poly{\operatorname{poly}}
\def\polylog{\operatorname{polylog}}
\title{Sensitivity and Dynamic Distance Oracles via Generic Matrices\\and Frobenius Form}
\newcommand{\email}[1]{\href{mailto:#1}{#1}}
\date{\vspace{-5ex}}
\author{Adam Karczmarz\thanks{University of Warsaw and IDEAS NCBR, Poland. \email{a.karczmarz@mimuw.edu.pl}. Partially supported by the ERC CoG grant TUgbOAT no 772346 and the National Science Centre (NCN) grant no. 2022/47/D/ST6/02184.}
\and Piotr Sankowski\thanks{University of Warsaw, IDEAS NCBR, and MIM Solutions, Poland. \email{sank@mimuw.edu.pl}. Partially supported by the ERC CoG grant TUgbOAT no 772346 and the National Science Centre (NCN) grant no. 2020/37/B/ST6/04179.}}
\newcommand{\Ot}{\ensuremath{\widetilde{O}}}
\newcommand{\eps}{\ensuremath{\epsilon}}
\newcommand{\dist}{\delta}
\newcommand{\wei}{w}
\newcommand{\field}{\mathbb{F}}
\theoremstyle{plain}
\newtheorem{theorem}{Theorem}[section]
\newtheorem{lemma}[theorem]{Lemma}
\newtheorem{corollary}[theorem]{Corollary}
\newtheorem{fact}[theorem]{Fact}
\begin{document}

\maketitle

\begin{abstract}
  Algebraic techniques have had an important impact on graph algorithms so far. Porting
  them, e.g., the matrix inverse, into the dynamic regime improved best-known bounds for various dynamic graph problems.
  In this paper, we develop new algorithms for another cornerstone algebraic primitive, the Frobenius normal form (FNF).
  We apply our developments to dynamic and fault-tolerant exact distance oracle problems on directed graphs.

  For generic matrices $A$ over a finite field  accompanied by an FNF, we show
  (1) an efficient data structure for querying submatrices of the first $k\geq 1$ powers of $A$, and
  (2) a near-optimal algorithm updating the FNF explicitly under rank-1 updates.

  By representing an unweighted digraph using a generic matrix over a sufficiently large field (obtained by random sampling) and
  leveraging the developed FNF toolbox, we obtain:
  \begin{itemize}
    \item a conditionally optimal distance sensitivity oracle (DSO) in the case of single-edge or single-vertex failures,
    providing a partial answer to the open question of ~\cite{GuR21},
    \item a multiple-failures DSO improving upon the state of the art~\cite{BrandS19} wrt. both preprocessing and query time,
    \item improved dynamic distance oracles in the case of single-edge updates,
    \item  a dynamic distance oracle supporting vertex updates, i.e., changing all edges incident to a single vertex, in
    $\Ot(n^2)$ worst-case time and distance queries in $\Ot(n)$ time.
  \end{itemize}

\end{abstract}

\section{Introduction}
 Algebraic techniques have had an important impact on graph algorithms so far.
 For example, the state-of-the-art maximum matching algorithm in dense non-bipartite graphs~\cite{Harvey09, MuchaS04} is
 of algebraic nature.
 Porting some fundamental linear-algebraic concepts, like the matrix inverse, into the dynamic regime
 has led to non-trivial dynamic algorithms for multiple graph problems,
 such as reachability, shortest paths, maximum matchings~\cite{BrandNS19, Sankowski04, Sankowski05, Sankowski07},
 and a unified view on iterative optimization methods~\cite{Brand21}.

  In this paper, we consider another cornerstone algebraic primitive, the Frobenius normal form (FNF).
  Any square matrix $A$ over a field $\field$ is similar to a block-diagonal matrix \linebreak
  $F=\operatorname{diag}(C_{f_1},\ldots,C_{f_k})$
  where each $C_{f_i}$ is the companion matrix of a certain monic polynomial $f_i\in\field[x]$ called an \emph{invariant factor} of~$A$.
  Such a matrix,  along with the corresponding similarity transform and its inverse, constitutes a Frobenius normal form of $A$.
  Similarly to the Jordan normal form, FNF encodes the characteristic polynomial
  of a matrix; however, contrary to the Jordan form, computing it does not require finding zeros of this polynomial.
  Computing an FNF is a well-studied problem in the symbolic computation
  community, e.g.,~\cite{Giesbrecht95, Storjohann01}.
  Using FNF, \cite{SankowskiW19} reproduced\footnote{The paper \cite{SankowskiW19} contains a rather significant error, as confirmed by its authors (personal communication). \cite{SankowskiW19} mistakenly state that the diameter $D$ (i.e., the largest finite distance) of a digraph $G$ is bounded by the degree of the smallest invariant factor of its adjacency matrix $A$.
Indeed, $D$ is bounded by the degree of the \emph{minimal polynomial} of $A$, which equals the \emph{largest} (and not the smallest) invariant factor of $A$.
Nevertheless, the construction of~\cite{SankowskiW19} remains correct if it happens that the adjacency matrix of $A$ has a single invariant factor.
}
 the Yuster-Zwick exact distance oracle's bounds~\cite{YusterZ05} (up to polylogarithmic factors),
  thus providing a proof of concept of the usage of the Frobenius form in graph data structures.
  However, it seems that the full potential of FNF and other matrix forms
  in graph algorithms is yet to be fully uncovered.

  In this paper, we develop new data structures maintaining and exploiting the Frobenius normal form
  of a matrix in the \emph{generic} case.
  The genericity assumption is a common one in the computer algebra community
  and says, broadly speaking, that an algorithm should work ``almost always'', or, for ``all but special cases''.
  A specific genericity assumption (that we also use; see, e.g.,~\cite{JeannerodV05}) for matrices may require
  that the cells of a matrix, seen as indeterminates, do not satisfy some fixed polynomial equation,
  or in other words, do not lie on some fixed hypersurface of $\field^{n\times n}$.
  In particular, such an assumption can be used to ensure that $A$ has a \emph{single invariant factor},
  or, equivalently, that the \emph{characteristic polynomial} $p_A$ of $A$ equals the \emph{minimal polynomial} $\mu_A$ of $A$.
  This property implies that $A$ is similar to the companion matrix of the characteristic polynomial of $A$, i.e., an FNF of $A$ has a particularly simple form.
  Single-invariant-factor matrices can be themselves considered \emph{generic} among all matrices over a finite field $\field$ since the fraction of matrices in $\field^{n\times n}$ not having this property is known to be $1/q^3+O\left(1/q^4\right)$ if $q$ is the field size~\cite{NeumannP95}.
  In the following, when talking about generic matrices, we mean matrices with a single invariant factor.

  By representing digraphs with generic matrices, we can apply our FNF developments to
  dynamic and fault-tolerant \emph{exact distance oracle} problems on general dense directed graphs.

\paragraph{Distance oracles in static, fault-tolerant, and dynamic settings.} In the \emph{distance oracle} problem, the goal is to preprocess the input graph $G$ into a data structure
supporting arbitrary-pair distance queries. The distance oracle problem has two trivial solutions.
First, one could precompute answers to all the $O(n^2)$ possible queries by solving the all-pairs shortest paths (APSP) problem.
The other extreme is to not preprocess the graph $G$ at all, and run an $s,t$-shortest path algorithm (such as Dijkstra's algorithm)
from scratch upon a distance query~$(s,t)$.
The study of distance oracles concentrates on identifying what \emph{non-trivial} tradeoffs between space, preprocessing time and query time
are attainable, possibly under additional assumptions about the graph class of interest, and whether approximate
answers are acceptable.

Real-world networks are subject to link/node failures and evolve in time and thus motivate the study of distance oracles in fault-tolerant
and dynamic settings.

In the \emph{distance sensitivity oracle} (DSO) problem, the goal is to preprocess the input graph $G=(V,E)$
so that queries $(s,t,F)$ asking for the length of
the shortest $s,t$ path not going through the subset  $F\subseteq V\cup E$
of \emph{failed} edges or vertices are supported.
A DSO may also constrain the number of allowed failures, e.g., require that only
a single edge or vertex fails. If only at most $k$ failures are supported, we call
such a DSO a \emph{$k$-DSO}.

In dynamic scenarios,
the input graph $G$ is subject to edge set updates and we seek
a data structure supporting distance queries interleaved with graph updates.
In the \emph{fully dynamic} setting, the data structure is supposed to accept both
edge insertions and edge deletions. In the \emph{incremental} (\emph{decremental}, resp.) setting,
only edge insertions (deletions, resp.) are accepted.
Some dynamic distance oracles accept \emph{single-edge} updates, whereas other allow \emph{vertex updates}, i.e., changing all (possibly $\Theta(n)$) edges incident to a single vertex at once.

\subsection{State of the art}
\paragraph{Static and dynamic computation of Frobenius normal form.}
For generic matrices (as defined before), finding an FNF is closely related to computing the characteristic polynomial.
\cite{Keller-Gehrig85} showed an $\Ot(n^\omega)$
time\footnote{Where $\omega\approx 2.37$ is the matrix multiplication exponent, i.e., a number such that one can multiply
two $n\times n$ matrices in $O(n^\omega)$ time.}
 algorithm computing the characteristic
polynomial. \cite{Giesbrecht95} was the first to obtain an $\Ot(n^\omega)$-time
algorithm computing an FNF in the general (non-generic) case, whereas~\cite{Storjohann01} gave a deterministic
algorithm running within that near-optimal bound. Computing the Frobenius
form has also been studied for sparse matrices and, more generally, in the ``black box'' model
where the input matrix can only be accessed via multiplying it by vectors~\cite{eberly00, villard00}.
In particular, a Frobenius form of a generic matrix (with one invariant factor) can
be computed in $\Ot(n^2)$ time plus the time needed to perform $\Ot(n)$ black-box matrix-vector multiplications~\cite{eberly00}.

\cite{FrandsenS11} studied dynamic maintenance of an FNF of a matrix subject
to rank-1 updates (i.e., updates of the form $A:=A+ab^T$ for given vectors $a,b\in\field^{n\times 1}$, capturing, e.g., row and column updates).
They gave a dynamic algorithm with $\Ot(kn^2)$ worst-case update time, where
$k$ is the number of invariant factors of $A$. For the generic case $k=1$,
the update time is $\Ot(n^2)$. Their algorithm has a significant limitation
though. Even for generic matrices, whereas the block-diagonal matrix $F$ (encoding the characteristic polynomial)
similar to $A$ is maintained explicitly, an appropriate similarity transform $Q$ and its inverse
such that $A=Q^{-1}\cdot F\cdot Q$
is maintained \emph{only implicitly}. More specifically, the matrices $Q$ and $Q^{-1}$
can only be accessed by multiplying them via vectors in $\Ot(n^2)$ time
which makes processing them troublesome.

\paragraph{Static distance oracles.} For general weighted digraphs with large edge weights (say, integral and polynomial in $n$),
no non-trivial preprocessing/space/query trade-offs are known.

For digraphs with small integer  weights $\{-W,\ldots,W\}$, \cite{YusterZ05} gave a non-trivial distance oracle with $\Ot(Wn^\omega)$ preprocessing time, $O(n^2)$ space and
$\Ot(n)$ query time.
The query time is also significantly smaller than the $\Theta(n^2)$ cost of running breadth-first search on $G$.
The data structure of~\cite{YusterZ05} can also produce an actual shortest path (not just the distance) upon query.
Importantly, in the fundamental case of dense unweighted graphs, preprocessing time of~\cite{YusterZ05} is significantly lower than the best-known unweighted APSP
bound~\cite{Zwick02} of $\Ot(n^{2+\rho})$, where $\rho\approx 0.529$ is a number such that $\omega(1,\rho,1)=1+2\rho$ and
$\omega(a,b,c)$ is such that one can multiply $n^a\times n^b$ and $n^b\times n^c$ matrices in $O(n^{\omega(a,b,c)})$ time.
In fact, computing APSP in unweighted directed graphs is conjectured to require $\Theta(n^{2.5})$ time even if $\omega=2$~\cite{LincolnPW20} and there
are compelling reasons to believe that Zwick's algorithm is near-optimal~\cite{ChanWX21}.

\paragraph{Distance sensitivity oracles.}
Whereas the extreme no-preprocessing tradeoff transfers to fault-tolerant and dynamic settings
with no change, the precompute-all approach requires much more time and space effort
simply because there are much more possible queries to serve.
Indeed, a trivial solution would require precomputing $\Theta(n^2\cdot m^{f})$ distances
if at most $f$ edge failures are to be supported.
Despite this,~\cite{BernsteinK09} showed a 1-DSO for \emph{real-weighted} digraphs with $\Ot(nm)$ preprocessing time,
$\Ot(n^2)$ space
and constant query time. Note that their preprocessing matches the state-of-the-art $\Ot(nm)$ APSP
bound that is conjectured to be optimal for real-weighted digraphs.
\cite{DuanZ17a} improved the space bound of~\cite{BernsteinK09} to $O(n^2)$.
For the case $f=2$,~\cite{DuanP09a} gave a DSO with $\Ot(n^2)$ space
and $\Ot(1)$ query time, requiring higher polynomial preprocessing.

There is also extensive prior work on distance sensitivity oracles in the unweighted and small-integer-weights regimes.
\cite{WeimannY13} showed the first non-trivial distance sensitivity oracle in this setting.
For integer weights $\{-W,\ldots, W\}$, they could achieve subcubic preprocessing and subquadratic
query time for any constant number of failures.
In the same regime, \cite{GV20} showed the first 1-DSO with subcubic preprocessing and \emph{sublinear} query time.
\cite{ChechikC20} showed a 1-DSO with $O(Wn^{2.873})$ preprocessing and \emph{polylogarithmic} query time.
Around the same time,~\cite{Ren22} gave a 1-DSO for \emph{positive} edge weights with $O(Wn^{2.724})$ preprocessing
time and $O(1)$ query time.
The data structures~\cite{ChechikC20, GV20, Ren22, WeimannY13} are all randomized,
have a linear dependence on the largest
(absolute) edge weight, and also support path reporting.
\cite{BiloCC0S22} showed a derandomization of the approach of~\cite{Ren22} at the cost of slightly slower (but still subcubic)
preprocessing.

The aforementioned data structures for small weights all leverage fast matrix multiplication
to speed up combinatorial computations. DSOs with improved preprocessing and query times have been obtained
via a more aggressive use of algebraic techniques: forms of path counting~\cite{DemetrescuI05, KingS02} or
small-rank update to the matrix inverse~\cite{Sankowski04, Sankowski05, Sankowski05a}
combined with randomized polynomial identity testing~\cite{Zippel79}. These techniques
typically do not allow for efficient path reporting. Using algebraic techniques of this flavor,~\cite{GuR21} recently showed
a 1-DSO for digraphs with weights~$\{1,\ldots,W\}$ with $O(Wn^{2.58})$ preprocessing time that is quite close
to the $O(n^{2.529})$ APSP bound of~\cite{Zwick02}.

\cite{BrandS19} gave an algebraic DSO that can handle a \emph{polynomial} number of failures
in the case of weights $\{-W,\ldots,W\}$.
Specifically, for any $\mu\in [0,1]$, their data structure has $\Ot(Wn^{\omega+(3-\omega)\mu})$ construction time,
and after preprocessing a batch of $f$ failures in $\Ot(Wn^{2-\mu}f^2+Wnf^\omega)$ time,
answers distance queries wrt. that batch in $\Ot(Wn^{2-\mu}f+Wnf^2)$ time.
That is, if the failing edges are considered a part of the query, the query time
is $\Ot(Wn^{2-\mu}f^2+Wnf^\omega)$. In particular, one can handle up to $f=n^{1/\omega-\eps}\approx n^{0.42}$
failures with subcubic preprocessing and subquadratic query time.

\paragraph{Fully dynamic exact distance oracles.} For general weighted digraphs, \cite{DemetrescuI04}
gave a combinatorial deterministic fully dynamic data structure (later slightly improved by~\cite{Thorup04}) with $\Ot(n^2)$ amortized update
time maintaining all-pairs shortest paths explicitly. This
improves upon recompute-from-scratch for all but the sparsest digraphs.
The fully dynamic APSP problem (that is, explicitly maintaining the distance matrix)
has also been studied with the objective of optimizing the \emph{worst-case} update bounds~\cite{AbrahamCK17, ChechikZ23, GutenbergW20b, Thorup05}.
The current best-known worst-case update bound for APSP is $\Ot(n^{2+2/3})$ for weighted graphs
and $\Ot(n^{2.5})$ for unweighted graphs~\cite{AbrahamCK17, GutenbergW20b}.
In particular, the latter improves upon the static APSP bound of~\cite{Zwick02}.
Interestingly, all the known fully dynamic APSP data structures support vertex updates.

As far as fully dynamic exact distance oracles with a non-trivial query procedure are concerned,
\cite{RodittyZ11} showed a data structure tailored to sparse graphs
with $\Ot(m\sqrt{n})$ amortized (vertex) update time
and $\Ot(n^{3/4})$ query time, whereas~\cite{KarczmarzS23} recently presented
a data structure for real-weighted digraphs with $\Ot(mn^{4/5})$ worst-case update
time and $\Ot(n^{4/5})$ query time.

For dense graphs, dynamic distance oracles with both
subquadratic \emph{single-edge} update and query time can be obtained using variants of dynamic matrix inverse~\cite{Sankowski04, BrandNS19}.
The state-of-the-art worst-case update/query bound of this kind is~$\Ot(n^{1.703})$ due to~\cite{BrandFN22}.
Moreover, \cite{alokhina-brand, BergamaschiHGWW21} described shortest path-reporting extensions of these algebraic data structures with polynomially worse (but still subquadratic) worst-case update and query time.

\subsection{Our results}

\newcommand{\mm}{\operatorname{MM}}
\paragraph{Frobenius form toolbox.} We obtain two tools for generic
matrices (i.e., with a single invariant factor) over an arbitrary finite field $\field$ and accompanied by a Frobenius form. The first one is a data structure for querying
some number of initial powers of the matrix.

\begin{theorem}\label{t:frobenius-powers}
  Let $A\in \field^{n\times n}$ be a generic matrix and suppose its Frobenius normal form is given. One can preprocess $A$ in $\Ot(n^2)$ time so that the following queries are supported.

  Given $S,T\subseteq [n]$ and $h\in [1,n]$, compute the $S\times T$ submatrices
  of the matrix powers $A^1,\ldots,A^h$.
  The query time is $\Ot(n^{\omega(s,1-\alpha,t)+\alpha})$, where
  $|S|=\lfloor n^s\rfloor$, $|T|=\lfloor n^t\rfloor$ and $h=\lfloor n^\alpha\rfloor$.
\end{theorem}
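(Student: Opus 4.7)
The plan is to use the given Frobenius form to reduce the query to a batched product on the companion matrix of $f$, then exploit the Hankel-like structure of its powers via rectangular matrix multiplication combined with block decomposition.

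Since $A$ is generic, the supplied FNF provides $A = Q^{-1} F Q$ where $F = C_f$ and $Q, Q^{-1}$ are explicit; hence $A^i = Q^{-1} F^i Q$ and a query $(S,T,h)$ reduces to computing $U F^i V$ for $i = 1,\dots,h$, where $U := Q^{-1}_{S,*}$ and $V := Q_{*,T}$. During the $\Ot(n^2)$ preprocessing I additionally compute the coefficient vectors $r_j \in \field^n$ of $x^j \bmod f(x)$ for $j = 0,\dots,2n$, each obtained in $O(n)$ time from the previous by a shift-and-reduce-by-$f$ step. The structural identity $F^i e_k = r_{i+k-1}$ (expressing that $F$ acts as multiplication by $x$ in $\field[x]/(f)$) gives $U F^i = \Psi_{*,\,i+1:i+n}$, where $\Psi \in \field^{|S| \times (n+h)}$ is defined by $\Psi[s,m+1] := \langle U[s,*], r_m\rangle$. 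Because $r_m = e_{m+1}$ for $m < n$, the first $n$ columns of $\Psi$ are just $U$; the remaining $h$ columns form the rectangular product $U \cdot [r_n \mid \cdots \mid r_{n+h-1}]$, computable in $\Ot(n^{\omega(s,1,\alpha)})$ time.

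It remains to compute the sliding-window products $M_i := \Psi_{*,\,i+1:i+n}\cdot V$ for $i = 1,\dots,h$. I block-decompose the shared inner dimension by partitioning $[n]$ into $h$ contiguous blocks $B_1,\dots,B_h$ of width $n^{1-\alpha}$ and splitting $V$ into matching row blocks $V^{(1)},\dots,V^{(h)} \in \field^{n^{1-\alpha}\times|T|}$, so that $M_i = \sum_{c=1}^{h} \Psi_{*,\,i+B_c}\cdot V^{(c)}$. For each block $c$, the $h$ sliding windows $\Psi_{*,\,i+B_c}$, $i\in[1,h]$, lie inside a fixed $|S| \times (n^{1-\alpha}+h-1)$ column strip of $\Psi$; I stack them as super-rows into one $(|S|h)\times n^{1-\alpha}$ left factor and multiply once against $V^{(c)}$. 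Arranging the work as $n^\alpha$ rectangular multiplications of shape $n^s \times n^{1-\alpha} \times n^t$ yields total time $\Ot(n^{\omega(s,1-\alpha,t)+\alpha})$, which subsumes both the $\Ot(n^{\omega(s,1,\alpha)})$ cost of preparing $\Psi$ and the $\Ot(n^2)$ preprocessing.

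The main obstacle will be the final complexity accounting: a naive ``one sub-product per $(c,i)$ pair'' analysis produces $h^2 = n^{2\alpha}$ independent rectangular multiplications and overshoots by a factor of $n^\alpha$. Extracting the missing saving requires either absorbing the $h$ shifts of each block into one larger rectangular multiplication and using an identity of the form $\omega(s+\alpha,1-\alpha,t) \le \omega(s,1-\alpha,t)+\alpha$ to balance, or, alternatively, a Keller-Gehrig-style Krylov precomputation of $V, FV,\dots,F^h V$ followed by one rectangular multiplication of shape $n^s \times n \times n^{t+\alpha}$ against $U$. In either case, the careful choice of block width $n^{1-\alpha}$ together with the batching strategy is what drives the inner-dimension argument of $\omega$ down from $n$ to $n^{1-\alpha}$ and yields the stated exponent.
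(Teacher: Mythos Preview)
Your preprocessing and structural setup match the paper's: you correctly identify the Hankel/sliding-window relation $UF^i = \Psi_{*,\,i+1:i+n}$ (the paper packages the same object as an auxiliary matrix $R$, computed in $\Ot(n^2)$ via linear recurrences) and you choose the right block width $n^{1-\alpha}$. The gap is in the query-time accounting, and neither of your two proposed fixes closes it.

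For fix~1: stacking the $h$ shifted windows of block $c$ into an $(|S|h)\times n^{1-\alpha}$ left factor and multiplying by $V^{(c)}$ costs $n^{\omega(s+\alpha,\,1-\alpha,\,t)}$ per block. Summing over the $h=n^\alpha$ blocks gives $n^{\alpha+\omega(s+\alpha,\,1-\alpha,\,t)}$, and the inequality $\omega(s+\alpha,1-\alpha,t)\le \omega(s,1-\alpha,t)+\alpha$ you invoke only yields $n^{2\alpha+\omega(s,1-\alpha,t)}$---still a factor $n^\alpha$ too large. The stacked left factor has $|S|h\cdot n^{1-\alpha}$ entries but only $|S|\cdot(n^{1-\alpha}+h)$ distinct values; rectangular matrix multiplication alone does not exploit this repetition. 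For fix~2: forming $[V\mid FV\mid\cdots\mid F^hV]$ and multiplying once by $U$ has shape $n^s\times n\times n^{t+\alpha}$, so the inner dimension stays $n$ rather than $n^{1-\alpha}$, and $\omega(s,1,t+\alpha)$ does not in general equal $\omega(s,1-\alpha,t)+\alpha$ (already for $s=t=1$ the target is $\omega(1,1-\alpha,1)+\alpha$ whereas your route gives at best $\omega+\alpha$).

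The ingredient you are missing is that the $h$ shifts per block form a \emph{convolution}, and convolutions combine with fast matrix multiplication via \emph{polynomial matrix multiplication}. The paper packs each block into degree-$O(h)$ polynomials: for $i\in S$ and block $j$ set $p_{i,j}(x)=\sum_{\ell}\Psi_{i,\,jh+\ell}\,x^\ell$, and for block $j$ and $t\in T$ set $q_{j,t}(x)=\sum_{\ell}V^{(j)}_{\ell,t}\,x^{h-\ell}$. A single product $P\cdot Q$ of polynomial matrices in $\field[x]^{|S|\times(n/h)}$ and $\field[x]^{(n/h)\times|T|}$, carried out over degree-$O(h)$ polynomials via FFT, costs $\Ot\bigl(h\cdot n^{\omega(s,1-\alpha,t)}\bigr)$, and the coefficients of $(PQ)_{i,t}$ encode $(A^k)_{i,t}$ for all $k\le h$ simultaneously. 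This is exactly what collapses your $h^2$ subproducts to the claimed bound: FFT handles the $h$ shifts at $\Ot(h)$ cost per scalar product, while fast rectangular multiplication handles the $|S|\times(n/h)\times|T|$ block structure.
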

Theorem~\ref{t:frobenius-powers} generalizes and improves upon a previous result of~\cite{SankowskiW19} who showed\footnote{This result does not depend on the erroneous statement in~\cite{SankowskiW19} about the graph diameter.} that after additional $\Ot(n^\omega)$-time preprocessing of a generic $A$ (given its Frobenius form), one can support queries $(i,j)$ asking for the $n$ values $(A^1)_{i,j},(A^2)_{i,j},\ldots,(A^n)_{i,j}$ in $\Ot(n)$ time.

One particularly important use case of Theorem~\ref{t:frobenius-powers} is computing the first $h\leq n^\alpha$ powers of~$A$. Theorem~\ref{t:frobenius-powers} implies that this is possible in $\Ot(hn^{\omega(1,1-\alpha,1)})$ time which polynomially improves upon
the trivial $O(hn^\omega)$ bound for all polynomial values of $h$.
To the best of our knowledge, previously, the first non-trivial result of this kind has been described for the case $h=n$: \cite{Storjohann15, ZhouLS15} showed that $n$ initial powers of~$A$ can be computed in $\Ot(n^3)$ time, which also follows from the data structure of~\cite{SankowskiW19}.
An improved $\Ot(h^2n^{\omega(1,1-\alpha,1-\alpha)})$ bound for computing the initial $h$ powers of~$A$ has been shown (implicitly) by~\cite{GuR21}. Both~\cite{Storjohann15,ZhouLS15}~and~\cite{GuR21} studied a more general problem of inverting an arbitrary degree-$d$ polynomial matrix modulo $x^{h+1}$.

We also show an improved dynamic algorithm updating a Frobenius form
of a generic matrix explicitly subject to a rank-1 perturbation.
\begin{restatable}{theorem}{trankone}\label{t:rank1}
  Let $A\in\field^{n\times n}$ be a generic matrix.
  Suppose an FNF of $A$ and an FNF of $A^T$ are given.
  Then, for any $a,b\in\field^{n\times 1}$ such that $A'=A+ab^T$ is generic,  Frobenius normal forms
  of $A'$ and $(A')^T$
  can be computed \emph{explicitly} in $\Ot(n^2)$ time. The algorithm succeeds with high probability. 
\end{restatable}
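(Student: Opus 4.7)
The plan is to compute the new FNF in two stages: first the new characteristic polynomial $p'$, then explicit similarity transforms $Q'$ and $(Q')^{-1}$. Writing $A = Q^{-1} C_p Q$ and setting $\tilde a = Qa$, $\tilde b^T = b^T Q^{-1}$, the matrix-determinant lemma gives $p'(x) = p(x) - b^T\operatorname{adj}(xI-A)a$, so $p'$ is determined by $p$ together with the moment sequence $m_k = b^T A^k a = \tilde b^T C_p^k \tilde a$ for $k = 0, \ldots, n-1$. These moments are obtained in $O(n^2)$ by iterating $C_p^k \tilde a$ (cost $O(n)$ per step since $C_p$ is companion) and taking inner products with $\tilde b$; a single convolution with the coefficients of $p$ then recovers $p'$.

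For the similarity, I take $(Q')^{-1}$ to be the Krylov matrix $K' = [e_1, A' e_1, \ldots, (A')^{n-1} e_1]$, where $e_1$ is cyclic for $A'$ with high probability under the genericity assumption. Unrolling $v_{k+1} = A v_k + (b^T v_k) a$ yields
\[
v_k = A^k e_1 + \sum_{i=0}^{k-1} c_i^{(k)} A^i a, \qquad c_i^{(k)} = s_{k-i},
\]
where $\{s_k\}$ is a scalar sequence satisfying $s_{k+1} = \beta_k + \sum_{j=1}^{k} s_j \gamma_{k-j}$, with $\beta_k = b^T A^k e_1$ and $\gamma_k = m_k$. All of $\beta_k, \gamma_k, s_k$ (for $k<n$) are assembled in $O(n^2)$ using the FNF of $A$ as above. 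Thus $K' = E + \alpha C$, where $E = [e_1, A e_1, \ldots, A^{n-1} e_1]$ and $\alpha = [a, Aa, \ldots, A^{n-1} a]$ are Krylov matrices of $A$ (at $e_1$ and at $a$), and $C$ is the upper-triangular Toeplitz matrix with first row $(s_0, \ldots, s_{n-1})$. Theorem~\ref{t:frobenius-powers} delivers $E$ in $\Ot(n^2)$ time with $|S|=n$, $|T|=1$, $h=n$.

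The main obstacle is computing $\alpha C$ in $\Ot(n^2)$, since $\alpha$ is a Krylov matrix of $A$ at a \emph{dense} vector: we have $\alpha = Q^{-1} \tilde U$, where $\tilde U = \tilde a(C_p)$ is the matrix of multiplication by $\tilde a(x)$ in $\field[x]/p(x)$, and a direct product would cost $\Omega(n^\omega)$. I plan to sidestep this by computing $\alpha$ row by row: the $i$-th row of $\alpha$ equals $(Q^{-1})_{i,\cdot}\, \tilde U$, equivalently the $i$-th column of $\alpha^T$ equals $\tilde U^T$ applied to the transpose of the $i$-th row of $Q^{-1}$. Since modular polynomial multiplication by $\tilde a$ modulo $p$ runs in $\Ot(n)$ via FFT, Tellegen's transposition principle yields an $\Ot(n)$-time algorithm for $v \mapsto \tilde U^T v$, producing $\alpha$ explicitly in $\Ot(n^2)$. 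Each row of $\alpha C$ is then a length-$n$ convolution with $(s_0, \ldots, s_{n-1})$, computable in $\Ot(n)$ by FFT, so $K' = E + \alpha C$ is assembled explicitly in $\Ot(n^2)$.

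To produce $Q' = (K')^{-1}$ explicitly, I employ the Krylov-Hankel duality. Applying the symmetric construction to $(A')^T = A^T + b a^T$ with the given FNF of $A^T$ yields in $\Ot(n^2)$ the Krylov matrix $L = [e_1, (A')^T e_1, \ldots, ((A')^T)^{n-1} e_1]$. The product $H := L^T K'$ is then Hankel with entries $H_{ij} = e_1^T (A')^{i+j-2} e_1 = ((A')^{i+j-2})_{1,1}$, which are recovered in $\Ot(n^2)$ by the same expansion driven by the FNF of $A$. With high probability the random input makes $H$ invertible; inverting a Hankel matrix yields a Bezoutian admitting $\Ot(n)$-time matrix-vector products, so $Q' = H^{-1} L^T$ is formed in $\Ot(n^2)$. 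The FNF of $(A')^T$ is computed by the same pipeline applied symmetrically.
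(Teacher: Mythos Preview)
Your overall architecture---compute the Krylov iterates $(A')^k u$ via the expansion $(A')^k u = A^k u + \sum_{l<k} (b^T X_l)\,A^{k-1-l} a$, then recover the inverse transform from the Hankel matrix $L^T K'$---is exactly the paper's route (Section~\ref{s:rank1} together with Lemma~\ref{l:eberly}). There is, however, one genuine gap.

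You take the seed vector to be $e_1$ and assert that ``$e_1$ is cyclic for $A'$ with high probability under the genericity assumption.'' In this paper, \emph{generic} is the deterministic condition $p_{A'}=\mu_{A'}$; it carries no randomness, and a fixed vector need not be cyclic for a generic matrix (e.g.\ $A'=\operatorname{diag}(1,2)$ is generic yet $e_1$ is an eigenvector). Your algorithm, as written, has no internal random choice to support a ``with high probability'' claim, and the later line ``the random input makes $H$ invertible'' is similarly unsupported since $a,b$ are adversarial inputs. The paper handles this by picking $u,v\in\field^{n\times 1}$ uniformly at random and invoking Theorem~\ref{t:random-vector}: a random vector is cyclic for a generic matrix with probability $\Omega(1/\log n)$, so $O(\operatorname{polylog} n)$ trials suffice (Lemma~\ref{l:eberly} detects failure). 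If you replace $e_1$ by such random $u$ (and $v$ on the transpose side), your argument survives essentially unchanged.

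Two smaller remarks once that is fixed. First, the Tellegen detour for $\alpha=[a,Aa,\ldots,A^{n-1}a]$ is correct but unnecessary: Lemma~\ref{l:powers-vector}(2) already gives all iterates $A^i a$ in $\Ot(n^2)$ from the FNF of $A$, which is how the paper obtains both $E$ and $\alpha$. Second, computing $p'$ separately via the matrix-determinant lemma is redundant; the paper reads off $p'$ as the last column of $(K')^{-1}A'K'$ once $K'$ and its inverse are available (final step of Lemma~\ref{l:eberly}). Your factorization $K'=E+\alpha C$ with $C$ upper-triangular Toeplitz is an alternative packaging of the same recurrence the paper evaluates by divide-and-conquer FFT: you isolate the scalar sequence $s_k$ first and then convolve row-by-row, whereas the paper computes the vectors $X_k$ directly. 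Both cost $\Ot(n^2)$.
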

Here, the assumption that an FNF of the transpose is also given is without much loss of generality. Indeed, in a typical scenario, some FNF of $A$ is initialized in, say, $\Ot(n^\omega)$ time before the first application of Theorem~\ref{t:rank1}.
If we additionally compute an FNF of $A^T$ at that point within the same asymptotic bound, every subsequent application of Theorem~\ref{t:rank1} updates both FNFs.

The crucial advantage of Theorem~\ref{t:rank1} compared to the dynamic algorithm of~\cite{FrandsenS11} is that the FNFs -- including the (inverse) similarity transforms -- are updated explicitly.
This property is essential if we want to use Theorem~\ref{t:rank1} in combination with the data structure of Theorem~\ref{t:frobenius-powers} which requires an explicit FNF of the input matrix.

\paragraph{Applications to distance oracles.}
As an application of the developed tools for generic matrices with a Frobenius normal form, we show improved algebraic distance sensitivity oracles and fully dynamic
distance oracles for \emph{unweighted} dense directed graphs.

First of all, we show that for the 1-DSO problem, one can essentially match the static APSP bound
of~\cite{Zwick02} that is conjectured to be near-optimal~\cite{ChanWX21, LincolnPW20}.

\begin{restatable}{theorem}{ouronedso}\label{t:our-1-dso}
    Let $G$ be an unweighted digraph. In $\Ot(n^{2+\rho})=O(n^{2.529})$ time one can construct
    a distance sensitivity oracle for $G$ handling single-edge/vertex failures with $O(1)$ query time and $\Ot(n^2)$ space.
    The data structure is Monte Carlo randomized and the produced answers are correct with high probability\footnote{That is, with probability at least $1-1/n^c$, where
    the constant $c\geq 1$ can be set arbitrarily. We will also use the standard abbreviation w.h.p.}.
  \end{restatable}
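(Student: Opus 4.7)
The plan is to combine the algebraic framework supplied by Theorem~\ref{t:frobenius-powers} with Zwick's APSP recursion~\cite{Zwick02} and standard packaging techniques from the 1-DSO literature.

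\emph{Algebraic setup.} Fix a finite field $\field$ of size $q=\poly(n)$ and sample $A\in\field^{n\times n}$ by setting $A_{ij}$ to a uniformly random nonzero element of $\field$ whenever $(i,j)\in E(G)$ and to $0$ otherwise. With high probability (i) $A$ has a single invariant factor~\cite{NeumannP95}, and (ii) by Schwartz--Zippel, for every $s,t\in V$ and $\ell\in[0,n]$ one has $(A^\ell)_{s,t}\neq 0$ iff $G$ contains an $s,t$-walk of length~$\ell$; consequently $\dist_G(s,t)=\min\{\ell:(A^\ell)_{s,t}\neq 0\}$. An edge failure $(x,y)$ corresponds to the rank-$1$ modification $A'=A-A_{xy}\,e_x e_y^\top$, for which the same characterization gives $\dist_{G-(x,y)}(s,t)$; a vertex failure is a rank-at-most-$2$ modification handled analogously. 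As initial preprocessing I would compute Frobenius normal forms of $A$ and $A^\top$ in $\Ot(n^\omega)$ time and initialize the data structure of Theorem~\ref{t:frobenius-powers}.

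\emph{Distance matrix and replacement distances.} Compute the all-pairs distance matrix of $G$ in $\Ot(n^{2+\rho})$ time; I expect the FNF-based submatrix-power queries of Theorem~\ref{t:frobenius-powers} to accelerate, or at least match, the rectangular min-plus products Zwick uses at each length scale. For a failure $e=(x,y)$, expanding $(A-A_{xy}e_xe_y^\top)^\ell$ (equivalently, applying Sherman--Morrison to $(I-xA')^{-1}$) writes $((A')^\ell)_{s,t}$ as an explicit polynomial in the scalar $A_{xy}$ whose coefficients are convolutions of the four power sequences $\{(A^i)_{s,t}\}$, $\{(A^i)_{s,x}\}$, $\{(A^i)_{y,x}\}$, $\{(A^i)_{y,t}\}$. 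Theorem~\ref{t:frobenius-powers} yields each such single-entry sequence up to index~$n$ in $\Ot(n)$ time, and, more importantly, delivers batches of them in rectangular-matrix-multiplication time.

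\emph{Using criticality.} An edge $e$ can change $\dist_G(s,t)$ only if it is \emph{critical} for $(s,t)$, i.e.\ lies on every shortest $s,t$-path. By the Schwartz--Zippel argument, criticality is equivalent to $((A')^{\dist_G(s,t)})_{s,t}=0$, and then $\dist_{G-e}(s,t)$ is the smallest $\ell>\dist_G(s,t)$ at which the same quantity becomes nonzero. For a fixed pair $(s,t)$ the critical edges lie along a canonical shortest $s,t$-path in a prescribed order, so their replacement distances are monotone along that path and can be packaged into $\Ot(n^2)$ space with $O(1)$-time queries in the style of~\cite{BernsteinK09, GV20, Ren22}: the base distance matrix is stored verbatim, and for each pair a compact structure indexes the replacement distances by position on the canonical path; a query $(s,t,e)$ first checks criticality via a hash lookup and either returns $\dist_G(s,t)$ or the stored replacement distance. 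Vertex failures are treated by the same scheme applied to the rank-$2$ Sherman--Morrison identity.

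\emph{Main obstacle.} The hardest part is ensuring that the \emph{total} cost of computing replacement distances for all critical triples fits within $\Ot(n^{2+\rho})$, since a priori there can be up to $\Theta(n^2\cdot\dist_G)$ such triples. The intended remedy is to batch the algebraic queries across pairs and length scales, interleaving the sensitivity computation with Zwick's recursion so that at each scale $h=n^\alpha$ large groups of power-entry queries are served jointly by Theorem~\ref{t:frobenius-powers} in $\Ot(n^{\omega(s,1-\alpha,t)+\alpha})$ time; choosing the exponents to match Zwick's trade-off $\omega(1,\rho,1)=1+2\rho$ should yield the claimed $\Ot(n^{2+\rho})$ preprocessing bound.
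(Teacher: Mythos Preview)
Your proposal has two genuine gaps.

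\textbf{Genericity.} You justify that $A$ has a single invariant factor by citing~\cite{NeumannP95}, but that result concerns \emph{uniformly random} matrices in $\field^{n\times n}$; your $A$ has a prescribed zero pattern dictated by $E(G)$, and for many graphs (e.g.\ graphs with an isolated vertex, or more subtly any graph whose adjacency matrix has repeated eigenvalues forced by symmetry) no nonzero assignment to the free entries yields a single invariant factor. The paper addresses exactly this issue in Theorem~\ref{t:graph-matrix}: it puts random values on the diagonal as well (so all leading principal minors are nonsingular w.h.p.) and then multiplies by a random diagonal matrix, invoking Wiedemann's discriminant lemma to force distinct eigenvalues. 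With the diagonal present, the interpretation of $(A^\ell)_{s,t}\neq 0$ becomes ``there is a walk of length \emph{at most} $\ell$'', not ``exactly $\ell$''.

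\textbf{The reduction to $O(1)$ queries.} You correctly identify this as the main obstacle, but your sketch does not resolve it. The claim that replacement distances are monotone along a canonical shortest $s,t$-path is false in general, so the proposed $\Ot(n^2)$ packaging does not work as stated. More importantly, the vague plan to ``interleave the sensitivity computation with Zwick's recursion'' so that everything balances to $\Ot(n^{2+\rho})$ is precisely the hard part of~\cite{GuR21,Ren22}; it is not something that falls out of batching power-entry queries.

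The paper avoids all of this by treating the reduction as a black box. Theorem~\ref{t:truncated-reduction} (from~\cite{GuR21}) says: given an $h$-truncated 1-DSO with preprocessing $P(n)$ and query $Q(n)$, one obtains a full 1-DSO with $O(1)$ query in time $\Ot(P(n)+n^{2+\rho}+n^2 Q(n)+n^3/h)$. The paper's only new contribution is an $h$-truncated 1-DSO (Lemma~\ref{l:our-truncated}) with $P(n)=\Ot(n^{\omega(1,1-\alpha,1)+\alpha})$ and $Q(n)=\Ot(h)$, obtained by computing all of $A^1,\dots,A^h$ via Theorem~\ref{t:frobenius-powers} and then answering each $(s,t,f)$ query by a single-element Sherman--Morrison update (Lemma~\ref{l:matrix-inverse}) in $\Ot(h)$ time. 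Plugging in $h=\Theta(n^{1-\rho})$ gives the result immediately. You should use this reduction rather than try to rebuild it.
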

\cite{GuR21} asked whether a 1-DSO with preprocessing time $\Ot(Wn^{2+\rho})$ is possible for graphs with weights $\{1,\ldots,W\}$. Theorem~\ref{t:our-1-dso} yields
an affirmative answer to this problem in the case $W=1$.

For distance oracles handling many failures, we show:

\begin{restatable}{theorem}{ourmanydso}\label{t:ourmanydso}
  Let $G$ be an unweighted digraph. There exists a distance sensitivity oracle with $\Ot(n^\omega)$ preprocessing and $O(n^2)$ space such that for any set $F$ of $f$ edge or vertex failures,
    the data structure can be updated in $\Ot(nf^{\omega-1})$ time
    to support distance queries with failures $F$ in $\Ot(nf)$ time.
    The data structure is Monte Carlo randomized and the produced answers are correct
    w.h.p.
  \end{restatable}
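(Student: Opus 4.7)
My plan is to reduce distance queries under $f$ failures to a Sherman--Morrison--Woodbury correction of the polynomial matrix $M(x)=I-xA$, extracting all power-series coefficients through Theorem~\ref{t:frobenius-powers}. Following the standard algebraic embedding, I would replace the $0/1$ adjacency of $G$ by $A\in\field^{n\times n}$ whose entries at the edges of $G$ are i.i.d.\ uniform samples from a sufficiently large finite field; with high probability $A$ is generic, and by a Schwartz--Zippel argument $(A^k)_{s,t}\neq 0$ iff $G$ has an $s$-$t$ walk of length exactly $k$, so $\mathrm{dist}_G(s,t)$ equals the smallest such $k$. Preprocessing then computes FNFs of both $A$ and $A^T$ in $\Ot(n^\omega)$ time and invokes the $\Ot(n^2)$-time preprocessing of Theorem~\ref{t:frobenius-powers} on each.

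For a failure set $F$ with $|F|=f$, the adjacency of $G\setminus F$ can be written as $A'=A-UV^T$ for some $U,V\in\field^{n\times O(f)}$ whose columns are indicator vectors of $F$ together with the $F$-rows/columns of $A$; this uniformly captures both edge and vertex failures. The Woodbury identity then gives
\[
  (I-xA')^{-1}\;=\;M(x)^{-1}\;-\;x\,M(x)^{-1}U\cdot P(x)^{-1}\cdot V^TM(x)^{-1}\pmod{x^{n+1}},
\]
where $P(x):=I+xV^TM(x)^{-1}U$ is an $f\times f$ polynomial matrix. The $x^k$-coefficient of $(I-xA')^{-1}_{s,t}$ equals $((A')^k)_{s,t}$, whose vanishing pattern (again w.h.p.\ by Schwartz--Zippel) reveals $\mathrm{dist}_{G\setminus F}(s,t)$ as the smallest degree at which this coefficient is nonzero.

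The update would precompute a compact representation of $P(x)^{-1}$ mod $x^{n+1}$. Using the Frobenius factorization $A=Q^{-1}C_pQ$ one obtains $V^TM(x)^{-1}U=(V^TQ^{-1})(I-xC_p)^{-1}(QU)$: an $f\times f$ rational matrix of rank $\le f$ with common denominator $\det(I-xC_p)$ of degree $n$. A second Woodbury step re-expresses $P(x)^{-1}$ through $(I-x\tilde C)^{-1}$ for $\tilde C=C_p-(QU)(V^TQ^{-1})$, a rank-$f$ perturbation of the companion matrix, and inverting it in a blocked Newton-style scheme should hit $\Ot(nf^{\omega-1})$ via $\tilde O(n/f)$ rounds whose inner primitive is an $f\times f$ scalar multiplication. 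For a query $(s,t)$, Theorem~\ref{t:frobenius-powers} then delivers the scalar polynomial $M(x)^{-1}_{s,t}$ in $\Ot(n)$ time ($|S|=|T|=1$, $\alpha=1$), the length-$f$ polynomial row $(M(x)^{-1}U)_{s,\star}$ in $\Ot(nf)$ time ($|S|=1$, $|T|=f$, $\alpha=1$), and symmetrically $(V^TM(x)^{-1})_{\star,t}$ via the FNF of $A^T$; plugging these into the Woodbury identity through the precomputed representation of $P(x)^{-1}$ yields $(I-xA')^{-1}_{s,t}$ mod $x^{n+1}$ in $\Ot(nf)$ further time, and the distance is read off as the lowest nonzero degree.

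The main obstacle I anticipate is delivering the update in $\Ot(nf^{\omega-1})$. Writing $V^TM(x)^{-1}U$ out as an explicit $f\times f$ polynomial matrix of degree $n$ already has description size $\Theta(nf^2)$, and off-the-shelf polynomial matrix inversion à la Storjohann costs $\Ot(nf^\omega)$---a factor of $f$ too much. Closing this gap requires keeping the degree-$n$ middle factor $(I-xC_p)^{-1}$ implicit throughout the inversion and arranging the computation so that each step reduces to an $f\times f$ scalar product amortized over $n/f$ blocks; guaranteeing that the modified companion $\tilde C$ remains amenable to the Theorem~\ref{t:frobenius-powers}-style queries (and retains genericity under the rank-$f$ perturbation) is where I expect the bulk of the technical work to lie.
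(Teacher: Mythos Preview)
Your approach is missing the key idea that makes the target bounds achievable: the \emph{hitting-set trick}. You work modulo $x^{n+1}$, i.e., with all $n$ powers of $A$, and this is precisely the source of the factor-$f$ overhead you identify. The paper instead sets $h=\lceil n/f\rceil$, samples a random hitting set $H\subseteq V$ of size $\Ot(f)$, and only ever works modulo $x^{h+1}$. During the update, Theorem~\ref{t:frobenius-powers} extracts the $(S\cup H)\times(S\cup H)$ submatrices of $A^1,\ldots,A^h$ (where $S$ collects the failure endpoints) in $\Ot(n^{\omega(\beta,\beta,\beta)+(1-\beta)})=\Ot(nf^{\omega-1})$ time, since $|S\cup H|=\Ot(f)=\Ot(n^\beta)$ and $h=\Theta(n^{1-\beta})$; the Woodbury middle matrix is then $f\times f$ with entries of degree only $h=n/f$, so inverting it (Lemma~\ref{l:matrix-inverse}) costs $\Ot(f^\omega\cdot h)=\Ot(nf^{\omega-1})$. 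At query time, the Woodbury correction for the pair $(s,t)$ costs $\Ot(f^2\cdot h)=\Ot(nf)$, yielding the $h$-bounded distances between $\{s,t\}$ and $H$; these are fed into Dijkstra on an $\Ot(f)$-vertex auxiliary graph $G_{H,s,t}$ to recover the full distance.

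Your speculative fix---keeping the degree-$n$ factor implicit via a second Woodbury step on a rank-$f$ perturbation of the companion matrix---would have to beat two separate $f$-factor barriers. Besides the update-time inversion of $P(x)$ that you flag, the query-time product $(M(x)^{-1}U)_{s,\star}\cdot P(x)^{-1}\cdot(V^TM(x)^{-1})_{\star,t}$ is a $1\times f$ times $f\times f$ times $f\times 1$ product over degree-$n$ polynomials, hence $\Ot(nf^2)$ even if $P(x)^{-1}$ were handed to you for free. There is no evident way to amortize this per-query cost down to $\Ot(nf)$ without shrinking the polynomial degree, and shrinking the degree is exactly what the hitting set accomplishes.

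A secondary issue: independent random entries on the edges alone do not guarantee that $A$ is generic for an arbitrary digraph (e.g., a graph with an isolated vertex gives $\mu_A\mid t$ while $p_A=t^n$). The paper additionally places random nonzeros on the diagonal and right-multiplies by a random diagonal matrix (Theorem~\ref{t:graph-matrix}, via Wiedemann's lemma); the self-loops also conveniently turn ``walk of length exactly $k$'' into ``path of length at most $k$''.
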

For unweighted digraphs, the data structure of Theorem~\ref{t:ourmanydso} improves upon the state-of-the-art~\cite{BrandS19} in terms of update
and query time even if~\cite{BrandS19} uses cubic preprocessing (i.e., if one sets $\mu=1$).
In particular, if the failures are part of the query, then we can handle
a distance query under up to $n^{1/(\omega-1)-\eps}\approx n^{0.72}$ failures
in subquadratic time.
Moreover, in the case of $f=\Ot(1)$ failures, our data structure has preprocessing and query
time matching the respective time characteristics of the state-of-the-art (failure-free) distance
oracle of~\cite{YusterZ05}.

That being said, the distance oracles of Theorems~\ref{t:our-1-dso}~and~\ref{t:ourmanydso} have some evident drawbacks compared to
the respective results of~\cite{GuR21, BrandS19}. Our data structure can be generalized to handle
small positive weights $[1, W]$ at the cost of introducing a multiplicative factor
polynomial in~$W$ -- by replacing $n$ with $nW$ in the preprocessing and query bounds.
That is, the respective preprocessing times in Theorems~\ref{t:our-1-dso}~and~\ref{t:ourmanydso} for weighted graphs
should be replaced with $\Ot((Wn)^{2.529})$ and $\Ot((Wn)^\omega)$, respectively.
On the other hand, the previously known data structures achieve a linear dependence on $W$.
Additionally, our data structure of Theorem~\ref{t:ourmanydso} does not seem to generalize
to negative edge weights, whereas that of~\cite{BrandS19} does.

Let us now move to our results in the dynamic scenario.
First, we obtain improved bounds for fully dynamic distance oracles supporting single-edge updates
in unweighted digraphs.

\begin{restatable}{theorem}{tweaked}\label{t:tweaked}
    Let $G$ be an unweighted digraph. There exists a Monte Carlo randomized data structure maintaining $G$ under
    single-edge insertions and deletions and supporting $s,t$-distance queries with
  $O(n^{1.673})$ worst-case update and query time. The answers produced are correct w.h.p.
  \end{restatable}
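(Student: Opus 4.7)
The plan is to combine the Frobenius-form toolbox with the standard periodic-rebuilding paradigm. First, I would represent the digraph by a generic matrix $A\in\field^{n\times n}$ over a sufficiently large finite field: each present edge $(i,j)$ gets a fresh random nonzero entry and each absent edge gets $0$. A routine polynomial-identity-testing argument then gives $d(s,t)=\min\{k\ge 1:(A^k)_{s,t}\ne 0\}$ with high probability over all queries, so a distance query reduces to detecting the first nonzero among $(A^k)_{s,t}$. A single-edge update corresponds to modifying one entry of $A$ and is therefore a rank-$1$ perturbation.

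The data structure maintains explicit FNFs of a ``base'' matrix $A_0$ and of $A_0^T$, rebuilt from scratch every $r$ updates in $\Ot(n^\omega)$ time. This contributes $\Ot(n^\omega/r)$ amortized time per update, which I would convert to a worst-case bound by the standard trick of running the next rebuild in the background and spreading its cost over the intervening $r$ updates. Between rebuilds, the at most $r$ pending single-entry modifications are stored as a rank-$\leq r$ perturbation $UV^T$ with $U,V\in\field^{n\times r}$, so the current adjacency matrix equals $A_0+UV^T$; the columns of $U$ and $V$ are (scaled) standard basis vectors supported on the set $P$ of endpoints of pending edges, so $|P|\le 2r$.

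To answer a query $(s,t)$, I would compute the formal power series $\sum_{k\ge 0}((A_0+UV^T)^k)_{s,t}\,z^k$ modulo $z^{h+1}$ for an appropriate threshold $h$ using the Sherman-Morrison-Woodbury identity applied to $(I-z(A_0+UV^T))^{-1}$. This reduces the task to fetching the $(\{s\}\cup P)\times(\{t\}\cup P)$ submatrices of $A_0^1,\dots,A_0^h$ via Theorem~\ref{t:frobenius-powers} and performing polynomial-matrix arithmetic on $r\times r$ matrices modulo $z^{h+1}$; the smallest $k$ at which the resulting coefficient is nonzero equals $d(s,t)$ whenever $d(s,t)\le h$. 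For $k>h$ I would use a random hitting set $B\subseteq V$ of size $\Ot(n/h)$, which w.h.p.\ contains an intermediate node on the shortest $s\to t$ path, and apply the same short-distance primitive to the $(B\cup\{s,t\}\cup P)\times(B\cup\{s,t\}\cup P)$ submatrices, followed by Dijkstra on the resulting small auxiliary graph.

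The main obstacle will be the multi-parameter optimization. Setting $r=n^\beta$ and $h=n^\alpha$, the update cost $\Ot(n^\omega/r)$, the short-distance query cost (driven by $\Ot(n^{\omega(\beta,1-\alpha,\beta)+\alpha})$ plus polynomial-matrix operations of total cost roughly $\Ot(r^\omega h)$), and the long-distance cost $\Ot(n^{\omega(1-\alpha,1-\alpha,1-\alpha)+\alpha})$ must all simultaneously be tuned to the target exponent $1.673$ under the current rectangular matrix-multiplication bounds. A secondary subtlety is preserving the genericity of the base matrix across rebuilds, which follows from the fresh random sampling of its nonzero entries and the high-probability guarantee of Theorem~\ref{t:rank1}.
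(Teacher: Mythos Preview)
Your scheme is sound but it is the approach the paper uses for Theorem~\ref{t:dyndist}, not for Theorem~\ref{t:tweaked}, and with only the two parameters $r=n^\beta$ and $h=n^\alpha$ it cannot reach the exponent $1.673$. The long-distance part forces you to know the $h$-bounded distances between all $\Ot(n^{1-\alpha})$ hitting-set vertices in the \emph{current} graph, so at each query (or update, if you maintain them) you must apply Sherman--Morrison--Woodbury to an $|B|\times |B|$ block; this already costs at least $\Ot(|B|^2\cdot h)=\Ot(n^{2-\alpha})$. Balancing against the amortized rebuild cost $\Ot(n^{\omega-\beta})$, under the natural constraint $\beta\le 1-\alpha$ (so that the pending set is no larger than the hitting set), one is pushed to $\alpha+\beta=1$ and hence to $\Ot(n^{(\omega+1)/2})=O(n^{1.687})$. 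If instead $\beta>1-\alpha$, the SMW cost picks up a factor $r$ and exceeds $n^{1+\beta}$, which is even worse. Your listed term $\Ot(r^\omega h)$ is likewise too heavy if $r$ is large enough to make $n^{\omega}/r$ small.

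The paper obtains $O(n^{1.673})$ by a different route: it takes the \emph{two-level} dynamic matrix inverse of~\cite{BrandFN22} as a black box, whose update bound is $\Ot\bigl((n^{\omega(1,\mu,1)-\mu}+n^{\omega(1,\nu,\mu)-\nu}+n^{\mu+\nu})\cdot h\bigr)$ for nested phase parameters $\mu>\nu$, and tweaks only the outermost step. In~\cite{BrandFN22}, the full inverse $(I-XA)^{-1}\bmod X^{h+1}$ is recomputed from the previous explicit inverse every $n^\mu$ updates at cost $\Ot(n^{\omega(1,\mu,1)}\cdot h)$; the paper replaces this by a from-scratch recomputation via Theorem~\ref{t:frobenius-powers} in $\Ot(n^{\omega(1,1-\alpha,1)+\alpha})$ time, turning the first term into $n^{\omega(1,1-\alpha,1)-\mu+\alpha}$. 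The inner parameter $\nu$ is what lets the remaining terms be balanced at $\mu=0.793$, $\nu=0.552$, $\alpha=0.328$. Your one-level rebuild has no analogue of $\nu$, and that missing degree of freedom is exactly what separates $1.687$ from $1.673$.
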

Theorem~\ref{t:tweaked} is obtained via a small tweak to the data structure of~\cite{BrandFN22} using Theorem~\ref{t:frobenius-powers}. Interestingly, if
$\omega=2$, the update/query bounds of both data structures (Theorem~\ref{t:tweaked} and that of~\cite{BrandFN22})
simplify to an odd-looking bound of $\Ot(n^{1+5/8})=\Ot(n^{1.625})$.

One component of~\cite{BrandFN22} is periodically recomputing bounded-hop all-pairs distances.
Using a different approach avoiding this entirely, we obtain another dynamic distance oracle.
\begin{restatable}{theorem}{dyndist}\label{t:dyndist}
  Let $G$ be an unweighted digraph. There exists a Monte Carlo randomized data structure maintaining $G$ under
  single-edge insertions and deletions and supporting $s,t$-distance queries with
  $\Ot\left(n^{\frac{\omega+1}{2}}\right)$ worst-case update and query time. The answers produced are correct w.h.p.
\end{restatable}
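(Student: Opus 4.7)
The plan is to represent $G$ by a generic matrix $A\in\field^{n\times n}$ obtained by replacing the indicator of each edge of $G$ with an independent uniformly random element of a sufficiently large finite field $\field$. By Schwartz--Zippel, with high probability $A$ has a single invariant factor, and $\delta_G(s,t)$ equals the smallest $k\in[1,n]$ for which $(A^k)_{s,t}\neq 0$. Each single-edge update is a rank-1 perturbation $A\mapsto A+\gamma e_u e_v^T$ whose factor vectors are scaled standard basis vectors, so after at most $b$ updates since the last rebuild the current matrix admits a decomposition $A=A_0+UV^T$ with $U,V\in\field^{n\times b}$ whose columns are ``sparse.''

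Set $b=\lceil n^{(\omega-1)/2}\rceil$. The plan is to maintain the Frobenius normal forms of a base $A_0$ (equal to $A$ at the start of the current epoch) and of $A_0^T$, refreshing them every $b$ updates by recomputing the FNFs from scratch in $\Ot(n^\omega)$ time. A standard two-copy staggering deamortizes this, so each update contributes only $\Ot(n^\omega/b)=\Ot(n^{(\omega+1)/2})$ toward the next rebuild. In parallel I would maintain, incrementally, the $b\times b$ polynomial matrix $G(z)$ with $G(z)_{l,l'}=\alpha_{l'}\sum_{k<n}(A_0^k)_{j_l,i_{l'}}z^k$: each arriving rank-1 update enlarges $G(z)$ by one row and one column, whose $b$ polynomial entries form a $\{j_{\mathrm{new}}\}\times I$ resp.\ $J\times\{i_{\mathrm{new}}\}$ submatrix of the first $n$ powers of $A_0$. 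Theorem~\ref{t:frobenius-powers} (applied with $|S|=b$, $|T|=1$, $h=n$ and its transpose) supplies these in $\Ot(nb)=\Ot(n^{(\omega+1)/2})$ time per update, matching the rebuild budget.

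At query time $(s,t)$ I plug $A=A_0+UV^T$ into the Neumann series for $(I-zA)^{-1}$ and apply the Woodbury identity to obtain
\[
\bigl((I-zA)^{-1}\bigr)_{s,t}\equiv p(z)+z\,\vec u(z)^T\bigl(I_b-zG(z)\bigr)^{-1}\vec v(z)\pmod{z^{n+1}},
\]
where $p(z)=\sum_{k<n}(A_0^k)_{s,t}z^k$ is produced by Theorem~\ref{t:frobenius-powers} in $\Ot(n)$ time and the length-$b$ polynomial vectors $\vec u(z)_l=\alpha_l\sum_{k<n}(A_0^k)_{s,i_l}z^k$, $\vec v(z)_l=\sum_{k<n}(A_0^k)_{j_l,t}z^k$ are assembled from the row-$s$ and column-$t$ slices of the powers of $A_0$, each in $\Ot(nb)=\Ot(n^{(\omega+1)/2})$ time. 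The distance $\delta_G(s,t)$ is then the smallest exponent of $z$ at which the resulting truncated power series is nonzero.

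The hardest step will be solving the polynomial system $(I_b-zG(z))\,\vec x(z)\equiv\vec v(z)\pmod{z^{n+1}}$ within the $\Ot(n^{(\omega+1)/2})$ budget: a generic Newton inversion already costs $\Ot(b^\omega n)=\Ot(n^{1+\omega(\omega-1)/2})$, and even maintaining $(I_b-zG(z))^{-1}$ across the epoch by polynomial Schur-complement updates naively spends $\Ot(b^2 n)=\Ot(n^\omega)$ per arriving rank-1 growth of $G(z)$ — both exceed the target. My plan to overcome this is to exploit the fact that every polynomial entry of $G(z)$ originates from the same companion matrix $F$ inside $A_0$'s FNF, so that all the polynomials involved live modulo the characteristic polynomial $p_{A_0}(z)$ of degree $n$; this common structure, together with a careful batching of the $b$ Schur-complement updates into the epoch, should allow both the inverse maintenance and the query dot product $\vec u(z)^T\bigl((I_b-zG(z))^{-1}\vec v(z)\bigr)$ to be carried out within $\Ot(n^{(\omega+1)/2})$. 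Correctness with high probability then follows from a Schwartz--Zippel argument over a field of size $\poly(n)$, under which the $z$-valuation of the Woodbury expression agrees with the true distance $\delta_G(s,t)$.
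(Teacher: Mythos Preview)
There is a genuine gap in your proposal, and you have essentially identified it yourself: the step of solving $(I_b-zG(z))\vec x(z)\equiv\vec v(z)\pmod{z^{n+1}}$ (or, equivalently, evaluating $\vec u(z)^T(I_b-zG(z))^{-1}\vec v(z)$) within the $\Ot(n^{(\omega+1)/2})$ budget. With $b=n^{(\omega-1)/2}$ and entries of degree $n$, even a single polynomial matrix--vector product already costs $\Ot(b^2 n)=\Ot(n^{\omega})$, so both the incremental maintenance of the inverse across the epoch and the query-time evaluation blow the budget. Your suggested fix---that all polynomial entries ``live modulo $p_{A_0}(z)$''---does not help: reducing modulo a degree-$n$ polynomial still leaves you with $n$ coefficients per entry, and there is no known structure of companion matrices that would let you multiply a $b\times b$ matrix of such polynomials by a vector in $o(b^2 n)$ field operations. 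The ``careful batching of the $b$ Schur-complement updates'' is likewise not a concrete algorithm; batching $b$ rank-$1$ growths into one rank-$b$ update just reproduces the $\Ot(b^\omega n)$ Newton cost you already rejected.

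The paper avoids this bottleneck entirely by \emph{not} working modulo $z^{n+1}$. It combines the Woodbury update with the standard hitting-set trick: sample $H\subseteq V$ of size $\Ot(n/h)$ and work only with $h$-bounded distances, i.e., with polynomials truncated modulo $z^{h+1}$ for $h=n^{(3-\omega)/2}$. Then the analogue of your $G(z)$ is an $f\times f$ polynomial matrix with $f=O(n^{1-\alpha})$ and entries of degree $h$, so the Schur/Woodbury updates cost $\Ot(f^2 h)=\Ot(n^{2-\alpha})=\Ot(n^{(\omega+1)/2})$ per update, within budget. Long distances (above $h$) are recovered at query time by building a weighted auxiliary graph on $H\cup\{s,t\}$ with $h$-bounded distances as edge weights and running Dijkstra in $\Ot((n/h)^2)$ time. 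The phase length $n^{1-\alpha}=n^{(\omega-1)/2}$ coincides with your $b$, so your rebuild accounting is fine; what is missing is precisely the reduction of the polynomial degree from $n$ down to $h$.
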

The update/query bound of $\Ot\left(n^{(\omega+1)/2}\right)=O(n^{1.687})$ is currently inferior to that of Theorem~\ref{t:tweaked} but nevertheless superior
to the state-of-the-art bound $O(n^{1.703})$~\cite{BrandFN22}. However, the data structure of Theorem~\ref{t:dyndist}
might be considered more promising: if $\omega=2$, its update bound simplifies to a natural $\Ot(n^{1.5})$ bound, and
even if $\omega<2.25$, the $\Ot(n^{(\omega+1)/2})$ bound is better than the theoretical limit of the approach of~\cite{BrandFN22}.

Finally, we achieve a very natural tradeoff in the more general case of vertex updates.
\begin{restatable}{theorem}{vertexupd}\label{t:vertexupd}
  Let $G$ be an unweighted digraph. There exists a Monte Carlo randomized data structure maintaining $G$ under fully
  dynamic vertex updates in $\Ot(n^2)$ worst-case time per update and supporting
  arbitrary pair distance queries in $\Ot(n)$ time. The answers are correct w.h.p.
\end{restatable}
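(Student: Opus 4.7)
The plan is to maintain an $n\times n$ matrix $A\in\field^{n\times n}$ encoding the current unweighted digraph $G$ over a finite field $\field$ of size $q=n^{\Theta(1)}$, following the paper's random encoding of digraphs by generic matrices: assign a uniformly random label in $\field^{*}$ to each edge of $G$ and $0$ elsewhere (with whatever genericity-preserving gadget the paper uses when the raw random encoding is insufficient). With high probability over the labels, $A$ is generic (single invariant factor) and, by a Schwartz--Zippel argument applied to the monomial expansion of matrix powers, $(A^k)_{s,t}\neq 0$ iff there is a walk of length exactly $k$ from $s$ to $t$. In particular, for $s,t\in[n]$,
\[\delta(s,t)=\min\{k\in[1,n] : (A^k)_{s,t}\neq 0\},\]
interpreted as $+\infty$ when no such $k$ exists.

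Initialization would compute Frobenius normal forms of $A$ and $A^T$ in $\Ot(n^\omega)$ time via a standard algorithm and then build the data structure of Theorem~\ref{t:frobenius-powers} in $\Ot(n^2)$ additional time. For a vertex update at $v$, we freshly sample $\field^{*}$-labels for each edge incident to $v$ in the updated graph, obtaining the desired new row $r^T$ and column $c$ for the $v$-th row and column of $A$. We write the change as two rank-$1$ perturbations
\[A \;\longrightarrow\; A + e_v b^T \;\longrightarrow\; A + e_v b^T + a\,e_v^T,\]
where $b = r - A^T e_v$ and $a$ is the analogous column-correction vector. We then invoke Theorem~\ref{t:rank1} twice, updating the explicit FNFs of $A$ and $A^T$ in $\Ot(n^2)$ total time, and rebuild the Theorem~\ref{t:frobenius-powers} preprocessing in another $\Ot(n^2)$ time; the overall worst-case update cost is $\Ot(n^2)$. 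A distance query $(s,t)$ invokes Theorem~\ref{t:frobenius-powers} with $S=\{s\}$, $T=\{t\}$, $h=n$; since $|S|=|T|=1$ and $\alpha=1$, the query time is $\Ot(n^{\omega(0,0,0)+1})=\Ot(n)$, returning the full sequence $\bigl((A^k)_{s,t}\bigr)_{k=1}^{n}$, from which $\delta(s,t)$ is read off in an additional $O(n)$ time.

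The main obstacle is verifying the genericity preconditions of Theorem~\ref{t:rank1} for every intermediate matrix produced during the update sequence: each rank-$1$ step requires both the matrix before and after the perturbation to have a single invariant factor. Because we inject fresh $\field^{*}$-randomness into every edge label touched by the update, the event that a given intermediate matrix is non-generic is described by a non-identically-vanishing polynomial condition in these fresh entries, so Schwartz--Zippel yields failure probability $\poly(n)/q$ per rank-$1$ step. Taking $q=n^{\Theta(1)}$ with a sufficiently large constant and union-bounding over all $\poly(n)$ updates, the two rank-$1$ steps per update, the Monte Carlo failure probabilities of Theorems~\ref{t:rank1} and~\ref{t:frobenius-powers}, and the walk-detection failure probability yields overall correctness with high probability; since arithmetic in $\field$ costs $\polylog(n)$, the stated $\Ot(n^2)$ update and $\Ot(n)$ query bounds are preserved.
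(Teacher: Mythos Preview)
Your proposal is correct and takes essentially the same approach as the paper: maintain the weighted adjacency matrix $A$ of Theorem~\ref{t:graph-matrix} together with explicit FNFs of $A$ and $A^T$, realize a vertex update as two rank-1 perturbations handled by Theorem~\ref{t:rank1}, rebuild the $\Ot(n^2)$ auxiliary data, and answer a query $(s,t)$ by reading the sequence $((A^k)_{s,t})_{k=1}^{n}$ in $\Ot(n)$ time. One remark on the point you flag: your intermediate-genericity argument via ``fresh randomness plus Schwartz--Zippel'' needs the non-genericity polynomial to be non-identically-vanishing on the slice where the other entries are fixed, and the cleanest way to see this (implicit in the paper's terse proof) is that the matrix after the first rank-1 step is itself a weighted adjacency matrix, in the sense of Theorem~\ref{t:graph-matrix}, of a valid graph---the hybrid with only the out-edges (or in-edges) of $v$ replaced---so Theorem~\ref{t:graph-matrix} applies to it directly and yields genericity w.h.p.
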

Theorem~\ref{t:vertexupd} shows that one can preserve the linear query of the static distance oracle of~\cite{YusterZ05}
without rebuilding it from scratch. Similarly as in the case of the previously known data structures
supporting vertex updates for transitive closure~\cite{Sankowski04} and APSP, our data structure
does not need fast matrix multiplication for performing updates or queries within the stated bounds.
Another interesting consequence of Theorem~\ref{t:vertexupd} is that we can maintain distances between
$n$ \emph{arbitrary} pairs of vertices in $\Ot(n^2)$ \emph{worst-case} time per update. To the best of our knowledge,
no previous data structure could achieve that for unweighted dense graphs.
\cite{Karczmarz21}~showed that $\Ot(mn^{2/3})$ worst-case update time is possible for sparse weighted digraphs
if the pairs of interest are fixed.

Similarly as for distance oracles, our dynamic data structures generalize to digraphs with
weights $\{1,\ldots,W\}$ at the cost of a $o(W^2)$ factor in the respective bounds.
Again, the previous best bound~\cite{BrandFN22} can be easily lifted to this case
with an overhead linear in $W$.

\newcommand{\xvar}{\tilde{x}}
\newcommand{\yvar}{\tilde{y}}
\newcommand{\xvarset}{\tilde{X}}
\newcommand{\yvarset}{\tilde{Y}}
\newcommand{\asym}{\tilde{A}}

\subsection{Technical overview}

\paragraph{Generic matrices and Frobenius form.}
Let $U\cdot C\cdot U^{-1}$ be a Frobenius form of a generic matrix~$A\in\field^{n\times n}$,
where $C\in\field^{n\times n}$ is the companion matrix of the characteristic polynomial of $A$,
and $U\in\field^{n\times n}$ is a (not necessarily unique) invertible similarity transform.
See Section~\ref{s:frobenius} for more detailed definitions of these notions.
As observed
by~\cite{SankowskiW19}, since $C$ is a companion matrix:
\begin{enumerate}[(1)]
  \item the $n$ matrices $U\cdot C,U\cdot C^2,\ldots, U\cdot C^n$
    can be computed in $\Ot(n^\omega)$ time and stored explicitly (albeit succinctly) in $O(n^2)$ space;
\item
storing these matrices enables computing the values $A_{i,j},(A^2)_{i,j},\ldots,(A^n)_{i,j}$
    for any query pair $(i,j)$ in $\Ot(n)$ time via Hankel matrix-vector
multiplication which in turn is easily reducible to polynomial
    multiplication, i.e., FFT~\cite{fft}. See, e.g.,~\cite{golub2013matrix}.
\end{enumerate}
To obtain Theorem~\ref{t:frobenius-powers}, we first show a more efficient $\Ot(n^2)$-time algorithm for computing the matrices $U\cdot C,\ldots,U\cdot C^n$ by interpreting this problem as generation of multiple terms
of linear recurrences of order $n$ and employing an efficient recent algorithm for this task~\cite{BostanM21}.

Next, we generalize item~(2) above to queries about an arbitrary submatrix of some $h\leq n$ initial powers. While for small values of $h$, say $h=\sqrt{n}$, we cannot evaluate the cells $(i,j)$
of $h$ initial powers faster than in $\Ot(n)$ time,
we observe that considerable computational savings (on average) are possible if a larger submatrix $S\times T$ (potentially the full $n\times n$ submatrix), for $S, T\subseteq [n]$, is queried.
Indeed, in such a case, a careful packing of the matrices $U\cdot C, U\cdot C^2,\ldots, U\cdot C^n$
into two degree-$\Theta(h)$ polynomial matrices of sizes $|S|\times \lceil n/h\rceil$ and $\lceil n/h\rceil\times |T|$ (resp.)  allows us
to benefit from FFT~\cite{fft} and fast rectangular matrix multiplication~\cite{GallU18, HuangP98} at the same time.

\paragraph{Dynamic Frobenius form.}
To obtain the improved dynamic Frobenius form data structure of Theorem~\ref{t:rank1}, more ideas are needed.
Crucially, we use a notion of a \emph{generic vector} wrt. $A$, i.e., a vector $u\in \field^{n\times 1}$ such that the \emph{iterates} $u,Au,\ldots,A^{n-1}u$ are linearly independent. In other words, $u$ is generic wrt. $A$ if the order-$n$ Krylov subspace generated by $A$ and $u$ has dimension $n$.
As shown by~\cite{Keller-Gehrig85}, if $u$ is generic wrt. $A$, the iterates
of $u$ encode a similarity transform~$U$ such that $U\cdot C\cdot U^{-1}$
is a Frobenius form of $A$ (and $C$ is a companion matrix of $p_A$). Using the techniques of~\cite{BrentGY80, eberly00}, one can in fact prove that, given the iterates of a generic vector $u$ wrt. $A$
and iterates of a generic vector $v$ wrt. $A^T$,
all three matrices $U,C,U^{-1}$ comprising the Frobenius form can
be computed in $\Ot(n^2)$ time.
Moreover, that algorithm can be used to detect non-genericity of input vectors: if either $u$ or~$v$ is not generic wrt. the respective matrix, the algorithm fails.
Therefore, if $A$ is subject to a rank-$1$ update
$A':=A+ab^T$ (which keeps $A'$ generic), then in order to compute a Frobenius
form of $A'$ explicitly in $\Ot(n^2)$ time, it is sufficient to compute
the iterates of some generic vectors $u',v'$ wrt. the matrices $A+ab^T$ and $(A+ab^T)^T$ in $\Ot(n^2)$ time.

We show a dynamic programming-based algorithm for this task that works even if $u',v'$ are not generic. Specifically, given FNFs of generic $A$ and $A^T$ and some arbitrary vectors $u',v'\in\field^{n\times 1}$, the algorithm
computes the iterates of $u'$ wrt. $A+ab^T$ and $v'$ wrt. $(A+ab^T)^T$.
One important ingredient here is the preprocessing of Theorem~\ref{t:frobenius-powers} which also allows computing iterates of an arbitrary vector in near-optimal $\Ot(n^2)$ time.
With the respective iterates wrt. $A$ and $A^T$ in hand, the obtained dynamic programming formula for the subsequent iterates wrt. the perturbed matrices
can be efficiently evaluated using a folklore combination of divide-and-conquer and FFT.

Finally, as proved by~\cite{BrentGL03}, generic vectors wrt. a generic $n\times n$ matrix over a finite field can be obtained (w.h.p.) within $\Ot(\polylog{n})$ random samples even for small fields.
Thus, computing the iterates of $O(\polylog{n})$ random vectors wrt. $A+ab^T$ and $(A+ab^T)^T$ and feeding them into the aforementioned procedure based on~\cite{BrentGY80, eberly00} yields an $\Ot(n^2)$-time Las Vegas randomized (w.h.p.) algorithm computing an FNF of a (generic) matrix $A$ after a rank-1 perturbation.

\paragraph{From generic matrices to graphs.} The key technical idea enabling all our developments for fault-tolerant and dynamic distance oracles
is to represent an \emph{arbitrary} directed graph using a \emph{generic matrix}.

Roughly speaking, the state-of-the-art dynamic distance oracle~\cite{BrandFN22} for unweighted graphs (and its predecessors~\cite{BrandNS19, Sankowski05})
rely on \emph{path counting} of sufficiently short paths.
If $A(G)$ is an adjacency graph of $G$, then $(A(G)^k)_{s,t}$
equals the number of distinct $s\to t$ paths consisting of precisely~$k$ edges.
As a result, if $(A(G)^k)_{s,t}\neq 0$,
the distance from $s$ to $t$ is no more than~$k$.
In the other direction, if an $s\to t$ path of length $k$ exists in $G$,
then ${(A(G)^k)_{s,t}\neq 0}$.
Consequently, the matrix powers $A(G),A(G)^2,\ldots,A(G)^h$ encode the \emph{short} distances
between vertices at distance at most $h$.
The short distances (for a sublinear $h=\poly(n)$), combined with standard hitting set arguments~\cite{UY91},
already allow computing an $s,t$ distance in $G$ in subquadratic time.

The challenge is to efficiently compute the first $h$ matrix powers of $A(G)$ 
and maintain them (possibly implicitly) under element updates to $A(G)$.
The simple-minded approach leads to an $\Ot(n^\omega\cdot h^2)$ time for the static computation
since the elements of $A(G)^h$ may use up to $\Ot(h)$ bits.
In the path applications, we are only interested
in whether the entries of the powers are zero or not, so
performing all the counting modulo a sufficiently large
random $\Ot(\polylog{n})$-bit prime still yields high-probability correctness.
Therefore, the powers $A(G),\ldots,A(G)^h$ can be thought to be computable
in $\Ot(n^\omega\cdot h)$ time.
The well-established way to handle updates is to note that these powers are encoded
by the inverse of the polynomial matrix $I-A(G)\cdot X$ in the ring of polynomials
modulo $X^{h+1}$ and apply dynamic matrix inverse data structures~\cite{Sankowski04, BrandNS19, BrandFN22}.
Since the entries of $A(G)^h$ are polynomials of degree at most $h$, the obtained update
bounds are generally factor-$\Ot(h)$ away from the known dynamic matrix inverse bounds~\cite{Sankowski04, BrandNS19}.

As shown by~\cite{Storjohann15, ZhouLS15}, the $\Ot(n^\omega\cdot h)$ bound for computing the first $h$ powers
statically is certainly \emph{not optimal} for large values of $h$:
for $h=n$ the computation can be performed in near-optimal $\Ot(n^3)$ time
precisely via a reduction to polynomial matrix inverse.
In fact, the state of the art DSOs~\cite{BrandS19, GuR21} that we improve upon
rely on the techniques of~\cite{ZhouLS15}.
Specifically, one of the contributions of~\cite{GuR21} is showing that $h=n^\alpha$ first powers
of a matrix can be computed in $\Ot(n^{\omega(1,1-\alpha,1-\alpha)+2\alpha})$ field operations.

To obtain the improved distance oracles, we avoid using the techniques of~\cite{ZhouLS15} and apply our data structure for querying matrix powers (Theorem~\ref{t:frobenius-powers}) to compute the $h$-bounded distances faster, in $\Ot\left(hn^{\omega(1,1-\alpha,1)}\right)$ time (after $\Ot(n^\omega)$-time computation of an FNF).
Such a speed-up alone is enough to obtain
the tweaked dynamic distance oracle of~Theorem~\ref{t:tweaked},
and, via the reduction of~\cite{GuR21}, a 1-DSO with preprocessing time matching
the APSP bound of~\cite{Zwick02}.

For the above application of Theorem~\ref{t:frobenius-powers} to be legitimate, we need to guarantee that the graph is represented
by a generic matrix.
Simply using a standard adjacency matrix of a digraph fails here,
since the adjacency matrix is often not generic, i.e., it may have multiple invariant factors.
Based on the techniques of~\cite{Wiedemann86}, we show that a \emph{weighted adjacency matrix} $A(G)\in\field^{n\times n}$ representing~$G$ can be appropriately and efficiently sampled so that $A(G)$
is generic, if the size of the field $\field$ used is sufficiently large but still polynomial in $n$.
A random weighted adjacency matrix also has, with high probability, the desired properties relating the
non-zero entries of $(A(G))^k$ to the existence of length-$\leq k$ paths between vertex pairs. Moreover, the elements of $A(G)$ are sampled independently and thus
small updates to the graph~$G$ result in small updates to $A(G)$ and they maintain
genericity with high probability (over polynomially many updates).

A further combination of the submatrix queries data structure of Theorem~\ref{t:frobenius-powers} with formulas for updating the matrix inverse after changing few elements~\cite{BrandNS19, Sankowski04, Sankowski05} and standard hitting set
arguments~\cite{UY91} allows us to obtain the improved multiple-failures distance sensitivity oracle of Theorem~\ref{t:ourmanydso}
and the ``prospective'' dynamic distance oracle of Theorem~\ref{t:dyndist}.

We note that the previous work, in particular concerning
(possibly negatively-) weighted graphs (e.g.,~\cite{BrandS19, Sankowski05a})
or reachability in presence of cycles~\cite{Sankowski04}, where simple path counting fails,
also used weighted adjacency matrices, albeit for a different reason. There, one starts with a polynomial \emph{symbolic adjacency matrix}
in the first place and then applies random variable substitution to enable
efficient polynomial identity testing~\cite{Schwartz80,Zippel79}.
Our use of weighted adjacency matrices can be considered
merely a trick to fix the non-genericity in basic path counting.

\subsection{Further related work}
Specialized exact distance oracles have been shown for incremental~\cite{AusielloIMN91} and decremental~\cite{BaswanaHS07, EvaldFGW21} unweighted directed graphs.
Non-trivial \emph{approximate} distance oracles for weighted directed graphs are known in the fully dynamic setting~\cite{BrandN19}
and partially dynamic settings~\cite{Bernstein16, EvaldFGW21, KarczmarzL19}.

There has been extensive and influential work on static distance oracles for \emph{undirected graphs}, especially in various approximate settings,
e.g.,~\cite{Chechik15, ChechikZ22, PatrascuR14, ThorupZ05, Wulff-Nilsen12}. See also the survey~\cite{Sommer14}. Distance oracles for undirected graphs
have also been studied specifically in the
fault-tolerant (e.g.,~\cite{ChechikLPR12, DuanR22}) and fully dynamic (e.g.,~\cite{Bernstein09, BrandN19}) settings.

\section{Preliminaries}
We denote by $[n]$ the set $\{1,\ldots,n\}$.
Let $\field$ be a finite field.
For an $n\times m$ matrix~$A\in \field^{n\times m}$, and $S\subseteq [n],T\subseteq [m]$, we generally denote by $A_{S,T}$ the submatrix
of $A$ with rows $S$ and columns $T$. We may write $A_{s,T}$ or $A_{S,t}$, for $s,t\in [n]$, to denote $A_{\{s\},T}$ or $A_{S,\{t\}}$, respectively.
In particular, $A_{s,t}$ is the element in the cell $(s,t)$ of $A$.
Whenever we write $A_S$, we mean $A_{S,S}$.

If $v$ is a column (row) vector in $\field^{n\times 1}$ (in $\field^{1\times n}$, resp.), then we sometimes write $v_i$ to denote $v_{i,1}$ ($v_{1,i}$, resp.).
If $n$ is known from the context, we denote by $e_i\in \field^{n\times 1}$ a column vector satisfying $(e_i)_j=[j=1]$.

We generally measure time in field operations, i.e., the field operations are assumed to take unit time.
We denote by $\mm(p,q,m)$ the time needed to multiply a matrix from $\field^{p\times q}$ by a matrix from $\field^{q\times m}$.
That is, if $p=\lfloor n^\alpha\rfloor$, $q=\lfloor n^\beta \rfloor$, $m=\lfloor n^\gamma \rfloor$ for $\alpha,\beta,\gamma\geq 0$,
then $\mm(p,q,m)=O(n^{\omega(\alpha,\beta,\gamma)})$.

When talking about directed graphs $G=(V, E)$, for $F\subseteq E\sqcup V$ we denote by $G-F$ the graph obtained from $G$ by removing the vertices and/or edges $F$.
If $G$ is (non-negatively) weighted, then we denote by $\wei_G(uv)$ the weight of the edge $uv\in E$.
For any $s,t\in V$, denote by $\dist_G(s,t)$ the weight of the shortest $s\to t$ path in~$G$.
If no $s\to t$ path exists in $G$, we put $\dist_G(s,t)=\infty$.

\section{Generic matrices and Frobenius form}\label{s:frobenius}
Let $\field$ be a finite field.
Once again, we call a matrix $A\in \field^{n\times n}$ \emph{generic}
if the \emph{characteristic polynomial} \linebreak $p_A(t)=\det(tI-A)$
equals the \emph{minimal polynomial} $\mu_A$ of~$A$, i.e.,
the minimum-degree monic polynomial over $\field$ such that $\mu_A(A)\equiv 0$.
We start with the following well-known fact.
\begin{fact}{\upshape (see, e.g.,~\cite{basicalgebra})}\label{f:frobenius}
  Suppose the matrix $A$ is generic and let $p_A(t)=t^n+c_{n-1}t^{n-1}+\ldots+c_0$.
  There exists an invertible matrix $U\in \field^{n\times n}$ such that
  \begin{equation}\label{eq:frob}
    A=U\cdot C\cdot U^{-1},
  \end{equation}
  where $C\in \field^{n\times n}$ is a \emph{companion matrix} of $p_A$, that is:
  \begin{equation*}
    C=\begin{bmatrix}
      0 & 0 & \dots & 0 & -c_0\\
      1 & 0 & \dots & 0 & -c_1\\
      0 & 1 & \dots & 0 & -c_2\\
      \vdots & \vdots & \ddots & \vdots\\
      0 & 0 & \dots & 1 & -c_{n-1}
    \end{bmatrix}.
  \end{equation*}
\end{fact}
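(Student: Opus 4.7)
The plan is to prove Fact~\ref{f:frobenius} by exhibiting an explicit similarity transform $U$, constructed from the $A$-iterates of a single suitably chosen vector. Concretely, I will produce a \emph{cyclic vector} $v\in\field^{n\times 1}$, i.e., a vector whose $A$-annihilator polynomial has degree $n$, and then set $U=[\,v\mid Av\mid A^2v\mid\cdots\mid A^{n-1}v\,]$. With this definition, both the invertibility of $U$ and the identity $AU=UC$ will follow from a direct calculation using the Cayley--Hamilton relation $p_A(A)=0$.

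The main obstacle is the construction of $v$, since this is where the genericity assumption $p_A=\mu_A$ is actually used. For each $w\in\field^{n\times 1}$, let $\mu_{A,w}\in\field[x]$ denote the minimum-degree monic polynomial satisfying $\mu_{A,w}(A)w=0$; I claim there exists $v$ with $\mu_{A,v}=\mu_A$. To prove this, factor $\mu_A=p_1^{e_1}\cdots p_r^{e_r}$ into distinct monic irreducibles over~$\field$. For each $i$, the polynomial $\mu_A/p_i$ does not annihilate $A$ (by minimality of $\mu_A$), hence there exists $w_i\in\field^{n\times 1}$ with $(\mu_A/p_i)(A)\,w_i\neq 0$; a standard divisibility argument then shows that $p_i^{e_i}\mid\mu_{A,w_i}$. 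Using the elementary fact that if $\mu_{A,x}$ and $\mu_{A,y}$ are coprime then $\mu_{A,x+y}=\mu_{A,x}\cdot\mu_{A,y}$, one assembles $v$ from the $w_i$'s (after multiplying each $w_i$ by a suitable polynomial in $A$ to isolate its $p_i^{e_i}$-part) so that $\mu_{A,v}=p_1^{e_1}\cdots p_r^{e_r}=\mu_A$. Since $\deg\mu_A=\deg p_A=n$, this $v$ is cyclic.

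With $v$ in hand, invertibility of $U$ is immediate: any nontrivial linear relation among the columns $v,Av,\ldots,A^{n-1}v$ would yield a nonzero polynomial of degree ${<}n$ annihilating $v$, contradicting $\mu_{A,v}=\mu_A$ of degree~$n$. To verify $AU=UC$, I read off the columns: for $j=1,\ldots,n-1$, the $j$-th column of $AU$ is $A\cdot A^{j-1}v=A^j v$, which agrees with $U\cdot Ce_j=Ue_{j+1}$. For the last column, $Ce_n=-(c_0 e_1+c_1 e_2+\cdots+c_{n-1}e_n)$, so $UCe_n=-(c_0 v+c_1 Av+\cdots+c_{n-1}A^{n-1}v)$, which equals $A^n v$ by the Cayley--Hamilton identity $p_A(A)v=0$ rewritten as $A^n v=-\sum_{j=0}^{n-1}c_j A^j v$. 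Hence $AU=UC$, and multiplying by $U^{-1}$ on the right yields \eqref{eq:frob}, completing the proof.
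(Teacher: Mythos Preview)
Your proof is correct. The paper does not actually prove Fact~\ref{f:frobenius}; it is stated as a standard result with a reference. The column-by-column verification you give for $AU=UC$ is essentially identical to the paper's proof of the closely related Lemma~\ref{l:keller}, which assumes a generic vector $u$ is already given and checks the same identity using $p_A(A)=0$.

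What you add beyond the paper is the deterministic construction of the cyclic vector $v$ via the prime-power decomposition of $\mu_A$ and the coprimality lemma for annihilators. This is the classical structure-theoretic argument and is fine. The paper, by contrast, never needs this existence argument in isolation: for its algorithmic purposes it invokes Theorem~\ref{t:random-vector} to obtain a generic vector probabilistically, and then feeds that into Lemma~\ref{l:keller}. So your route is self-contained and purely algebraic, while the paper treats the fact as background and only supplies the computational half (Lemma~\ref{l:keller}) that it actually reuses later.
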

For any \emph{similarity transform} $U\in\field^{n\times n}$ satisfying Equation~\eqref{eq:frob},
$U\cdot C\cdot U^{-1}$ is called the \emph{Frobenius normal form} (FNF) of the generic matrix $A$.
The Frobenius form can be defined more generally for arbitrary non-generic matrices from $\field^{n\times n}$, albeit the middle matrix $C$ has then a more complicated form -- it may consist of multiple companion matrices.
FNF can be computed deterministically in $\Ot(n^\omega)$ \cite{Storjohann01} time for every matrix from $\field^{n\times n}$.
For generic matrices, there exists an easier $\Ot(n^\omega)$-time FNF algorithm~\cite{Keller-Gehrig85} that we sketch
in this section and build upon later on.

Let us call a vector $u\in \field^{n\times 1}$ \emph{generic wrt. $A$} if
the vectors $u,Au,A^2u,\ldots,A^{n-1}u$ are linearly independent.

\begin{lemma}\upshape{\cite{Keller-Gehrig85}}\label{l:keller}
  Let $A\in \field^{n\times n}$ be a generic matrix and let $u\in \field^{n\times 1}$ be generic wrt. $A$.
  Then the matrix $U=\begin{bmatrix}u&Au&A^2u&\ldots&A^{n-1}u\end{bmatrix}\in \field^{n\times n}$ is invertible
  and $U\cdot C\cdot U^{-1}$ is an FNF of $A$.
\end{lemma}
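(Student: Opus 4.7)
The plan is to verify the two assertions almost directly from the definitions: invertibility of $U$ will fall out of the genericity assumption on $u$, and the FNF identity $A=U\cdot C\cdot U^{-1}$ will follow by showing $A\cdot U=U\cdot C$ column-by-column, using the Cayley--Hamilton theorem to handle the last column.

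First I would note that by the definition of ``$u$ generic wrt.\ $A$'', the $n$ columns $u, Au, \ldots, A^{n-1}u$ of $U$ are linearly independent over $\field$, so $U\in\field^{n\times n}$ is invertible. It therefore suffices to establish the matrix equation $A\cdot U=U\cdot C$, since multiplying on the right by $U^{-1}$ then yields \eqref{eq:frob}.

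Next I would expand both sides column-by-column. The $j$-th column of $A\cdot U$ is $A\cdot (A^{j-1}u)=A^{j}u$, so the columns of $A\cdot U$ are $Au,A^2u,\ldots,A^{n-1}u,A^{n}u$. For the right-hand side, inspect the companion matrix: its $j$-th column for $j\in[1,n-1]$ is the standard basis vector $e_{j+1}\in\field^{n\times 1}$, so the $j$-th column of $U\cdot C$ is $U\cdot e_{j+1}=A^{j}u$, matching $A\cdot U$ for $j=1,\ldots,n-1$ trivially. The only nontrivial coincidence is thus in the last column, where $U\cdot C$ has
\[
  U\cdot(-c_0,-c_1,\ldots,-c_{n-1})^T=-\sum_{i=0}^{n-1}c_i\cdot A^{i}u,
\]
which must equal $A^n u$. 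Equivalently, we need $p_A(A)\cdot u=0$, and this is immediate from the Cayley--Hamilton theorem applied to $A$, namely $p_A(A)\equiv 0$, so it annihilates every vector.

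There is no real obstacle here -- the content of the lemma is a reformulation of Cayley--Hamilton in the coordinate system provided by the iterates of a generic vector. The single point worth emphasizing is that one \emph{must} invoke Cayley--Hamilton and not only the fact that $\mu_A=p_A$: even for generic $A$, it is the universal polynomial identity $p_A(A)=0$ (valid for all matrices) that makes $A^n u$ lie in the span of $u,Au,\ldots,A^{n-1}u$ with the coefficients $-c_0,\ldots,-c_{n-1}$ that are precisely the entries of the last column of $C$. With $A\cdot U=U\cdot C$ established and $U$ invertible, the lemma follows.
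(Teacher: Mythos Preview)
Your proof is correct and follows essentially the same route as the paper's: verify invertibility of $U$ from the linear independence of the iterates, then establish $AU=UC$ column-by-column, with the last column reducing to $p_A(A)u=0$. The only cosmetic difference is that the paper justifies $p_A(A)=0$ via genericity ($\mu_A=p_A$) together with $\mu_A(A)=0$, whereas you invoke Cayley--Hamilton directly; so your closing remark that one ``must'' use Cayley--Hamilton rather than $\mu_A=p_A$ is slightly off---either route works, and the paper takes the other one.
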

\begin{proof}
  $U$ is invertible since it is of size $n\times n$ and its columns are linearly independent.
  Moreover, we have $AU=[Au|A^2u|\ldots|A^nu]$. On the other hand, for any vector $v\in\field^{1\times n}$ such that
  \linebreak
  $v=(v_0,\ldots,v_{n-1})$, we have
  $vC=(v_1,\ldots,v_{n-1},-\sum_{i=0}^{n-1}c_iv_i)$.
  As a result:
  \begin{equation*}
    UC=\begin{bmatrix}Au&A^2u&\ldots&A^{n-1}u&-\left(\sum_{i=0}^{n-1}c_iA^i\right)u\end{bmatrix}.
  \end{equation*}
  Since $A$ is generic, $\mu_A(A)=p_A(A)=A^n+\sum_{i=1}^{n-1}c_iA^i\in \field^{n\times n}$ is a zero matrix.
  Consequently, we have $UC=\begin{bmatrix}Au&A^2u&\ldots&A^nu\end{bmatrix}$, which proves $AU=UC$. Thus, indeed $A=UCU^{-1}$.
  \end{proof}

  We now refer to~\cite{BrentGL03} for the following estimate.
\begin{theorem}\label{t:random-vector}\upshape{\cite[Theorem~9]{BrentGL03}}
  Let $A\in \field^{n\times n}$ be a generic matrix and let $q=|\field|$. Then, with probability at least $\frac{0.2}{1+\log_q{n}}$, a random
  vector from $\field^{n\times 1}$ is generic wrt. $A$.
\end{theorem}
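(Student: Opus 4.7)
The plan is to recast the genericity condition in module-theoretic language so we can count generic vectors exactly, and then to bound that count by an elementary estimate on irreducible polynomials.

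First I would note that $\field^{n\times 1}$ carries the structure of an $\field[x]$-module in which $x$ acts as multiplication by $A$. A vector $u$ is generic w.r.t. $A$ precisely when $u,Au,\ldots,A^{n-1}u$ span $\field^{n\times 1}$, i.e., when $u$ generates the whole module. Since $A$ is generic we have $\mu_A=p_A$ of degree $n$, so as an $\field[x]$-module $\field^{n\times 1}$ is cyclic and isomorphic to $R:=\field[x]/(p_A)$. Under this isomorphism a random vector corresponds to a uniformly random element of $R$, and the question reduces to counting units-of-generation of the cyclic module $R$.

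Next I would factor $p_A=\prod_{i=1}^{k}f_i^{e_i}$ into distinct monic irreducibles and apply the Chinese Remainder Theorem to obtain $R\cong\prod_{i=1}^{k}\field[x]/(f_i^{e_i})$. An element generates $R$ iff its image in each local factor $\field[x]/(f_i^{e_i})$ is a generator there, which (since these are local PIRs with maximal ideal $(f_i)$) is equivalent to not lying in $(f_i)$. The fraction of non-multiples of $f_i$ in $\field[x]/(f_i^{e_i})$ is exactly $1-q^{-\deg f_i}$, so the probability that a uniformly random element of $R$ is a generator is
\begin{equation*}
  P = \prod_{i=1}^{k}\bigl(1-q^{-\deg f_i}\bigr).
\end{equation*}

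The remaining task, and in my view the only nontrivial step, is a worst-case lower bound for $P$ subject to $\sum_i \deg f_i\le n$ and the constraint that the $f_i$ are distinct monic irreducibles over $\field_q$. Letting $n_d$ denote the number of $f_i$'s of degree $d$, I would use the two bounds $n_d\le q^d/d$ (the total count of monic irreducibles of degree $d$) and $n_d\le n/d$ (the degree budget), together with $\ln(1-q^{-d})\ge -2q^{-d}$ for $q^d\ge 2$, to obtain
\begin{equation*}
  -\ln P \;\le\; 2\sum_{d\ge 1} n_d q^{-d} \;\le\; 2\Bigl(\sum_{d\le \log_q n}\tfrac{1}{d} + \sum_{d>\log_q n}\tfrac{n}{d\, q^{d}}\Bigr) \;=\; O(1+\log\log_q n),
\end{equation*}
with the tail summing to $O(1)$ because $q^d/n$ grows geometrically past the threshold. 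The dominant harmonic sum is the source of the $\log_q n$ factor; a careful treatment of constants (splitting off the $d=1$ term where the estimate $1-q^{-1}\ge 1/q$ may be needed for $q=2$) then yields $P\ge 0.2/(1+\log_q n)$. The main obstacle is precisely this tightening of constants in the small-$q$, small-$d$ regime, where crude bounds such as $1-q^{-d}\ge e^{-2q^{-d}}$ become loose; one handles this by treating a constant number of small degrees by direct computation and applying the asymptotic estimate only for $d$ past a small threshold.
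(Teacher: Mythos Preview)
The paper does not prove this theorem; it is quoted verbatim from \cite{BrentGL03} and used as a black box, so there is no in-paper proof to compare your proposal against.

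For what it is worth, your module-theoretic reduction is sound: when $\mu_A=p_A$ the $\field[x]$-module $\field^{n\times 1}$ is cyclic and isomorphic to $\field[x]/(p_A)$, a generic vector is precisely a cyclic generator, and CRT gives the generator fraction as exactly $\prod_i(1-q^{-\deg f_i})$. One caution on your final estimate: with the crude inequality $\ln(1-q^{-d})\ge -2q^{-d}$, the harmonic sum $\sum_{d\le L}1/d\sim\ln L$ (where $L=\log_q n$) exponentiates to a lower bound of order $(\log_q n)^{-2}$, not $(\log_q n)^{-1}$. To recover the first-power bound you need $\ln(1-x)\ge -x-O(x^2)$ for all but finitely many terms so that the harmonic sum enters with coefficient $1$ rather than $2$; you flag this in words, but the displayed inequality as written loses a factor of $\log_q n$ relative to the stated target.
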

Since one can compute the $n$ vectors $u,Au,\ldots,A^{n-1}u$ and matrix inverse
in general in $\Ot(n^\omega)$ time (see, e.g.,~\cite{Keller-Gehrig85}),
by Theorem~\ref{t:random-vector},
Lemma~\ref{l:keller} applied to random vectors $u\in \field^{n\times 1}$ yields:
\begin{lemma}\label{l:fnf-simple}
  Let $A\in \field^{n\times n}$ be a generic matrix. There is a Las Vegas algorithm computing an FNF of $A$ in $\Ot(n^\omega)$ time.
  The running time bound holds with high probability.
\end{lemma}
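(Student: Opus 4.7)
The plan is to combine Lemma~\ref{l:keller} with the probabilistic guarantee of Theorem~\ref{t:random-vector}: sample a random vector $u\in\field^{n\times 1}$, attempt to build the similarity transform $U=[u\,|\,Au\,|\,\ldots\,|\,A^{n-1}u]$ from its iterates, and verify success by testing invertibility. If $u$ turns out to be generic wrt. $A$ then Lemma~\ref{l:keller} already hands us the FNF structurally; otherwise we resample. The only nontrivial cost is computing one trial within $\Ot(n^\omega)$ and amplifying the success probability to high probability while retaining the Las Vegas guarantee.

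For a single trial, first compute the iterates $u,Au,\ldots,A^{n-1}u$ via Keller-Gehrig's repeated-doubling trick: maintain a matrix whose columns are the first $2^i$ iterates, and in step $i$ multiply the current $n\times 2^i$ block by the precomputed power $A^{2^i}$ (the powers $A^2,A^4,\ldots,A^{2^{\lceil\log n\rceil}}$ themselves are obtained by iterated squaring in $\Ot(n^\omega)$ time). This produces the candidate $U$ in $\Ot(n^\omega)$ time. Next, attempt to invert $U$ in $\Ot(n^\omega)$ time; if $U$ is singular then $u$ is not generic wrt. $A$ and the trial is aborted. Otherwise, by Lemma~\ref{l:keller}, the matrix $U$ is a valid similarity transform and $A=U\cdot C\cdot U^{-1}$ for the companion matrix $C$ of $p_A$.

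It remains to recover the coefficients $c_0,\ldots,c_{n-1}$ of $p_A$ defining $C$. Since $AU=UC$ and the action of $C$ on the last column of $U$ by inspection of its shape is $-\sum_{i=0}^{n-1}c_i\cdot (U e_{i+1})$, it suffices to compute $A^n u = A\cdot(A^{n-1}u)$ by one matrix-vector product and then solve $U\cdot(-c_0,\ldots,-c_{n-1})^T=A^n u$, i.e., multiply this vector by the already-computed $U^{-1}$. This costs $O(n^2)$, so a single successful trial takes $\Ot(n^\omega)$ time overall.

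The main (and only) obstacle is turning the per-trial success probability $p\geq 0.2/(1+\log_q n)=\Omega(1/\log n)$ from Theorem~\ref{t:random-vector} into a high-probability running time. Repeating the trial until $U$ is invertible yields a geometric number of attempts with mean $O(\log n)$; moreover, after $\Theta(\log n\cdot\log_q n\cdot c)=O(\log^2 n)$ independent trials the probability of every trial failing is at most $(1-p)^{O(\log^2 n)}\leq n^{-c}$ for any desired constant $c$. Since each trial is verified via the invertibility check, the algorithm is Las Vegas: it never returns an incorrect FNF, and its total running time is $\Ot(n^\omega)$ with high probability, as claimed.
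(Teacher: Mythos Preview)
Your proof is correct and follows essentially the same approach as the paper: sample a random $u$, build $U$ from its iterates in $\Ot(n^\omega)$ time, verify/invert, and repeat $O(\log^2 n)$ times to amplify the $\Omega(1/\log n)$ success probability from Theorem~\ref{t:random-vector} to high probability. The paper's own proof is much terser and omits the details you spell out (the doubling computation of iterates, the invertibility check as verification, and the recovery of $C$ via $U^{-1}A^n u$), but these are exactly the intended steps.
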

\begin{proof}
  Multiplying a random vector $u\in\field^{n\times 1}$ by the $n$ first powers of $A$ yields an FNF of $A$
  with probability $\Omega(1/\log{n})$. After $c\cdot \log^2{n}=O(\polylog{n})$ trials (where $c=O(1)$), the success probability is at least $1-O(n^{-c})$.
\end{proof}
Crucially for our applications,
an FNF can be computed faster if an efficient way of multiplying a vector by $n$ powers of $A$ is available.
The following lemma has been proven in a more general form by Eberly~\cite{eberly00}. We include a proof for completeness.
\begin{lemma}\label{l:eberly}{\upshape{\cite{eberly00}}}
  Let $A\in \field^{n\times n}$ be a generic matrix, and let $u,v\in \field^{n\times 1}$.
  Suppose the vectors $u,Au,\ldots,A^{n-1}u$ and $v^T,v^TA,\ldots,v^TA^{n-1}$ are given.

  Then, using $\Ot(n^2)$
  additional field operations one can either compute a Frobenius normal form
  of $A$ or detect that either $u$ is not generic wrt. $A$ or $v$ is not generic wrt. $A^T$.
\end{lemma}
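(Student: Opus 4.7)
The plan is to reduce the task to (i)~recovering the minimal polynomial of a scalar sequence via Berlekamp--Massey, and (ii)~inverting a Hankel matrix and applying the inverse to a dense matrix, exploiting the Krylov structure of the given input to stay within the $\Ot(n^2)$ budget.

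First, form the scalar sequence $a_k := v^T A^k u$ for $k=0,1,\ldots,2n-1$. The iterates $v^T A^i$ and $A^j u$ are given for $0 \le i,j \le n-1$; additionally compute $A^n u = A\cdot(A^{n-1}u)$ in one $O(n^2)$-time matrix--vector product. Each $a_k$ is then an inner product of two length-$n$ vectors under any split $k=i+j$ with $i \in [0,n-1]$ and $j \in [0,n]$, so producing all $2n$ terms costs $O(n^2)$ in total. Apply a fast Berlekamp--Massey algorithm (running in $\Ot(n)$ time via FFT-based half-gcd) to recover the minimum-degree monic linear generator $f \in \field[x]$ of $(a_k)$. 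Let $U=[u\mid Au\mid\cdots\mid A^{n-1}u]$ and $V=[v\mid A^T v\mid\cdots\mid (A^T)^{n-1}v]$ be the Krylov matrices assembled directly from the input; a direct calculation gives $H := V^T U = (a_{i+j})_{i,j=0}^{n-1}$, an $n \times n$ Hankel matrix.

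By the classical correspondence between ranks of Hankel matrices and orders of linear recurrences, $H$ is invertible if and only if $\deg f = n$. If $\deg f < n$, then $H = V^T U$ is singular, so at least one of the square matrices $U, V$ is singular; report failure, having certified that $u$ is not generic wrt.~$A$ or $v$ is not generic wrt.~$A^T$. Otherwise $H$, $U$, and $V$ are all invertible, both $u$ and $v$ are generic, and necessarily $f = p_A$: indeed $v^T A^k f(A) u = 0$ for all $k \ge 0$, and since the families $(A^T)^k v$ and $A^j u$ each span $\field^n$ by genericity, $f(A) = 0$, forcing $f$ to be a multiple of $\mu_A = p_A$; equality of degrees concludes. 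Let $C$ be the companion matrix of $f$; by Lemma~\ref{l:keller}, $A = UCU^{-1}$.

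It remains to produce $U^{-1}$ explicitly within the $\Ot(n^2)$ budget, which is the main obstacle. Use $U^{-1} = H^{-1} V^T$ together with a Gohberg--Semencul-type representation of the invertible Hankel matrix $H$: $H^{-1}$ decomposes into a constant-size sum of products of triangular Toeplitz matrices, each determined by two auxiliary vectors obtainable from $(a_k)$ in $\Ot(n)$ time via fast Padé approximation (equivalently, the fast extended Euclidean algorithm) on the generating polynomial $\sum_k a_k x^k$. Applying a triangular Toeplitz matrix to a single $n$-vector costs $\Ot(n)$ via FFT, so applying the structured form of $H^{-1}$ to each of the $n$ columns of $V^T$ costs $\Ot(n^2)$ in total, yielding $U^{-1}$ column by column. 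Output the triple $(U, C, U^{-1})$.
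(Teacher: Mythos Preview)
Your proof is correct and follows essentially the same route as the paper: form the Hankel matrix $H=V^TU$ from the scalars $a_k=v^TA^ku$, test its singularity (you via the degree of the Berlekamp--Massey output, the paper directly via~\cite{BrentGY80}), and then compute $U^{-1}=H^{-1}V^T$ column-by-column using fast Hankel/Toeplitz solvers (you via Gohberg--Semencul, the paper via~\cite{BrentGY80}---the same underlying technology). The one genuine, if minor, difference is how $C$ is obtained: you read off $p_A$ as the Berlekamp--Massey polynomial $f$ and take its companion, whereas the paper first computes $U^{-1}$ and then extracts the last column of $U^{-1}AU=U^{-1}\cdot A\cdot A^{n-1}u$; your variant saves that final $O(n^2)$ product at the cost of needing one extra iterate $A^nu$ to feed $2n$ terms into Berlekamp--Massey.
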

\begin{proof}
  We wish to compute the Frobenius form $U\cdot C\cdot U^{-1}$ given by Lemma~\ref{l:keller} or detect
  that either $u$ is not generic wrt. $A$ or $v$ is not generic wrt. $A^T$. Recall that the matrix
  $U$ is obtained by putting the (given) vectors $u,Au,\ldots,A^{n-1}u$ in a row.
  Moreover, let
  \begin{equation*}
    V=\begin{bmatrix}v^T\\v^TA\\\vdots\\v^TA^{n-1}\end{bmatrix}.
  \end{equation*}
  Consider a Hankel matrix (i.e., with all skew diagonals constant)
  \begin{equation*}
    VU=\begin{bmatrix}
      v^Tu & v^TAu & \ldots & v^TA^{n-1}u\\
      v^TAu & v^TA^2u & \ldots & v^TA^nu\\
      \vdots & \vdots & \ddots & \vdots\\
      v^TA^{n-1}u & v^TA^nu & \ldots & v^TA^{2n-2}u
    \end{bmatrix}.
  \end{equation*}
  Note that the $2n-1$ distinct entries of $VU$ can be computed
  using $O(n)$ inner products of the input vectors, i.e., in $O(n^2)$ time.
  One can check in $\Ot(n)$ time whether an $n\times n$ Hankel matrix (given its $O(n)$ skew diagonal values)
  is singular~\cite{BrentGY80}.
  If $VU$ is singular, then either $V$ or $U$ is singular, that is, either $v$ is not generic
  wrt. $A^T$ or $u$ is not generic wrt. $A$.

  Suppose both $U$ and $V$ are non-singular. We have
  \begin{equation*}
    U^{-1}=(VU)^{-1}\cdot V,
  \end{equation*}
  so the columns of $U^{-1}$ can be found by solving $n$ linear systems $Hx=b$, where $H=VU$ is
  a non-singular Hankel matrix and $b$ is a column of $V$. Each such linear system can be solved
  using $\Ot(n)$ field operations~\cite{BrentGY80}. As a result, $U^{-1}$ can be computed in $\Ot(n^2)$ time.

  Finally, since $C=U^{-1}\cdot A\cdot U$ and all but the last column of $C$ are fixed,
  we can determine~$C$ by simply computing the last column of $U^{-1}\cdot A\cdot U$,
  i.e., multiplying $U^{-1}\cdot A\cdot A^{n-1}u$ in $O(n^2)$ time.
\end{proof}
Lemma~\ref{l:eberly} can produce an FNF of a generic
$A$ in $\Ot(T(n)+n^2)$ time, where $T(n)$ is the time required to multiply
a vector by the first $n$ powers of either $A$ or $A^T$.
Note that $A^T$ and $A$ have the same characteristic and minimal polynomials,
so $A^T$ is generic iff $A$ is generic.
Thus, by Theorem~\ref{t:random-vector}, $O(\polylog{n})$ samples of $u,v$ are enough to succeed w.h.p.

The crucial property of the Frobenius normal form that we will use is that the companion
matrix $C$ is particularly easy to power
and for any $k\geq 1$ we have
\begin{equation*}
  A^k=(UCU^{-1})^k=U(CU^{-1}U)^{k-1}CU^{-1}=UC^kU^{-1}.
\end{equation*}

The companion matrix $C$ has the following key property.
\begin{fact}\label{f:companion-gen}
  For any $k\geq 1$, let $w_1,w_2,\ldots,w_n\in\field^{n\times 1}$
  be the columns of $C^k$. Then we have
  \begin{equation*}
    C^{k+1}=\begin{bmatrix}
      w_2 & w_3 & \dots & w_n & C\cdot w_n
    \end{bmatrix}.
  \end{equation*}
\end{fact}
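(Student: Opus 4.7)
The plan is to exploit associativity: write $C^{k+1}$ in two equivalent ways, namely $C^k\cdot C$ and $C\cdot C^k$, and read off the columns from whichever factorization is more convenient.

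The key observation is structural: inspecting the companion matrix $C$ displayed in Fact~\ref{f:frobenius}, its $j$-th column for $j=1,\ldots,n-1$ is exactly $e_{j+1}$ (the 1 sits on the subdiagonal), while its $n$-th column is $-\sum_{i=0}^{n-1} c_i e_{i+1}$. Therefore, for any $j<n$, the $j$-th column of $C^k\cdot C$ is $C^k\cdot e_{j+1}$, which by definition is the $(j+1)$-th column of $C^k$, namely $w_{j+1}$. This accounts for the first $n-1$ columns of $C^{k+1}$ stated in the fact.

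For the last column, I would switch to the factorization $C^{k+1}=C\cdot C^k$: the $n$-th column of $C\cdot C^k$ is, by definition, $C$ times the $n$-th column of $C^k$, i.e., $C\cdot w_n$. Combining the two cases yields the claimed block form of $C^{k+1}$.

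I do not anticipate a real obstacle here; the statement is essentially a restatement of the fact that multiplication by a companion matrix on the right acts as a shift on columns (with a linear combination filling in the last position). As a sanity check one could alternatively verify directly that $-\sum_{i=0}^{n-1} c_i w_{i+1}=C\cdot w_n$ using the Cayley--Hamilton relation $p_A(C)=0$ (equivalently, $C\cdot e_n = -\sum_{i=0}^{n-1} c_i e_{i+1}$ and thus $C^{k+1}e_n=C^k\cdot C\cdot e_n = -\sum_i c_i w_{i+1}$), but the two-factorization argument above avoids this computation entirely.
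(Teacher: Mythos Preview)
Your argument is correct. The paper states Fact~\ref{f:companion-gen} without proof, so there is nothing to compare against; your two-factorization approach (using $C^k\cdot C$ for the first $n-1$ columns via $C e_j=e_{j+1}$, and $C\cdot C^k$ for the last column) is a clean and complete justification.
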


\begin{corollary}\label{cor:upow-gen}
  Let $k\geq 1$
  and let $u_1,\ldots,u_n$ be the columns of the matrix $U\cdot C^k$. Let $w_n$ be the last column of $C^k$. Then
  \begin{equation*}
    U\cdot C^{k+1}=\begin{bmatrix}
      u_2 & u_3 & \dots & u_n & U\cdot C\cdot w_n
    \end{bmatrix}.
  \end{equation*}
\end{corollary}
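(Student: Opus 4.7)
The plan is to derive Corollary~\ref{cor:upow-gen} as an immediate consequence of Fact~\ref{f:companion-gen}, using only that left multiplication by a fixed matrix acts column-wise. Concretely, I would first apply Fact~\ref{f:companion-gen} to the powers $C^k$ and $C^{k+1}$ of the companion matrix itself: letting $w_1,\ldots,w_n$ be the columns of $C^k$, the fact gives
\begin{equation*}
  C^{k+1} = \begin{bmatrix} w_2 & w_3 & \cdots & w_n & C\cdot w_n \end{bmatrix}.
\end{equation*}

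Next, I would left-multiply both sides by $U$. Since for any matrix $M$ with columns $m_1,\ldots,m_n$ the columns of $UM$ are $Um_1,\ldots,Um_n$, this yields
\begin{equation*}
  U\cdot C^{k+1} = \begin{bmatrix} U w_2 & U w_3 & \cdots & U w_n & U\cdot C\cdot w_n \end{bmatrix}.
\end{equation*}
Finally, I would identify the entries: by hypothesis $u_1,\ldots,u_n$ are the columns of $U\cdot C^k$, and multiplying $U$ by $C^k$ column-by-column shows $u_j = U w_j$ for each $j\in[n]$. Substituting into the previous display produces exactly the claimed formula, with the first $n-1$ columns of $U\cdot C^{k+1}$ being $u_2,\ldots,u_n$ and the last column being $U\cdot C\cdot w_n$.

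There is essentially no obstacle here; the only thing worth emphasizing is that the corollary is genuinely a pointwise consequence of Fact~\ref{f:companion-gen} and requires no property of $A$ or genericity beyond what is already in scope. The mild subtlety is just noting that $w_n$ itself is not one of the $u_i$ (it lives in the unmultiplied factor $C^k$), which is why the last column of $U\cdot C^{k+1}$ must be stated as $U\cdot C\cdot w_n$ rather than in terms of the $u_j$'s alone — this is precisely the quantity that will need to be maintained in the algorithmic applications built on top of this corollary.
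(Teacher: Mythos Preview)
Your proposal is correct and follows essentially the same argument as the paper: introduce the columns $w_1,\ldots,w_n$ of $C^k$, apply Fact~\ref{f:companion-gen}, left-multiply by $U$, and identify $Uw_j=u_j$. The paper's proof is slightly more terse but identical in substance.
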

\begin{proof}
Let $w_1,w_2,\ldots,w_n\in\field^{n\times 1}$ be the columns of $C^k$. Then, by Fact~\ref{f:companion-gen}:
\begin{equation*}
    U\cdot C^{k+1}=\begin{bmatrix}
      Uw_2 & Uw_3 & \dots & Uw_n & UCw_n
    \end{bmatrix}=
    \begin{bmatrix}
      u_2 & u_3 & \dots & u_n & UC w_n
    \end{bmatrix}.\qedhere
  \end{equation*}
\end{proof}
By Corollary~\ref{cor:upow-gen}, the $n$ matrices $U\cdot C$, $U\cdot C^2$, \ldots, $U\cdot C^n$ can
be encoded concisely using only $2n-1$ column vectors, i.e., in $O(n^2)$ space.
This is formally captured by the below lemma which also
shows that such a representation can be computed very efficiently.
\begin{lemma}\label{l:rep}
  Let the Frobenius normal form $U\cdot C\cdot U^{-1}$ of $A$ be given.
  Using
  $\Ot(n^2)$
  field operations we can compute an \emph{auxiliary matrix}
  $(r_{i,j})=R\in \field^{n\times (2n-1)}$, such that for any $k\in[n]$:
  \begin{equation*}
    U\cdot C^k=\begin{bmatrix}
      r_{1,k} & r_{1,k+1} & \dots & r_{1,n+k-1}\\
      r_{2,k} & r_{2,k+1} & \dots & r_{2,n+k-1}\\
      \vdots & \vdots & \ddots & \vdots\\
      r_{n,k} & r_{n,k+1} & \dots & r_{n,n+k-1}
    \end{bmatrix}:=R_k.
  \end{equation*}
\end{lemma}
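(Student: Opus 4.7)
The plan is to identify the columns of $R$ as iterates of a single vector under $A$, observe that by the Cayley--Hamilton theorem each row of $R$ satisfies the same scalar linear recurrence of order~$n$, and then invoke a fast algorithm for extending such a recurrence.

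Let $u := U e_1$ be the first column of $U$ and, for $\ell \geq 0$, define the column vector $r_\ell := U \cdot C^\ell \cdot e_1 = A^\ell u$; I would take the $\ell$-th column of the desired matrix $R$ to be $r_\ell$. To verify that this choice matches the block formula in the statement, note that $C e_i = e_{i+1}$ for $i<n$, so $C^{j-1} e_1 = e_j$ for every $j\in[n]$. Consequently the $j$-th column of $U C^k$ equals $U C^k e_j = U C^{k+j-1} e_1 = r_{k+j-1}$, which is exactly column $k+j-1$ of~$R$. The first $n$ such vectors $r_0, r_1, \ldots, r_{n-1}$ are just the columns of $U$ and are available for free from the given FNF.

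To compute $r_n, \ldots, r_{2n-1}$ quickly, I use Cayley--Hamilton: $A^n = -\sum_{i=0}^{n-1} c_i A^i$, which applied to $A^\ell u$ gives the vector-valued recurrence
\[
r_{\ell+n} \;=\; -\sum_{i=0}^{n-1} c_i\, r_{\ell+i} \qquad \text{for all } \ell \geq 0.
\]
Reading $R$ row-by-row, each of its $n$ rows is thus a scalar sequence over $\field$ satisfying the \emph{same} order-$n$ linear recurrence, whose reciprocal polynomial is $Q(x) := 1 + c_{n-1}x + c_{n-2}x^2 + \ldots + c_0 x^n$. For each row $i$, the generating function is rational of the form $P_i(x)/Q(x)$ for some $P_i$ of degree $<n$, which I recover in $\Ot(n)$ time by computing $\left(\sum_{j=0}^{n-1} r_{i,j}\, x^j\right) \cdot Q(x) \bmod x^n$ via FFT; the required $n$ new terms then fall out of $P_i(x) \cdot Q(x)^{-1} \bmod x^{2n}$, again in $\Ot(n)$ time using FFT-based power-series inversion (well-defined since $Q(0) = 1$). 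This is exactly the setting handled by the algorithm of~\cite{BostanM21}. Summed over the $n$ rows, the total cost is $\Ot(n^2)$, as claimed.

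The only point requiring care is the clean identification of $R$'s columns with the single Krylov-type sequence $(A^\ell u)_\ell$, so that all $n$ rows are governed by the same recurrence; after that, everything reduces to a routine application of fast linear-recurrence extension, and no fast matrix multiplication is needed.
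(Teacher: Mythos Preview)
Your proposal is correct and follows essentially the same approach as the paper: both arguments observe that each row of $R$ obeys the order-$n$ linear recurrence with coefficients $c_0,\ldots,c_{n-1}$, and then extend each of the $n$ rows by $n$ further terms in $\Ot(n)$ time (you spell out the rational-generating-function computation explicitly, the paper just cites~\cite{BostanM21}). Your additional identification of the columns of $R$ with the Krylov iterates $A^\ell u$ is a clean way to see the structure but is not a genuinely different route; note only the harmless off-by-one that $r_0=Ue_1$ is used as an initial condition but is not itself a column of the lemma's $R$, whose columns are $r_1,\ldots,r_{2n-1}$.
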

\begin{proof}
  Let $u_i=(u_{i,1},\ldots, u_{i,n})$ be the $i$-th row of the matrix $U$.
  For any $k>n$ define $u_{i,k}$ inductively:
  \begin{equation*}
    u_{i,k}=-c_0\cdot u_{i,k-n}-c_1\cdot u_{i,k-n+1}-\ldots-c_{n-1}\cdot u_{i,k-1}=-\sum_{i=0}^{n-1}c_i\cdot u_{i,k-n+i}.
  \end{equation*}
  In other words, $(u_i)_{i=1}^\infty$ is a linearly recursive sequence of order $n$.
  As discussed in the proof of Lemma~\ref{l:keller}, for any $k\geq 0$, we have:
  \begin{equation*}
    u_i\cdot C^k=(u_{i,1+k},\ldots,u_{i,n+k}).
  \end{equation*}
  In particular, for $k=n-1$, we obtain:
  \begin{equation*}
    u_i\cdot C^{n}=(u_{i,n+1},\ldots,u_{i,2n})=(r_{i,n},\ldots,r_{i,2n-1}).
  \end{equation*}
  Since $(r_{i,1},\ldots,r_{i,n-1})=(u_{i,2},\ldots,u_{n})$, the $i$-th row of the
  matrix $R$ can be obtained by computing the terms $n+1,\ldots,2n$ of
  the linearly recursive sequence $(u_i)_{i=1}^\infty$.
  As shown in~\cite[Theorem~3]{BostanM21}, this can be done using $\Ot(n)$ field operations.
  Therefore, by applying this to all $i$, computing the entire matrix $R$
  is possible using $\Ot(n^2)$ field operations.
\end{proof}

\begin{lemma}\label{l:powers-vector}
  Let a Frobenius normal form $U\cdot C\cdot U^{-1}$ of a generic $A\in\field^{n\times n}$ and the associated auxiliary matrix $R$ of Lemma~\ref{l:rep} be given. Then:
  \begin{enumerate}[label=(\arabic*)]
    \item For any $i,j\in [n]$, the elements $A_{i,j},(A^2)_{i,j},\ldots,(A^{n-1})_{i,j}$ can be computed in $\Ot(n)$~time.
    \item For any vector $v\in \field^{n\times 1}$, all the vectors $v,Av,A^2v,\ldots,A^{n-1}v$ can be computed in $\Ot(n^2)$~time.
  \end{enumerate}
\end{lemma}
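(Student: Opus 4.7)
}
The plan is to leverage the identity $A^k = U\cdot C^k\cdot U^{-1} = R_k\cdot U^{-1}$ from Lemma~\ref{l:rep}, and exploit the fact that the matrices $R_k$ are all obtained by sliding a window of width $n$ along the $n\times (2n-1)$ matrix $R$. This sliding-window structure turns each coordinate of the answer, viewed as a function of $k$, into a correlation/convolution of two length-$O(n)$ sequences, which can be evaluated for all $k$ simultaneously in $\Ot(n)$ time by a single FFT-based polynomial multiplication~\cite{fft}.

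For part~(1), I fix $i,j\in [n]$ and let $y\in\field^{n\times 1}$ denote the $j$-th column of $U^{-1}$, which is given explicitly with the FNF. Writing out
\begin{equation*}
  (A^k)_{i,j} \;=\; (R_k\cdot U^{-1})_{i,j} \;=\; \sum_{l=1}^{n} r_{i,\,k+l-1}\cdot y_l,
\end{equation*}
I recognize the right-hand side as the coefficient of $t^{k+n-1}$ in the product of the polynomials $p(t)=\sum_{m=1}^{2n-1} r_{i,m}\,t^{m}$ and $q(t)=\sum_{l=1}^{n} y_l\,t^{n-l}$. Multiplying $p$ by $q$ via FFT takes $\Ot(n)$ time and yields all the values $(A^k)_{i,j}$ for $k=1,\ldots,n-1$ at once.

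For part~(2), I first compute $y := U^{-1}v \in\field^{n\times 1}$ in $O(n^2)$ time using the explicitly-stored matrix $U^{-1}$, so that $A^k v = R_k\cdot y$. Then, for each fixed row index $i\in [n]$, the entries
\begin{equation*}
  (A^k v)_i \;=\; \sum_{l=1}^{n} r_{i,\,k+l-1}\cdot y_l,\qquad k=1,\ldots,n-1,
\end{equation*}
have exactly the same correlation form as in part~(1) with $y$ playing the role of the fixed vector. Hence for each of the $n$ rows, a single FFT-based polynomial multiplication computes all $n-1$ output coordinates in $\Ot(n)$ time, for a total of $\Ot(n^2)$ time.

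The main conceptual step is the reindexing that casts $(R_k\cdot y)_i$ as a polynomial product; after that, everything is a standard FFT application, and no further obstacle arises. The only point that needs care is to confirm that $U^{-1}$ is indeed available explicitly (as assumed by the phrase ``Frobenius normal form $U\cdot C\cdot U^{-1}$ of $A$ is given''), so that the matrix-vector product $U^{-1}v$ in part~(2) costs $O(n^2)$ rather than $O(n^\omega)$.
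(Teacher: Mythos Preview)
Your proof is correct and follows essentially the same approach as the paper: the paper phrases the key step as a Hankel matrix--vector product (which is well known to reduce to a single polynomial multiplication via FFT), whereas you spell out the reduction to polynomial multiplication directly, but the underlying computation is identical. The paper also unifies the two items by treating part~(1) as the special case $v=e_j$, which slightly streamlines the exposition.
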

\begin{proof}
  In the former item, set $v:=e_j$.
  We have $A^kv=U\cdot C^k\cdot (U^{-1}v)$.
  Let us first compute $w=U^{-1}v$.
  In the former item, the vector $w$ is simply the $j$-th column of $U^{-1}$ and thus it can be
  read in $O(n)$ time.
  In the latter item, it can be obtained in $O(n^2)$ time.

  For any $i\in [n]$, $(A^kv)_i=(UC^kw)_i=(r_{i,k},\ldots r_{i,n+k-1})\cdot w$.
  Thus, we have:
  \begin{equation*}
    \begin{bmatrix}
      r_{i,1} & r_{i,2} & \dots & r_{i,n}\\
      r_{i,2} & r_{i,3} & \dots & r_{i,n+1}\\
      \vdots & \vdots & \ddots & \vdots\\
      r_{i,n} & r_{i,n+1} & \dots & r_{i,2n-1}
    \end{bmatrix}\cdot w=\begin{bmatrix}
      (A^1v)_i\\
      (A^2v)_i\\
      \vdots\\
      (A^nv)_i
    \end{bmatrix}.
  \end{equation*}
  Note that the $n\times n$ matrix on the left-hand side above is a Hankel matrix whose
  $2n-1$ distinct entries come from the precomputed matrix $R$.
  As a result, the right-hand side vector can be computed using fast Hankel matrix-vector
  multiplication in $\Ot(n)$ time~(see, e.g.,~\cite{golub2013matrix}).
  This gives the desired values $A_{i,j},(A^2)_{i,j},\ldots,(A^{n-1})_{i,j}$ in item~(1).
  By doing this for all $i=1,\ldots,n$, we obtain the desired
  vectors $v,Av,\ldots,A^{n-1}v$ in $\Ot(n^2)$ time in item~(2).
\end{proof}

\section{Computing submatrices of $k$ first powers of a generic matrix}
Let $A\in \field^{n\times n}$ be a generic matrix.
As shown in Lemma~\ref{l:powers-vector}, one can compute a certain cell
$(i,j)$ of all the powers $A^1,\ldots,A^{n-1}$ in $\Ot(n)$ time via Hankel matrix-vector
multiplication as long as a Frobenius form of $A$
is given.
In this section, we generalize this as follows.
Let $1\leq h\leq n$ be an integer.
Let $S$ be a subset of rows and let~$T$ be a subset of columns of~$A$.
Our goal is to compute the $S\times T$ submatrices $(A^1)_{S,T},\ldots,(A^h)_{S,T}$.
One particularly interesting case is $S=T=\{1,\ldots,n\}$, where we want to explicitly
output the first $h$ matrix powers of $A$.
We prove:

\begin{theorem}\label{t:powers}
  Let $A\in \field^{n\times n}$ be generic and let $U\cdot C\cdot U^{-1}$ be its Frobenius form.
  Let $R=(r_{i,j})$ be the associated auxiliary matrix of Lemma~\ref{l:rep}.
  Let $S,T\subseteq [n]$.
  Let $h=\lfloor n^{\alpha}\rfloor$, $|S|=\lfloor n^{s}\rfloor$, ${|T|=\lfloor n^t\rfloor}$ for some $\alpha,s,t\in [0,1]$.
  Then, the submatrices $(A^1)_{S,T},\ldots,(A^h)_{S,T}$ can be computed
  using $\Ot(n^{\omega(s,1-\alpha,t)+\alpha})$ field operations.
\end{theorem}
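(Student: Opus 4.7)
The key observation is that by Lemma~\ref{l:rep} and the identity $A^k = U\cdot C^k\cdot U^{-1}$, every cell of $A^k$ can be written compactly in terms of the auxiliary matrix $R$: setting $V = U^{-1}$, we get
\[
(A^k)_{i,t} \;=\; \sum_{j=1}^n r_{i,\,k+j-1}\, V_{j,t}, \qquad i\in S,\; t\in T,\; k\in[1,h].
\]
For fixed $(i,t)$, the length-$h$ sequence $\bigl((A^k)_{i,t}\bigr)_{k=1}^h$ is a contiguous window of the correlation of two length-$n$ sequences, computable in $\tilde O(n)$ time (as in Lemma~\ref{l:powers-vector}). This naive pair-by-pair approach gives $\tilde O(|S|\,|T|\,n)$, which matches the target only when $\alpha=1$; for smaller $h$, the plan is to batch these correlations using fast rectangular matrix multiplication, by encoding all of them into one polynomial matrix product.

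Set $b=\lceil n/h\rceil$ and (padding with zeros if needed) re-index $j=(q-1)h+s$ with $q\in[b]$ and $s\in[h]$. Define polynomial matrices
\[
P(x)\in\field[x]^{|S|\times b}, \qquad Q(x)\in\field[x]^{b\times|T|}
\]
by $P_{i,q}(x)=\sum_{m=0}^{2h-1} r_{i,\,(q-1)h+1+m}\,x^m$ (treating entries with column index beyond $2n-1$ as zero) and $Q_{q,t}(x)=\sum_{s=1}^h V_{(q-1)h+s,\,t}\,x^{h-s}$ (treating entries with row index beyond $n$ as zero). A short coefficient-extraction computation shows that for every $i\in S$, $t\in T$, and $k\in[1,h]$,
\[
[x^{h+k-2}]\,[P(x)Q(x)]_{i,t} \;=\; (A^k)_{i,t}.
\]
The out-of-range entries of $R$ only pair up with zero-padded entries of $V_{[n],T}$, so they do not pollute these coefficients.

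It remains to compute the product $P(x)\cdot Q(x)$. I would use evaluation-interpolation: pick $\Theta(h)$ distinct evaluation points in $\field$ (or in a polynomial-size extension if $|\field|$ is too small). Multipoint evaluation takes $\tilde O\bigl((|S|+|T|)\,bh\bigr)$ time overall since all polynomials have degree $O(h)$. At each point we perform one scalar rectangular multiplication of an $|S|\times b$ matrix by a $b\times|T|$ matrix, costing $\mm(|S|,b,|T|)=O(n^{\omega(s,1-\alpha,t)})$; summed over the $\Theta(h)=\tilde O(n^\alpha)$ points, this gives $\tilde O(n^{\omega(s,1-\alpha,t)+\alpha})$. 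Finally each of the $|S|\cdot|T|$ product entries is interpolated from its $O(h)$ values in $\tilde O(h)$ time, for a total of $\tilde O(|S|\,|T|\,h)=\tilde O(n^{s+t+\alpha})$. Since $\omega(s,1-\alpha,t)$ is at least each of $s+(1-\alpha)$, $(1-\alpha)+t$, and $s+t$ (by the input and output dimensions of the scalar multiplications), all auxiliary costs are dominated by $\tilde O(n^{\omega(s,1-\alpha,t)+\alpha})$, yielding the claimed bound.

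The main (and essentially only) obstacle is the bookkeeping for the polynomial indexing: verifying that the two opposite shifts built into $P$ and $Q$ align so that exactly the coefficient of $x^{h+k-2}$ isolates $(A^k)_{i,t}$, and that padding beyond the valid ranges of $R$ and of the column submatrix $V_{[n],T}$ leaves these target coefficients untouched. Once the indexing is in place, the proof reduces to standard FFT-based polynomial arithmetic combined with one invocation of fast rectangular matrix multiplication per evaluation point.
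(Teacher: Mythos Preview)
Your proposal is correct and follows essentially the same approach as the paper: both encode the identity $(A^k)_{i,t}=\sum_j r_{i,k+j-1}(U^{-1})_{j,t}$ into a product of two polynomial matrices of sizes $|S|\times\lceil n/h\rceil$ and $\lceil n/h\rceil\times|T|$ with degree-$O(h)$ entries, then read off the desired values as specific coefficients of the product. The only cosmetic differences are a small index shift in your definitions of $P$ and $Q$ and your explicit use of evaluation--interpolation where the paper simply invokes $\tilde O(h)$-time polynomial arithmetic via FFT.
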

\begin{proof}
  Set $\Delta=\lceil n/h\rceil$.
  Put $U^{-1}=(g_{i,j})$. For all $i\in S$ and $j\in \{0,\ldots,\Delta-1\}$, let
  \begin{equation*}
    p_{i,j}(x)=r_{i,j\cdot h+1}\cdot x+r_{i,j\cdot h+2}\cdot x^2+\ldots +r_{i,j\cdot h+(2h-1)}\cdot x^{2h-1}
  \end{equation*}
  be a polynomial. Similarly, for all $i\in T$ and $j\in\{0,\ldots,\Delta-1\}$, let us introduce a polynomial
  \begin{equation*}
    q_{j,i}(x)=g_{j\cdot h+1,i}\cdot x^{h-1}+g_{j\cdot h+2,i}\cdot x^{h-2}+\ldots + g_{j\cdot h+h,i}\cdot x^0.
  \end{equation*}
  In the above, every value $r_{\cdot,\cdot}$ and $g_{\cdot,\cdot}$ that has not been defined
  is assumed to be equal to $0$.

  Each $p_{i,j}$ is a polynomial of degree $2h-1$, and each $q_{j,i}$ is a polynomial of degree $h-1$.
  Consider the polynomial matrices $P=(p_{i,j})\in \field[x]^{|S|\times \Delta}$ and $Q=(q_{j,i})\in \field[x]^{\Delta\times |T|}$.
  The product $P\cdot Q$ can be computed in $\Ot(n^{\omega(s,1-\alpha,t)}\cdot h)=\Ot(n^{\omega(s,1-\alpha,t)+\alpha})$
  time since arithmetic operations on polynomials of degree at most $h$ can be carried out
  in $\Ot(h)$ time~\cite{fft}.

  Now, for $i\in S$, $j\in T$, and $k\in \{h,\ldots,2h-1\}$ consider the coefficient $d_{i,j,k}$
  of $x^k$ in the polynomial $(P\cdot Q)_{i,j}$
  of degree at most $3h$. We have:
  \begin{equation*}
    d_{i,j,k}=\sum_{t=0}^{\Delta-1} \sum_{l=1}^{h} r_{i,t\cdot h+(k-h)+l}\cdot g_{t\cdot h+l,j}.
  \end{equation*}
  Now consider the element $A_{i,j}^k$, for $k\in[h]$:
  \begin{align*}
    A_{i,j}^k&=(UC^kU^{-1})_{i,j}=\sum_{z=1}^n (UC^k)_{i,z}\cdot g_{z,j}\\
    &=\sum_{z=1}^n r_{i,z+k-1}\cdot g_{z,j}=\sum_{t=0}^{\Delta-1}\sum_{l=1}^h r_{i,t\cdot h+(k-1)+l}\cdot g_{t\cdot h+l,j}=d_{i,j,k+h-1}.
  \end{align*}
  We conclude that all the required entries in the respective submatrices $S\times T$ of $A^1,\ldots,A^h$ are encoded as coefficients
  of the (polynomial) entries of the matrix $P\cdot Q$.
\end{proof}
Note that Lemma~\ref{l:rep} together with Theorem~\ref{t:powers} imply Theorem~\ref{t:frobenius-powers}.

\section{Maintaining an FNF under generic rank-1 updates}\label{s:rank1}
Let $A\in \field^{n\times n}$ be again a generic matrix. In this section, we consider the following problem.
Let a Frobenius normal form
of $A$ be given. Suppose $A$ is subject
to a \emph{rank-1 update}, i.e., $A$ is replaced with $A'=A+ab^T$ for some $a,b\in \field^{n\times 1}$.
We require that the update is also \emph{generic}, i.e., the obtained matrix $A'$ is also generic.
We would like to recompute an FNF
of the updated matrix $A'$ faster than from scratch which would
take $\Ot(n^{\omega})$ time.
In this section, we show:
\trankone*
Let us remark that the assumption that the FNFs of both $A$ and its transpose $A^T$ are maintained is merely
for simplicity of exposition. Recall that $A^T$ is generic if and only if $A$ is generic.

We now describe the update procedure.
Our goal is to compute, for some vectors $u,v\in \field^{n\times 1}$ chosen randomly, the $2n$ vectors
$(A+ab^T)^i\cdot u$, and $v^T\cdot (A+ab^T)^i$, for $i=0,\ldots,n-1$.
By Theorem~\ref{t:random-vector}~and~Lemma~\ref{l:eberly}, this will give an explicit FNF of $A+ab^T$
using $\Ot(n^2)$ additional time with probability $\Omega(1/\log{n})$. 
By applying the same procedure to $A^T$, we will compute an FNF of $(A')^T=A^T+ba^T$ as well.
After trying $O(\polylog{n})$ times, we will succeed with high probability.

In the following, will only focus on computing all $(A+ab^T)^i\cdot u$, since all $v^T\cdot (A+ab^T)^i$ can be
computed by proceeding symmetrically with the transpose $A^T$.

First of all, using Lemmas~\ref{l:rep}~and~\ref{l:powers-vector}
applied to $A$ and its FNF, we compute
the vectors $\delta_i:=A^iu$, for $i=0,\ldots,n-1$
in $\Ot(n^2)$ time.
Similarly, we compute
the vectors $\alpha_i:=A^ia$, for $i=0,\ldots,n-1$
within the same time bound.

Note that for any $k=0,\ldots,n-1$, we can expand $X_k:=(A+ab^T)^ku$ as follows:
\begin{align*}
  X_k=(A+ab^T)^ku&=A^ku+\sum_{l=1}^k \left(A^{l-1}\cdot ab^T\cdot (A+ab^T)^{k-l}\cdot u\right)\\
  &=A^ku+\sum_{l=1}^k \left((A^{l-1}a)\cdot (b^T\cdot (A+ab^T)^{k-l}\cdot u)\right)\\
  &=\delta_k+\sum_{l=0}^{k-1} \alpha_{l}\cdot (b^T\cdot X_{(k-1)-l})\\
  &=\delta_k+\sum_{l=0}^{k-1} \alpha_{(k-1)-l}\cdot (b^T\cdot X_{l}).
\end{align*}
This way we obtain a dynamic programming formula for computing the desired subsequent
vectors $X_0,\ldots,X_{n-1}$. The right-hand side of the recurrence involves
a convolution of scalars (obtained from the previous terms) with the precomputed vectors $\alpha_l$.
Recurrences of this kind can be evaluated efficiently
using a folklore combination of FFT and a divide-and-conquer approach, as follows.

Let us initialize vectors $X'_0,\ldots,X'_{n-1}\in \field^{n\times n}$ whose purpose
is to store partially computed vectors $X_0,\ldots,X_{n-1}$.
Initially, put $X'_i=\delta_i$.
Let us define a recursive procedure $F(p,q)$, for $0\leq p\leq q\leq n-1$ such
that, assuming that for all $j=p,\ldots,q$ we have:
\begin{equation}\label{eq:prereq}
  X_j'=\delta_j+\sum_{l=0}^{p-1}\alpha_{(j-1)-l}\cdot (b^T\cdot X_l),
\end{equation}
updates $X_p',\ldots,X_q'$ so that $X'_j=X_j$ for all $j=p,\ldots,q$.
Observe that the prerequisite of $F(0,n-1)$ is satisfied by the initial
values of $X'_i$. Moreover, after $F(0,n-1)$ completes we are done.

Let us now discuss how $F(p,q)$ is implemented. If $p=q$, there is nothing
to be done, as the prerequisite~\eqref{eq:prereq} already implies
that $X_p'=X_p$.
Suppose $p<q$ and let $m=\lfloor (p+q)/2\rfloor$.
We have $p\leq m<q$.
First, $F(p,m)$ is called; note that the prerequisite~\eqref{eq:prereq} of that call holds.
Afterwards, each vector $X_l'$, for $l\in \{p,\ldots,m\}$, equals the respective
vector $X_l'$.
The next step is to compute the scalars $d_l:=b^T\cdot X_l$ for all such $l$
in $O((m-p)n)=O((q-p)n)$ time.

Subsequently, define the following polynomials $\in\field[x]$ of degree at most $q-p$:
\begin{align*}
  Q(x)&=\sum_{i=0}^{m-p}d_{p+i}\cdot x^i,\\
  P_k(x)&=\sum_{i=0}^{q-p}\alpha_{i,k}\cdot x^i,
\end{align*}
where $\alpha_{i,k}$, for $k=1,\ldots,n$, is the $k$-th coordinate of $\alpha_i$.
For each $k=1,\ldots,n$, compute the polynomial $P_k(x)Q(x)$
using FFT~\cite{fft} in $\Ot(q-p)$ time. Through all $k$, this takes $\Ot((q-p)n)$ time. Observe
that the coefficient $c_{k,i}$ of $x^i$ in $P_k(x)Q(x)$ equals
\begin{equation*}
  \sum_{l=\max(0,i-q+p)}^{\min(i,m-p)} \alpha_{i-l,k}\cdot d_{p+l}=\sum_{l=\max(p,i-q+2p)}^{\min(p+i,m)}\alpha_{i+p-l,k}\cdot d_l.
\end{equation*}
In particular, for $j=m+1,\ldots,q$, $j-p-1\in [0,q-p-1]$.
So we get $(j-p-1)+p=j-1\geq m$ and $(j-p-1)-q+2p\leq p-1$.
Thus, the coefficient of $x^{j-p-1}$ in $P_k(x)Q(x)$
equals
\begin{equation*}
  c_{k,j-p-1}=\sum_{l=p}^{m}\alpha_{(j-1)-l,k}\cdot d_l.
\end{equation*}
We conclude that the following column vectors can be retrieved from the computed polynomials:
\begin{align*}
  \Delta_j:=
  {
  \begin{bmatrix}
    c_{1,j-p-1} &
    c_{2,j-p-1} &
    \ldots &
    c_{n,j-p-1}
  \end{bmatrix}}^T=\sum_{l=p}^{m} \alpha_{(j-1)-l}\cdot (b^T\cdot X_l).
\end{align*}
After adding $\Delta_j$ to $X'_j$ for each $j=m+1,\ldots,q$,
we have
\begin{equation*}
  X_j'=\delta_j+\sum_{l=0}^{p-1} \alpha_{(j-1)-l}\cdot (b^T\cdot X_l)+\sum_{l=p}^m\alpha_{(j-1)-l}\cdot (b^T\cdot X_l)=\delta_j+\sum_{l=0}^m \alpha_{(j-1)-l}\cdot (b^T\cdot X_l).
\end{equation*}
As a result, the prerequisite of the call $F(m+1,q)$ is satisfied
and we can call $F(m+1,q)$ to update $X'_{m+1},\ldots,X'_q$
so that they store $X_{m+1},\ldots,X_q$ respectively.
The correctness of the procedure $F(p,q)$ follows easily by induction on $q-p$.

The time $T(N)$ needed to compute $F(l,r)$ when $r-l=N$ clearly
satisfies:
\begin{align*}
  T(1)&=O(1)\\
  T(N)&\leq T(\lceil N/2 \rceil)+T(\lfloor N/2 \rfloor)+\Ot(Nn).
\end{align*}
Therefore, we get $T(N)=\Ot(Nn)$. Since the root call $F(0,n-1)$ satisfies $r-l=n$,
all the desired vectors $X_k=(A+ab^T)^k$ for $k=0,\ldots,n-1$ are
computed in $\Ot(n^2)$ time as desired.
\section{Accessing matrix powers under batch element updates}\label{s:inverse}
This section is devoted to proving the following lemma on computing entries
of the powers of a matrix $A$ under single-element updates to $A$. This result is implicit in the
known works on dynamic matrix inverse~\cite{BrandNS19, Sankowski04, Sankowski05}.

\begin{lemma}\label{l:matrix-inverse}
  Let $A\in \field^{n\times n}$. Let $\Psi=\{(u_1,v_1,y_1),\ldots,(u_f,v_f,y_f)\}\subseteq [n]\times [n]\times \field$
  be such that all pairs $(u_i,v_i)$ are distinct.
  Let $S=\{u_1,\ldots,u_f\}$ and $T=\{v_1,\ldots,v_f\}$.
  Suppose the matrix $B$ is obtained from $A$ by setting $A_{u_i,v_i}:=y_i$ for all $i=1,\ldots,f$.
  Let $h\geq 1$ be an integer.

  Given the submatrices $(A^1)_{T,S},\ldots,(A^h)_{T,S}$, one can preprocess $\Psi$
  in $\Ot(f^\omega\cdot h)$ time, so that the following queries are supported.
  Given any $X,Y\subseteq [n]$ along with the submatrices\linebreak
  $(A^1)_{X,S},\ldots,(A^h)_{X,S}$,
  $(A^1)_{T,Y},\ldots,(A^h)_{T,Y}$, and $(A^1)_{X,Y},\ldots,(A^h)_{X,Y}$,
  compute the submatrices $(B^1)_{X,Y},\ldots,(B^h)_{X,Y}$.
  The query time is $\Ot((\mm(|X|,f,|Y|)+\mm(f,f,\min(|X|,|Y|))\cdot h)$.
\end{lemma}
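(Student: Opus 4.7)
My approach is to use the standard Sherman--Morrison--Woodbury (SMW) identity applied to the polynomial matrix $I - zB = (I - zA) - zE$, since the powers $(B^k)_{X,Y}$ for $k \leq h$ are precisely the coefficients of the $(X,Y)$-block of $(I - zB)^{-1}$ modulo $z^{h+1}$. Writing the update as $E = B - A = P E' Q$, where $P\in\field^{n\times f}$ and $Q\in\field^{f\times n}$ are selector matrices for $S$ and $T$ and $E'\in\field^{f\times f}$ stores the actual changes $y_i - A_{u_i,v_i}$, I apply SMW to $I - zB = M - (zP)(E'Q)$ with $M := I - zA$. Combined with the commutation identity $(I - zE'R)^{-1}E' = E'(I - zRE')^{-1}$, this gives
\begin{equation*}
  \mathbf{B}_{X,Y}(z) \;\equiv\; \mathbf{A}_{X,Y}(z) + z \cdot \mathbf{A}_{X,S}(z) \cdot E' \cdot K(z) \cdot \mathbf{A}_{T,Y}(z) \pmod{z^{h+1}},
\end{equation*}
where $\mathbf{A}_{U,V}(z) := \sum_{k=0}^{h}(A^k)_{U,V}\, z^k$ for each pair $(U,V)\in\{(X,S),(T,Y),(X,Y),(T,S)\}$, and $K(z) := (I - z\,\mathbf{A}_{T,S}(z)\, E')^{-1} \bmod z^{h+1}$ is an $f\times f$ polynomial matrix of degree at most $h$. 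All four $\mathbf{A}_{U,V}$ are directly supplied by the lemma's inputs (the trivial $k=0$ terms $I_{U,V}$ can be filled in for free).

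The preprocessing amounts to computing $K(z)$ from $\mathbf{A}_{T,S}(z)$ and $E'$. Since $I - z\,\mathbf{A}_{T,S}(z)\, E'$ has constant term $I$, I apply Newton iteration for power-series inversion: starting from $K^{(1)} := I$, the doubling update
\begin{equation*}
  K^{(2d)} := K^{(d)}\bigl(2I - (I - z\,\mathbf{A}_{T,S}E')\,K^{(d)}\bigr) \bmod z^{2d}
\end{equation*}
raises the precision from $d$ to $2d$ at the cost of two multiplications of $f\times f$ polynomial matrices of degree $O(d)$. Each such multiplication takes $\Ot(f^\omega d)$ time by evaluating the two operands via FFT at $O(d)$ points and performing $O(d)$ scalar $f\times f$ dense matrix multiplications in the evaluation domain. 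Summing over the $O(\log h)$ doubling rounds yields a geometric sum totalling $\Ot(f^\omega h)$, as required.

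For a query, I compute $\mathbf{A}_{X,S}(z) \cdot E' \cdot K(z) \cdot \mathbf{A}_{T,Y}(z) \bmod z^{h+1}$ by associating the product in whichever order yields the cheaper first multiplication. If $|X|\leq|Y|$, I first form $\mathbf{A}_{X,S}\cdot(E'K)$---an $|X|\times f$ by $f\times f$ polynomial matrix product of cost $\Ot(\mm(|X|,f,f)\cdot h) = \Ot(\mm(f,f,|X|)\cdot h)$ by symmetry of $\omega$---and then multiply by $\mathbf{A}_{T,Y}$ at cost $\Ot(\mm(|X|,f,|Y|)\cdot h)$; symmetrically, if $|Y|<|X|$, I form $K\cdot\mathbf{A}_{T,Y}$ first. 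Either way the total query cost is $\Ot\bigl((\mm(|X|,f,|Y|) + \mm(f,f,\min(|X|,|Y|)))\cdot h\bigr)$, matching the stated bound. Adding $\mathbf{A}_{X,Y}$ and reading off the coefficients of $z^1,\ldots,z^h$ yields the required submatrices $(B^1)_{X,Y},\ldots,(B^h)_{X,Y}$.

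The only step I expect to be non-routine is the $\Ot(f^\omega h)$ preprocessing bound: while the SMW derivation, the $(I-zE'R)^{-1}E' = E'(I-zRE')^{-1}$ commutation, and the associativity-based query optimization are all standard linear algebra, the Newton-iteration analysis requires simultaneously invoking FFT-based polynomial multiplication \emph{and} fast dense matrix multiplication in the evaluation domain to achieve $\Ot(f^\omega d)$ per $f\times f$ polynomial-matrix multiplication of degree $d$. This combination is classical but must be stated explicitly to verify that the geometric sum collapses to $\Ot(f^\omega h)$ rather than a larger bound like $\Ot(f^\omega h^2)$ that the more naive iterative expansion $K = \sum_{j\geq 0}(z\mathbf{A}_{T,S}E')^j$ would give.
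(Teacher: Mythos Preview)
Your proof is correct and follows essentially the same approach as the paper: apply Sherman--Morrison--Woodbury to $(I-zA)$ perturbed by a rank-$f$ update, precompute the $f\times f$ ``core'' inverse modulo $z^{h+1}$ in $\Ot(f^\omega h)$ time, and answer queries by multiplying out the SMW formula with the appropriate association order. The only cosmetic difference is that the paper computes the core inverse $P=(I-X\Delta VZU)^{-1}$ via the telescoping product $\prod_{j}(I+(\Delta XVZU)^{2^j})$ for the truncated geometric series, whereas you use Newton iteration; both are $O(\log h)$ multiplications of degree-$O(h)$ $f\times f$ polynomial matrices and give the same $\Ot(f^\omega h)$ bound. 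One tiny notational point: since the $u_i$ (resp.\ $v_i$) need not be distinct, your $\mathbf{A}_{X,S}$ etc.\ should really be $|X|\times f$ (indexed by the tuple $(u_1,\dots,u_f)$, with possible repeats) rather than $|X|\times|S|$, so that the product with the $f\times f$ matrix $E'K$ is well-defined---the paper handles this by working with $(ZU)_{X,[f]}$ rather than $Z_{X,S}$ directly.
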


  Consider the ring
  $\field[X]/(X^{h+1})$ of polynomials in $X$ over $\field$ modulo $X^{h+1}$.
  Consider a matrix polynomial $I-X\cdot A$, where $X$ is a variable.
  $I-X\cdot A$ can be also viewed as a matrix of degree $\leq 1$ polynomials in $X$.
  The inverse $(I-X\cdot A)^{-1}$
  exists in the ring $\left(\field[X]/(X^{h+1})\right)^{n\times n}$ of polynomial matrices modulo $X^{h+1}$
  and equals:
  \begin{equation}
    (I-X\cdot A)^{-1}=I+X\cdot A+X^2\cdot A^2+\ldots +X^h\cdot A^h.
  \end{equation}
  The above equation can be seen to hold by multiplying both sides (modulo $X^{h+1}$) by $I-X\cdot A$.
  
  For each $i=1,\ldots,f$, let $\delta_i:=y_i-A_{u_i,v_i}$, so that $B$ can be
  seen to be obtained from $A$ \linebreak by adding $\delta_i$ to a corresponding element $A_{u_i,v_i}$.
  Let us put
  \begin{align*}
    U&=\begin{bmatrix} e_{u_1} & e_{u_2} & \ldots & e_{u_f}\end{bmatrix}\in \field^{n\times f},\\
      V&={\begin{bmatrix} e_{v_1} & e_{v_2} & \ldots & e_{v_f}\end{bmatrix}}^T\in \field^{f\times n},\\
        \Delta&=\operatorname{diag}(\delta_1,\ldots,\delta_f)\in \field^{f\times f}.
  \end{align*}
  Then, we have $B=A+U\cdot \Delta\cdot V$. Consider the inverse of $I-X\cdot B$ in $\field[X]/(X^{h+1})$.
  We have:
  \begin{equation*}
    (I-X\cdot B)^{-1}=(I-X\cdot (A+U\Delta V))^{-1}=(I-X\cdot A+U\cdot (-X\Delta)\cdot V)^{-1}.
  \end{equation*}
  Put $Z:=(I-X\cdot A)^{-1}=(z_{i,j})$.
  By the Sherman-Morrison-Woodbury formula~(see, e.g.,~\cite{SMW}), we have:
  \begin{align*}
    (I-X\cdot B)^{-1}&=Z-Z\cdot U\cdot (I+(-X\Delta)VZU)^{-1}\cdot (-X\Delta)V\cdot Z\\
    &=Z-(ZU)\cdot (I-X\Delta (VZU))^{-1}\cdot (-X\Delta)\cdot (VZ).
  \end{align*}
  Above, the matrix $ZU$ ($VZ$) selects the subsequent columns $u_1,\ldots,u_f\in S$ (rows $v_1,\ldots,v_f\in T$, resp.) of $Z$ and arranges them left to right (top to bottom, resp.).
  Similarly, $(VZU)_{i,j}=Z_{v_i,u_j}$ and thus $VZU$ can be read from $Z_{T,S}$ in optimal $O(f^2)$ time.
  
  The identity holds if $(I-X\Delta (VZU))$ is invertible in $\field[X]/(X^{h+1})$. It indeed is, as:
  \begin{equation*}
    (I-X\Delta \cdot (VZU))^{-1}=\sum_{i=0}^h (\Delta VZU)^i \cdot X^i.
  \end{equation*}
  Moreover, since
  \begin{equation*}
    \sum_{i=0}^h (\Delta VZU)^i \cdot X^i=\left(\prod_{j=1}^{\lceil \log{h}\rceil }\left(I+(\Delta XVZU)^{2^j}\right)\right)\bmod{X^{h+1}},
  \end{equation*}
  the matrix $P=(I-X\Delta (VZU))^{-1}$ can be computed, given $Z_{T,S}$, using $\Ot(1)$ multiplications of $f\times f$ matrices
  whose entries are polynomials of degree at most $h$ (and the polynomial arithmetic is performed modulo $X^{h+1}$),
  that is, in $\Ot(f^\omega\cdot h)$ time. Recall that since $Z=I+A+\ldots+A^h$,
  the polynomial matrix $Z_{T,S}$ is encoded using the given submatrices $(A^1)_{T,S},\ldots,(A^h)_{T,S}\in \field^{f\times f}$.

  Assuming the matrix $P$ is precomputed, given $X,Y\subseteq [n]$, we can compute
  $(I-X\cdot B)^{-1}_{X,Y}$ in $\Ot(f^2\cdot h)$ time using (rectangular) matrix multiplication as:
  \begin{equation*}
    (I-X\cdot B)^{-1}_{X,Y}=Z_{X,Y}-(ZU)_{X,[f]}\cdot P \cdot (-X\Delta)\cdot (VZ)_{[f],Y}.
  \end{equation*}
  Here, the matrixces $(ZU)_{X,[f]}$ and $(VZ)_{[f],Y}$ can be read from the submatrices $Z_{X,S}$, $Z_{T,Y}$
  respectively.
  The used order of multiplication
  depends on which of $X,Y$ is smaller.
  This expression can be evaluated as
  $((((ZU)_{X,[f]}\cdot P) \cdot (-X\Delta))\cdot (VZ)_{[f],Y}$ in $\Ot((\mm(|X|,f,f))+\mm(|X|,f,|Y|))\cdot h)$
  time, or as
  $(ZU)_{X,[f]}\cdot (P \cdot ((-X\Delta)\cdot (VZ)_{[f],Y}))$
  $\Ot((\mm(f,f,|Y|))+\mm(|X|,f,|Y|))\cdot h)$ time.

  Finally, recall that the submatrices $Z_{X,S}$, $Z_{T,Y}$, and $Z_{X,Y}$ are encoded by the
  submatrices $(A^1)_{X,S},\ldots,(A^h)_{X,S}$,
  $(A^1)_{T,Y},\ldots,(A^h)_{T,Y}$, and $(A^1)_{X,Y},\ldots,(A^h)_{X,Y}$, respectively.
  The desired submatrices $(B^1)_{X,Y},\ldots,(B^h)_{X,Y}$ are encoded by the
  obtained polynomial matrix $(I-X\cdot B)^{-1}_{X,Y}$.

\section{Representing the graph with a generic matrix}
Let $G=(V, E)$ be an unweighted digraph. Let $\xvarset$ be a set of variables
$\xvar_{u,v}$ indexed with pairs from $V\times V$. Let $\yvarset$ be a set of variables
$\yvar_v$ indexed with $V$.

An $n\times n$ \emph{symbolic adjacency matrix} $\asym(G)\in\field[\xvarset\cup \yvarset]^{n\times n}$ of $G$ is defined
as follows:
\begin{equation*}
  \asym(G)_{u,v}=\begin{cases}
    \xvar_{u,v} \cdot \yvar_v &\text{if }u=v\text{ or }uv\in E,\\
    0 &\text{otherwise}.
  \end{cases}
\end{equation*}
This section is devoted to proving the following:
\begin{theorem}\label{t:graph-matrix}
  For all $u,v\in V$, let us assign $\xvar_{u,v}$ a random element $x_{u,v}\in\field$. Similarly, for all
  $v\in V$ assign $\yvar_v$ a random element $y_v\in \field$. Suppose all these random samples are independent.

  Let $A\in \field^{n\times n}$ be a \emph{weighted adjacency matrix} obtained from $\asym(G)$ using this assignment.
  Then, with probability at least $1-n^4/|\field|$:
  \begin{enumerate}[label=(\arabic*)]
    \item $A$ is generic.
    \item For every $u,v\in V$ and $k\in [n-1]$, $A^k_{u,v}\neq 0$ if and only if there exists a path $u\to v$ in $G$
      of length at most $k$.
  \end{enumerate}
\end{theorem}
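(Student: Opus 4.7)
Both parts reduce to applying Schwartz--Zippel to polynomials in $\xvarset\cup\yvarset$, followed by a union bound. I will use the fact that $(\asym(G)^k)_{u,v}$ is a walk-generating polynomial in the graph $G^+$ obtained from $G$ by adding a self-loop at every vertex: every walk $u=w_0\to w_1\to\cdots\to w_k=v$ of length exactly $k$ in $G^+$ contributes the monomial $\prod_{i=0}^{k-1}\xvar_{w_i,w_{i+1}}\yvar_{w_{i+1}}$.

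For part (2), if no $u\to v$ path of length $\leq k$ exists in $G$, no walk of length $k$ exists in $G^+$ and $(\asym(G)^k)_{u,v}$ is identically zero. If such a path does exist, take any shortest one, $\pi=(u_0=u,\ldots,u_\ell=v)$ with $\ell\leq k$, and consider the walk $W_\pi$ that traverses $\pi$ and then self-loops $k-\ell$ times at $v$. I claim that the monomial $m_{W_\pi}$ appears in $(\asym(G)^k)_{u,v}$ with coefficient exactly $1$: any walk producing the same monomial uses the same edge-multiset (the edges of $\pi$ once each, plus $k-\ell$ copies of the self-loop at $v$), and since $\pi$ is simple, at each $u_i$ with $i<\ell$ the only outgoing edge in this multiset is $(u_i,u_{i+1})$, forcing the walk to traverse $\pi$ first and then self-loop at $v$; so $W_\pi$ is the unique contributor. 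Hence $(\asym(G)^k)_{u,v}$ is a non-zero polynomial of degree $2k$, and Schwartz--Zippel gives a failure probability of at most $2k/|\field|$. A union bound over the $n^2(n-1)$ triples $(u,v,k)\in V\times V\times [n-1]$ yields at most $n^3(n-1)/|\field|$ total failure for part (2).

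For part (1), I will certify genericity by exhibiting a cyclic vector for $A$. Let
\[
p(\xvarset,\yvarset) \;:=\; \det\bigl[\,\mathbf{1}\;\big|\;\asym(G)\,\mathbf{1}\;\big|\;\cdots\;\big|\;\asym(G)^{n-1}\mathbf{1}\,\bigr] \;\in\;\field[\xvarset\cup\yvarset],
\]
where $\mathbf{1}=(1,\ldots,1)^T$. If $p$ evaluates to a non-zero element, then $\mathbf{1}$ is a cyclic vector of $A$, whence $\mu_A=p_A$ and $A$ is generic. To show $p\not\equiv 0$, I exhibit a single specialization: set $\xvar^\star_{u,v}:=[u=v]$ and $\yvar^\star_1,\ldots,\yvar^\star_n$ to $n$ pairwise distinct non-zero field elements (passing to a finite extension of $\field$ if necessary, which is harmless since non-vanishing of $p$ is intrinsic to $p$ as an element of $\field[\xvarset\cup\yvarset]$, and the theorem's bound is vacuous when $|\field|$ is too small anyway). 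Under this specialization $\asym(G)$ becomes $\operatorname{diag}(\yvar^\star_1,\ldots,\yvar^\star_n)$, so the Krylov matrix is the Vandermonde matrix on the pairwise distinct values $\yvar^\star_v$, hence non-singular. Since $\deg p\leq 2(0+1+\cdots+(n-1))=n(n-1)$, Schwartz--Zippel yields at most $n(n-1)/|\field|$ failure for part (1). The two bounds combine to within the claimed $n^4/|\field|$.

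The main obstacle is part (1): ``$A$ is generic'' is only a Zariski-open property, so a proof requires encoding it as the non-vanishing of a concrete polynomial that admits a transparent witness assignment. Using the cyclic-vector formulation with the specific test vector $\mathbf{1}$, together with the ``zero out the off-diagonals, keep the $\yvar_v$ distinct'' specialization, is the key step, because it collapses the structural question about the arbitrary graph $G$ into a one-line Vandermonde computation that is insensitive to the graph's structure. Part (2) is essentially a standard Schwartz--Zippel argument once one observes that including the diagonal terms $\xvar_{v,v}\yvar_v$ in $\asym(G)$ is exactly what enables any shortest path to be padded with $v$-self-loops to length exactly $k$ in a monomial-unique way.
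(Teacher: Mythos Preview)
Your proof is correct. Part~(2) is essentially the paper's argument: both pad a shortest path with self-loops at the terminal vertex and argue the resulting monomial has coefficient~$1$; your phrasing via the edge-multiset is a minor variant of the paper's ``first differing index'' contradiction.

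Part~(1) is genuinely different. The paper proceeds in two stages: first it uses the $\xvar$-randomness and Schwartz--Zippel to ensure all leading principal minors of $\asym(G)_{|\xvarset=x,\yvarset=\mathbf{1}}$ are non-singular, and then invokes a lemma of Wiedemann stating that for such a matrix $B$, the discriminant of the characteristic polynomial of $B\cdot\operatorname{diag}(\yvar_1,\dots,\yvar_n)$ is a non-zero polynomial of degree at most $2n^3$ in the $\yvar$'s; a second Schwartz--Zippel application then gives distinct eigenvalues over $\bar{\field}$, hence $\mu_A=p_A$. Your approach instead encodes genericity directly as the non-vanishing of the Krylov determinant $p=\det[\mathbf{1}\mid\asym(G)\mathbf{1}\mid\cdots\mid\asym(G)^{n-1}\mathbf{1}]$ and witnesses $p\not\equiv 0$ by the diagonal specialization $\xvar_{u,v}\mapsto[u=v]$, which kills the graph structure entirely and reduces to a Vandermonde. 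This is more self-contained (no external Wiedemann lemma), yields a sharper degree bound ($n(n-1)$ versus $2n^3$), and treats both variable families in a single Schwartz--Zippel step. The paper's route, on the other hand, actually establishes the stronger conclusion that $p_A$ has $n$ distinct roots, and cleanly separates the roles of the two variable families ($\xvar$ for non-singularity of minors, $\yvar$ for separating eigenvalues), which may be conceptually useful elsewhere even though only $\mu_A=p_A$ is needed here.
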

We start by referring to the following lemma of Wiedemann~\cite{Wiedemann86}.
\begin{lemma}{\upshape\cite[Section V]{Wiedemann86}}\label{l:wiedemann}
  Let $B\in \field^{n\times n}$ be such that all $n$ leading principal minors of $B$ are non-singular.
  Let $\yvar_1,\ldots,\yvar_n$ be variables.
  Then the discriminant of the characteristic polynomial of $B\cdot \text{diag}(\yvar_1,\ldots,\yvar_n)$ is a non-zero
  polynomial in $\yvar_1,\ldots,\yvar_n$ of degree no more than $2n^3$.
\end{lemma}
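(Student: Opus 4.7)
The plan is to establish separately that $\operatorname{disc}(p_M)$ is a non-zero polynomial and that its total degree in $\yvar_1,\ldots,\yvar_n$ is at most $2n^3$, where $M:=B\cdot\operatorname{diag}(\yvar_1,\ldots,\yvar_n)$.

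The degree bound is a direct calculation. Since $M_{i,j}=B_{i,j}\yvar_j$, every entry of $M$ has degree at most $1$ in the $\yvar$'s, so the coefficients $a_0,\ldots,a_n$ of $p_M(t)=\det(tI-M)=\sum_k a_k t^{n-k}$ are polynomials in $\yvar_1,\ldots,\yvar_n$ of degree at most $n$ each; in fact
\begin{equation*}
  a_k=(-1)^k\sum_{I\subseteq [n],\,|I|=k}\det(B_I)\prod_{i\in I}\yvar_i.
\end{equation*}
Since $p_M$ is monic, its discriminant equals, up to sign, the determinant of the $(2n-1)\times(2n-1)$ Sylvester matrix of $p_M$ and $p_M'$, hence a polynomial in $a_1,\ldots,a_n$ of total degree at most $2n-1$. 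Consequently, as a polynomial in $\yvar_1,\ldots,\yvar_n$, $\operatorname{disc}(p_M)$ has degree at most $n(2n-1)<2n^3$.

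For non-vanishing I would induct on $n$. The base case $n=1$ is trivial: the discriminant of a linear polynomial equals $1$. For the inductive step, I specialize $\yvar_n:=0$. This kills the last column of $M$, so the last column of $tI-M$ becomes $t\cdot e_n$. Expanding the determinant along that column gives
\begin{equation*}
  p_M(t)\big|_{\yvar_n=0}=t\cdot p_{M'}(t),
\end{equation*}
where $M'=B'\cdot\operatorname{diag}(\yvar_1,\ldots,\yvar_{n-1})$ and $B'$ is the top-left $(n-1)\times(n-1)$ block of $B$. The first $n-1$ leading principal minors of $B$ are exactly the leading principal minors of $B'$, so the inductive hypothesis applies and $\operatorname{disc}(p_{M'})\not\equiv 0$. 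Moreover,
\begin{equation*}
  p_{M'}(0)=(-1)^{n-1}\det(M')=(-1)^{n-1}\det(B')\prod_{i<n}\yvar_i\not\equiv 0,
\end{equation*}
because $\det(B')$ is the $(n-1)$-th leading principal minor of $B$, which is non-singular by hypothesis. Using the classical factorization $\operatorname{disc}(qr)=\operatorname{disc}(q)\cdot\operatorname{disc}(r)\cdot\operatorname{Res}(q,r)^2$ with $q(t)=t$, $r(t)=p_{M'}(t)$, together with $\operatorname{disc}(t)=1$ and $\operatorname{Res}(t,p_{M'})=p_{M'}(0)$, we conclude $\operatorname{disc}(p_M)\big|_{\yvar_n=0}=p_{M'}(0)^2\cdot\operatorname{disc}(p_{M'})\not\equiv 0$, whence $\operatorname{disc}(p_M)$ itself is a non-zero polynomial in $\yvar_1,\ldots,\yvar_n$.

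The main subtlety is justifying that the discriminant commutes with the specialization $\yvar_n=0$; this is automatic because $p_M$ stays monic of degree $n$ under any assignment of the $\yvar$'s, so its coefficients can be specialized before or after plugging them into the universal discriminant polynomial in $a_1,\ldots,a_n$. The non-singular leading principal minors assumption enters only in two places: transferring the hypothesis to $B'$ at each inductive step, and ensuring that $p_{M'}(0)\neq 0$ so that the extra root $0$ introduced by the factor $t$ is distinct from the other $n-1$ roots of $p_{M'}$.
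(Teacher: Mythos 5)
Your proposal is correct. Note that the paper does not prove this lemma at all -- it is quoted as a black box from Wiedemann [Section V] -- so your argument is a self-contained substitute rather than a variant of a proof in the text, and it checks out: the degree bound follows as you say from writing $a_k=(-1)^k\sum_{|I|=k}\det(B_I)\prod_{i\in I}\yvar_i$ (each of degree at most $n$) and bounding the Sylvester determinant, which in fact gives the sharper bound $n(2n-1)\leq 2n^2\leq 2n^3$; and the non-vanishing induction is sound, since setting $\yvar_n=0$ zeroes the last column of $B\cdot\operatorname{diag}(\yvar_1,\ldots,\yvar_n)$, giving $p_M|_{\yvar_n=0}=t\cdot p_{M'}$ with $M'=B'\operatorname{diag}(\yvar_1,\ldots,\yvar_{n-1})$, the identity $\operatorname{disc}(qr)=\operatorname{disc}(q)\operatorname{disc}(r)\operatorname{Res}(q,r)^2$ for monic $q,r$ is characteristic-independent, $\operatorname{Res}(t,p_{M'})^2=p_{M'}(0)^2=\bigl(\det(B')\prod_{i<n}\yvar_i\bigr)^2\not\equiv 0$ by the non-singularity of the $(n-1)$-st leading minor, and the product of two non-zero polynomials is non-zero in the integral domain $\field[\yvar_1,\ldots,\yvar_{n-1}]$. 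You also correctly flagged the one genuine subtlety, namely that specialization commutes with taking the discriminant because $p_M$ stays monic of degree $n$, so the universal discriminant polynomial in the coefficients can be evaluated before or after the substitution. Two small observations: your argument never uses the non-singularity of the $n$-th leading minor $\det(B)$ itself, so you in fact prove a marginally stronger statement than quoted (which is consistent with the lemma, whose hypothesis is simply not tight), and your degree bound is polynomially better than the $2n^3$ stated; both are harmless since the paper only needs the stated, weaker form for the Schwartz--Zippel application.
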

With Lemma~\ref{l:wiedemann} in hand,
let us prove that the matrix $A$ is generic.
Let $\asym(G)_{|\xvarset=x}\in \field[\yvarset]^{n\times n}$ be obtained from $\asym(G)$ by setting
all $\xvar_{u,v}$ to $x_{u,v}$.
Let $\asym(G)_{|\yvarset=\mathbf{1}}\in \field[\xvarset]^{n\times n}$ ($\asym(G)_{|\xvarset=x,\yvarset=\mathbf{1}}\in \field^{n\times n}$, resp.) be obtained from $\asym(G)$ ($\asym(G)_{|\xvarset=x}$, resp.) by setting
all $\yvar_v$ to $1$.

\begin{lemma}
  With probability at least $1-n^2/|\field|$, all leading principal minors of $\asym(G)_{|\xvarset=x,\yvarset=\mathbf{1}}$
  are non-singular.
\end{lemma}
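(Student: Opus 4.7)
The plan is to expand each leading principal minor as a polynomial in the variables $\tilde{x}_{u,v}$ and show it is a nonzero polynomial of low degree, then invoke Schwartz--Zippel and a union bound.

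First I would observe that after setting $\tilde{Y}=\mathbf{1}$, the $(u,v)$-entry of $\tilde{A}(G)$ becomes $\tilde{x}_{u,v}$ when $u=v$ or $uv\in E$, and $0$ otherwise. In particular every diagonal entry is a single distinct indeterminate $\tilde{x}_{v,v}$, while every off-diagonal entry is either $0$ or a single indeterminate $\tilde{x}_{u,v}$ with $u\neq v$. Consequently, for any $k\in[n]$, the $k$-th leading principal minor $M_k$ is a polynomial of total degree exactly $k$ in the variables $\{\tilde{x}_{u,v}:u,v\in[k]\}$.

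The key step is to argue $M_k\not\equiv 0$. Expanding via the Leibniz formula, $M_k=\sum_{\sigma\in S_k}\mathrm{sgn}(\sigma)\prod_{i=1}^{k}\tilde{A}(G)_{i,\sigma(i)}$. The identity permutation contributes the monomial $\prod_{i=1}^{k}\tilde{x}_{i,i}$ with coefficient $+1$. Any other permutation $\sigma$ has some index $i$ with $\sigma(i)\neq i$, so its contribution (if nonzero) is divisible by the off-diagonal variable $\tilde{x}_{i,\sigma(i)}$, which is algebraically independent from the diagonal variables $\tilde{x}_{j,j}$. Hence no other permutation can produce the monomial $\prod_i \tilde{x}_{i,i}$, and this monomial survives with coefficient $1$ in $M_k$, so $M_k$ is a nonzero polynomial of degree $k\leq n$.

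Applying the Schwartz--Zippel lemma to the independent uniform assignment $\tilde{x}_{u,v}\mapsto x_{u,v}\in\field$, the probability that $M_k$ evaluates to $0$ is at most $k/|\field|\leq n/|\field|$. A union bound over all $n$ leading principal minors yields a failure probability of at most $n\cdot n/|\field|=n^2/|\field|$, which gives the claim. The main subtlety is purely notational: making sure that the diagonal entries remain nonzero after substituting $\tilde{Y}=\mathbf{1}$ (which they do, since the definition of $\tilde{A}(G)$ always treats $u=v$ as a ``present'' entry), so the identity-permutation monomial is genuinely present.
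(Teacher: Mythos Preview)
Your proof is correct and follows essentially the same approach as the paper: identify the diagonal monomial $\prod_{i=1}^k \tilde{x}_{i,i}$ as a surviving term in the $k$-th leading principal minor, apply Schwartz--Zippel to bound the vanishing probability by $k/|\field|\le n/|\field|$, and union-bound over the $n$ minors. Your Leibniz-formula justification for why the diagonal monomial cannot be cancelled is slightly more explicit than the paper's, but the argument is otherwise identical.
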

\begin{proof}
  Identify $V$ with $[n]$.
  Consider a $k\times k$ leading principal minor
  $\asym_k(G)_{|\yvarset=\mathbf{1}}$
  of  $\asym(G)_{|\yvarset=\mathbf{1}}$.
  $\det(\asym_k(G)_{|\yvarset=\mathbf{1}})$ is a polynomial of degree $k$ in $\xvarset$
  containing a monomial $\prod_{i=1}^k\xvar_{i,i}$, and thus is not a zero polynomial.
  Since the assignment $\xvarset=x$ is random, by the Schwartz-Zippel lemma~\cite{Schwartz80,Zippel79},
  $\det(\asym_k(G)_{|\xvarset=x,\yvarset=\mathbf{1}})\neq 0$ with probability at least $1-k/|\field|\geq 1-n/|\field|$.
  By the union bound, the probability that all the $n$ leading principal minors
  are non-singular is at least $1-n^2/|\field|$.
\end{proof}
The following corollary proves item~(1) of Theorem~\ref{t:graph-matrix}.
\begin{corollary}
  With probability at least $1-n^4/|\field|$, $A$ is generic.
\end{corollary}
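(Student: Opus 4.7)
The plan is to combine the preceding lemma on leading principal minors with Wiedemann's Lemma~\ref{l:wiedemann} and a Schwartz-Zippel argument on the $\yvarset$ variables. Write $B = \asym(G)_{|\xvarset=x,\yvarset=\mathbf{1}} \in \field^{n\times n}$, so that $A = B\cdot \operatorname{diag}(y_1,\ldots,y_n)$. Since the random assignments $\xvarset = x$ and $\yvarset = y$ are independent, I can condition on $\xvarset = x$ first.

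By the previous lemma, with probability at least $1 - n^2/|\field|$ (over $\xvarset = x$) the matrix $B$ has all $n$ leading principal minors non-singular. Conditioning on this event and fixing $B$, Lemma~\ref{l:wiedemann} applies: the discriminant $\Delta(\yvar_1,\ldots,\yvar_n)$ of the characteristic polynomial of $B\cdot\operatorname{diag}(\yvar_1,\ldots,\yvar_n)$ is a \emph{non-zero} polynomial of total degree at most $2n^3$ in the $\yvar$ variables. Since the substitution $\yvarset = y$ is uniformly random and independent of $\xvarset$, the Schwartz-Zippel lemma~\cite{Schwartz80,Zippel79} yields
\begin{equation*}
  \Pr\bigl[\Delta(y_1,\ldots,y_n)=0 \mid B \text{ has non-singular leading minors}\bigr] \leq \frac{2n^3}{|\field|}.
\end{equation*}

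To conclude, I need the standard linear-algebra fact that if $\Delta(y_1,\ldots,y_n) \neq 0$, then the characteristic polynomial $p_A$ of $A = B\cdot\operatorname{diag}(y_1,\ldots,y_n)$ has no repeated roots in the algebraic closure $\overline{\field}$. Consequently $A$ has $n$ distinct eigenvalues over $\overline{\field}$, so its minimal polynomial $\mu_A$ must vanish at each of them; since $\mu_A$ divides $p_A$ and $\deg p_A = n$, we get $\mu_A = p_A$, i.e., $A$ is generic (per the definition given at the start of Section~\ref{s:frobenius}).

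Finally, a union bound over the two failure events gives total failure probability at most $n^2/|\field| + 2n^3/|\field| \leq n^4/|\field|$ (for $n$ at least a small constant; smaller $n$ can be handled trivially since the bound $1-n^4/|\field|$ is non-trivial only when $|\field| > n^4$). The main conceptual step is the appeal to Wiedemann's discriminant bound, which is precisely what lets us decouple the analysis: the $\xvarset$ randomness secures the hypothesis of Lemma~\ref{l:wiedemann}, and the $\yvarset$ randomness, applied via Schwartz-Zippel to a polynomial of polynomially bounded degree, secures genericity itself. This completes the proof of item~(1) of Theorem~\ref{t:graph-matrix}.
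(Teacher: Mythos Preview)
Your proof is correct and follows essentially the same approach as the paper: condition on the $\xvarset$ assignment making all leading principal minors of $B$ non-singular (the preceding lemma), invoke Wiedemann's Lemma~\ref{l:wiedemann} to get a non-zero discriminant polynomial of degree at most $2n^3$ in the $\yvar$ variables, apply Schwartz--Zippel, and conclude $\mu_A=p_A$ from the distinct-roots property. Your treatment of the union bound and of the independence between the $\xvarset$ and $\yvarset$ assignments is in fact slightly more explicit than the paper's own write-up.
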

\begin{proof}
  We simply apply Lemma~\ref{l:wiedemann} to the matrix $\asym(G)_{|\xvarset=x,\yvarset=\mathbf{1}}$
  and hence obtain that the discriminant of the characteristic polynomial of
  $\asym(G)_{|\xvarset=x}=\asym(G)_{|\xvarset=x,\yvarset=\mathbf{1}}\cdot \text{diag}(\yvar_1,\ldots,\yvar_n)$
  is a non-zero polynomial of degree no more than $2n^3$.
  As a result, if one randomly assigns field elements to $\yvar_1,\ldots,\yvar_n$, the
  discriminant of the characteristic polynomial $p_A(t)$ of $A$
  is non-zero with probability at least $1-2n^3/|\field|\geq 1-n^4/|\field|$
  by the Schwartz-Zippel lemma.
  Equivalently, $p_A(t)$ has $n$ distinct roots in an algebraically
  closed extension $\bar{\field}$ of $\field$.
  As the minimal and characteristic polynomials of a matrix have the same roots,
  we obtain that $p_A\equiv \mu_A$ when $A$ is seen as an $n\times n$ matrix over~$\bar{\field}$.
  But it is well-known that neither the characteristic- nor the minimal
  polynomial of a matrix depends on the base field, and consequently
  $p_A\equiv \mu_A$ even if $A$ is seen as a matrix over $\field$.
\end{proof}
Let us now move to proving item~(2) of Theorem~\ref{t:graph-matrix}.
\begin{lemma}\label{l:entry-prob}
  Let $u,v\in V$ and $k\in [n-1]$. Then, with probability at least $1-2k/|\field|$,
  $A_{u,v}^k\neq 0$ if and only if there exists a $u\to v$ path of length
  no more than $k$ in $G$.
\end{lemma}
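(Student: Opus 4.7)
Expanding the matrix product, $A_{u,v}^k$ equals the evaluation, under the random assignment, of the polynomial
\begin{equation*}
P(\xvarset,\yvarset) \;=\; \sum_{(u=w_0,w_1,\ldots,w_k=v)} \prod_{i=1}^{k}\xvar_{w_{i-1},w_i}\,\yvar_{w_i},
\end{equation*}
where the sum ranges over walks of length exactly $k$ from $u$ to $v$ in $G$ augmented with a self-loop at every vertex (which $\asym(G)$ encodes via its diagonal). Each term has total degree $2k$, so $\deg P\leq 2k$. The direction ``no short path $\Rightarrow A_{u,v}^k=0$'' is deterministic: contracting any length-$k$ walk gives a $u\to v$ path of length at most $k$, so the absence of such a path makes the sum empty and forces $P\equiv 0$. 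The plan for the reverse direction is to show that $P$ is a nonzero polynomial whenever some $u\to v$ path of length at most $k$ exists, and then to invoke the Schwartz--Zippel lemma.

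Suppose a $u\to v$ path of length at most $k$ exists and let $Q=(u,p_1,p_2,\ldots,p_{\ell-1},v)$ be a shortest one (of length $\ell\leq k$), which is necessarily simple. Extend $Q$ into a walk $W$ of length exactly $k$ by appending $k-\ell$ copies of the self-loop $(v,v)$ at the end. The walk $W$ contributes to $P$ the monomial
\begin{equation*}
M \;=\; \xvar_{u,p_1}\xvar_{p_1,p_2}\cdots\xvar_{p_{\ell-1},v}\cdot\xvar_{v,v}^{k-\ell}\cdot \yvar_{p_1}\yvar_{p_2}\cdots\yvar_{p_{\ell-1}}\cdot \yvar_v^{k-\ell+1}.
\end{equation*}
The key step is to show that $W$ is the only walk producing $M$. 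Since $M$ determines the multiset of directed edges traversed, any other contributing walk must traverse exactly the edges of $W$ with the same multiplicities. Tracing such a walk from $u$, the first edge is forced to be $(u,p_1)$ because $u$ does not reappear later in $Q$ and so no edge of the multiset other than $(u,p_1)$ has tail $u$; inductively each $p_i$ has $(p_i,p_{i+1})$ as its only available outgoing edge in the residual multiset (using simplicity of $Q$); once $v$ is reached, the remaining $k-\ell$ steps must all be the self-loop $(v,v)$. Hence the coefficient of $M$ in $P$ is exactly $1$, and in particular $P\not\equiv 0$ over $\field$ (regardless of characteristic).

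By the Schwartz--Zippel lemma applied to the nonzero polynomial $P$ of degree at most $2k$, the probability that $P$ vanishes under the independent uniform assignment of the variables $\xvarset\cup\yvarset$ is at most $2k/|\field|$, which yields the stated bound. I expect the uniqueness-of-walk step (establishing that $M$ has coefficient exactly $1$, and thus that $P$ survives in any characteristic) to be the only delicate point; everything else is a degree count followed by a textbook Schwartz--Zippel application.
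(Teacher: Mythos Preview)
Your proposal is correct and follows essentially the same approach as the paper: expand $(\asym(G)^k)_{u,v}$ as a sum over length-$k$ walks in $G$ with self-loops, take a shortest $u\to v$ path padded by self-loops at $v$, argue its monomial has coefficient exactly $1$, and apply Schwartz--Zippel on a degree-$\leq 2k$ polynomial. The only stylistic difference is in the uniqueness step---the paper argues by taking the first index where two contributing tuples differ and deriving a contradiction via case analysis, whereas you observe that the $\xvar$-part of the monomial determines the edge multiset and then trace the walk forward using simplicity of the shortest path; both arguments are equivalent and equally short.
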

\begin{proof}
  From the definition of matrix multiplication, one can easily prove inductively that:
  \begin{equation}\label{eq:prod-sum}
    \asym(G)_{u,v}^k=\sum_{\substack{(u_1,\ldots,u_{k+1})\in V^{k+1}\\u_1=u\\u_{k+1}=v}} \left(\prod_{i=1}^k \asym(G)_{u_i,u_{i+1}}\right).
  \end{equation}
  Observe that for a given $(u_1,\ldots,u_{k+1})$ in the sum above, by the definition of $\asym(G)$, the product $\prod_{i=1}^k \asym(G)_{u_i,u_{i+1}}$
  is a non-zero monomial iff for all $i=1,\ldots,k$, either $u_i=u_{i+1}$ or $u_iu_{i+1}\in E$.

  Suppose $\asym(G)_{u,v}^k$ is a non-zero polynomial. Then, the sum~\eqref{eq:prod-sum} contains
  at least one non-zero monomial $\prod_{i=1}^k \asym(G)_{u_i,u_{i+1}}$ corresponding
  to a $(k+1)$-tuple $(u_1,\ldots,u_{k+1})$ with $u_1=u$ and $u_{k+1}=v$.
  If one assumes that $G$ contains self-loops, then $\prod_{i=1}^k \asym(G)_{u_i,u_{i+1}}$
  certifies the existence of a $u\to v$ path consisting of $k$ edges or self-loops in $G$.
  By eliminating the self-loops, one obtains that there exists a $u\to v$ path in $G$
  with at most $k$ edges.

  Now suppose that some shortest $u\to v$ path $P$ in $G$ has length $l\leq k$.
  We prove that  $\asym(G)_{u,v}^k$ is a non-zero polynomial in that case.
  Let $P=e_1\ldots e_l$, where $u_iv_i=e_i\in E$. Clearly, $u_i\neq v_i$ since~$P$ is shortest.
  Set $u_{j}:=v$ for all $j=l+1,\ldots,k+1$.
  Note that all $u_1,\ldots,u_{l+1}$
  are distinct since~$P$ cannot contain cycles.
  Consider the monomial $M=\left(\prod_{i=1}^l x_{u_i,u_{i+1}}\cdot y_{u_{i+1}}\right) \cdot (x_{v,v}y_v)^{k-l}$.
  We now argue that this monomial $M$ appears in the sum~\eqref{eq:prod-sum} precisely
  once, contributed by the tuple $(u_1,\ldots,u_{k+1})$.
  For contradiction, suppose there exists some other
  tuple $(u_1',\ldots,u_{k+1}')$ contributing the same monomial $M$.
  Let $j>1$ be the first index such that $u_j\neq u_j'$.

  If we had $u_j'=u_{j-1}'$, then $u_j\neq u_{j-1}$ since $u_{j-1}=u'_{j-1}$.
  As a result, $u_{j-1}'=u_{j-1}\neq v$.
  Consequently, the monomial contributed by $(u_1',\ldots,u_{k+1}')$
  contains a variable $x_{u_{j'},u_{j'}}\neq x_{v,v}$ that $M$ does not contain,
  a contradiction.

  Therefore, $u_j'\neq u_{j-1}'$. Then, the monomial contributed by
  $(u_1',\ldots,u_{k+1}')$ contains the variable $x_{u_{j-1},u_j'}$.
  $M$ contains only a single variable of the form $x_{u_{j-1},\cdot}$,
  namely $x_{u_{j-1},u_j}$. But $u_j\neq u'_j$, a contradiction.
  Therefore, $M$ is indeed a monomial of $\asym(G)_{u,v}^k$ and thus $\asym(G)_{u,v}^k$ is non-zero.

  We conclude that $\asym(G)_{u,v}^k$ is a non-zero polynomial if and only
  if there exists a $u\to v$ path of length at most $k$ in $G$.
  Finally, $\asym(G)_{u,v}^k\equiv 0$ implies $A_{u,v}^k=0$. On the other hand,
  if $\asym(G)_{u,v}^k\not\equiv 0$, then $\asym(G)_{u,v}^k$ has degree
  at most $2k$, so by the Schwartz-Zippel lemma,
  $A_{u,v}^k\neq 0$ with probability at least $1-2k/|\field|$.
  We obtain that the equivalence is preserved after variable substitution
  with desired probability.
\end{proof}
Item~(2) of Theorem~\ref{t:graph-matrix} follows by combining the above
lemma for all $u,v,k$ via the union bound -- the success probability is
at least $1-n^2\cdot 2\cdot (1+2+\ldots+(n-1))/|\field|\geq 1-n^4/|\field|$.

\subsection{Handling edge weights}\label{s:edge-weights}
Encoding distances via matrix powers crucially requires that all
the edges of $G$ have positive and equal weight, or equivalently,
that $G$ is unweighted.

For a weighted digraph $G=(V,E)$ with $n$ vertices and $m$ edges with integer weights in $[1,W]$, we can,
however, construct a related \emph{unweighted} digraph $G'=(V',E')$ with $nW$ vertices
and $m+n(W-1)$ edges, such that:
\begin{itemize}
  \item Each vertex $v\in V$ corresponds to $W$ vertices $v^1,\ldots,v^W$ in $G'$,
    assembled into a directed path with $W-1$ edges $v^Wv^{W-1},v^{W-1}v^{W-2},\ldots,v^2v^1$.
  \item Each edge $uv\in E$ of weight $c$ has a corresponding edge $u^1v^c$ in $G'$.
\end{itemize}
\begin{lemma}
  For any $u,v\in V$, $\dist_G(u,v)=\dist_{G'}(u^1,v^1)$.
\end{lemma}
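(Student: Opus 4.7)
The plan is to exhibit a length-preserving correspondence between walks $u \to v$ in $G$ and walks $u^1 \to v^1$ in $G'$, so that minimizing over either side gives the same value. The correspondence relies on a simple but crucial structural observation about $G'$: the only outgoing edges from a copy $x^i$ with $i > 1$ are the downward path edges $x^i \to x^{i-1}$, while the ``cross-column'' edges leave only from $x^1$ vertices. Consequently, once a walk in $G'$ enters the column of some vertex $y$ at level $y^c$, the only way to leave that column again is to descend to $y^1$ first and then exit via an edge $y^1 \to z^{c'}$.

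First I would prove the inequality $\dist_{G'}(u^1,v^1) \leq \dist_G(u,v)$. Take a shortest $u \to v$ path $P = u_0 u_1 \ldots u_k$ in $G$ with $u_0 = u$, $u_k = v$, and $\wei_G(u_i u_{i+1}) = c_i$. Replace each edge $u_i u_{i+1}$ by the sequence
\begin{equation*}
u_i^1 \to u_{i+1}^{c_i} \to u_{i+1}^{c_i - 1} \to \ldots \to u_{i+1}^1,
\end{equation*}
which by construction consists of $c_i$ edges of $G'$ (one cross edge and $c_i - 1$ path edges). Concatenating yields a walk in $G'$ from $u_0^1 = u^1$ to $u_k^1 = v^1$ of total length $\sum_{i=0}^{k-1} c_i = \dist_G(u,v)$.

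For the other direction, I would take any walk $Q$ in $G'$ from $u^1$ to $v^1$ and decompose it using the structural observation above. Partition $Q$ at those edges that exit a column, i.e., edges of the form $x^1 \to y^{c}$. Between two consecutive such cross edges the walk must consist solely of downward path edges inside a single column, and to reach the next $y^1$ from $y^c$ this descending portion must have length exactly $c - 1$. Hence $Q$ decomposes into ``hops'' of the form $x^1 \to y^{c} \to y^{c-1} \to \ldots \to y^1$, each of length $c$ and each corresponding to an edge $xy \in E$ of weight $c$ in $G$. Projecting the sequence of hops to $G$ produces a walk $u \to v$ of weight exactly equal to the length of $Q$, so $\dist_G(u,v) \leq \dist_{G'}(u^1,v^1)$.

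There is no real obstacle here; the main care needed is to articulate the ``hop-and-descend'' decomposition cleanly and to verify that the boundary cases (the first hop starting at $u^1$ and the last hop ending at $v^1$) fit this pattern, which they do since both endpoints live at level $1$. Combining the two inequalities gives $\dist_G(u,v) = \dist_{G'}(u^1, v^1)$, as required.
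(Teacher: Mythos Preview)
Your proposal is correct and follows essentially the same approach as the paper: both directions are proved via the same explicit replacement of a weight-$c$ edge by a cross edge plus $c-1$ descending edges, and the reverse inequality hinges on the identical structural observation that a vertex $x^i$ with $i>1$ has $x^i x^{i-1}$ as its unique outgoing edge, forcing the hop-and-descend decomposition. The paper spells out this decomposition with explicit indices $j_1<\dots<j_k$ marking the positions where the superscript equals $1$, while you describe it more conceptually, but the argument is the same.
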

\begin{proof}
  If $u=v$, $\dist_G(u,v)=\dist_{G'}(u^1,v^1)=0$. So let us assume $u\neq v$. Then $\dist_G(u,v),\dist_{G'}(u^1,v^1)>0$.

  Let us first prove $\dist_{G'}(u^1,v^1)\leq \dist_G(u,v)$. If $\dist_G(u,v)=\infty$ then this is trivial. Suppose ${\dist_G(u,v)=d}$.
  Then there exists an $u\to v$ path $P=u_1u_2\ldots u_{k}$ of weight $d$ in $G$, where $k\leq d+1$.
  Let $c_i$ be the weight of the edge $u_iu_{i+1}$ in $G$.
  Consider the path
  \begin{equation*}
    P'=(u_1^1u_2^{c_1}\cdot u_2^{c_1}u_2^{c_1-1} \cdot \ldots \cdot u_2^2u_2^1)\cdot (u_2^1u_3^{c_2}\cdot u_3^{c_2}u_2^{c_2-1} \cdot \ldots \cdot u_3^2u_3^1) \cdot \ldots \cdot
    (u_{k-1}^1u_{k}^{c_{k-1}}\cdot u_{k}^{c_{k-1}}u_k^{c_{k-1}-1} \cdot \ldots \cdot u_k^2u_k^1).
  \end{equation*}
  By the construction of $G'$ and the existence of $P$, $P'$ exists in $G'$ and consists of $\sum_{i=1}^{k-1}c_i=d$ edges.
  So indeed $\dist_{G'}(u^1,v^1)\leq \dist_G(u,v)$.

  Now we prove $\dist_G(u,v)\leq \dist_{G'}(u^1,v^1)$. Again, if $\dist_{G'}(u^1,v^1)=\infty$, there is nothing to prove.
  Otherwise, let $\dist_{G'}(u^1,v^1)=d\geq 1$. There exists an $u^1\to v^1$ path $Q=z_1^{p_1}\ldots z_{d+1}^{p_{d+1}}$ in $G'$,
  where $z_1,\ldots,z_{d+1}\in V$, $z_1=u$, $p_1=1$, $z_{d+1}=v$, and $p_{d+1}=1$.
  Let $j_1<\ldots<j_k$ be all indices $j$ such that $p_j=1$. In particular, $j_1=1$ and $j_k=d+1$.
  Since in $G'$, a vertex $w^q\in V'$, for $q>1$ has only a single outgoing edge $w^qw^{q-1}$,
  $j_i<l<j_{i+1}$ implies that $z_l=z_{l-1}$ and $p_l=p_{l-1}-1$. As a result,
  for $i>1$, $z_{j_i}=z_{j_{i-1}+1}$.
  We obtain that $P'$ can be expressed as:
  \begin{equation*}
    P'=(z_1^{1}z_2^{p_2}z_2^{p_2-1}\ldots z_2^1)\cdot (z_{j_2}^1z_{j_2+1}^{p_{j_2+1}} z_{j_2+1}^{p_{j_2+1}-1}\ldots z_{j_2+1}^{1})\cdot \ldots\cdot
    (z_{j_{k-1}}^1z_{j_{k-1}+1}^{p_{j_{k-1}+1}} z_{j_{k-1}+1}^{p_{j_{k-1}+1}-1}\ldots z_{j_{k-1}+1}^{1}).
  \end{equation*}
  Hence we conclude that $P'$ has $\sum_{i=1}^{k-1}p_{j_i+1}$ edges.
  Moreover, by the construction of $G'$, for each $i=1,\ldots,k-1$, there exists an edge $z_{j_i}z_{j_i+1}$ of weight
  $p_{j_i+1}$ in $G$. As a result, there exists a path $z_{j_1}\to z_{j_k}=z_1\to z_{d+1}=u\to v$ of
  weight $\sum_{i=1}^{k-1}p_{j_i+1}=|P'|=d$. This implies the desired inequality $\dist_G(u,v)\leq \dist_{G'}(u^1,v^1)$.
\end{proof}
Finally, let us note that by the correspondence of edges in $G$ and $G'$, an insertion or deletion (failure)
or a single edge $uv$ of weight $c$ in $G$ can be reflected by a single edge insertion or deletion of the edge $u^1v^c$ in $G'$.
Similarly, a failure of a vertex $v$ in $G$ can be translated to a failure of a single vertex $v^1$ in $G'$.

  \section{Distance sensitivity oracles}\label{s:dso}

\newcommand{\tin}{\text{in}}
  \newcommand{\tout}{\text{out}}

  Let $G=(V, E)$ be a digraph.
  Recall that a distance sensitivity oracle (DSO) is a data structure answering queries about
  $\dist_{G-F}(s,t)$, where $s,t\in V$ and $F\subseteq V\cup E$.
  The DSO problem can also be generalized by introducing the \emph{update} procedure that takes the set
  $F$ and preprocesses the failures to support efficient queries $(s,t)$ about $\dist_{G-F}(s,t)$
  with the failures $F$ fixed. Such a variant has been studied, e.g., in~\cite{BrandS19} and the objective
  is to give a tradeoff between the preprocessing, update, and query times.

  In the following, we will focus, without loss of generality, on edge failures only. In directed graphs, vertex failures can
  be easily reduced to edge failures via a standard vertex-splitting trick, as described next.
  Construct a related graph $G'$, at most twice as large as $G$, as follows.
  Each vertex $v$ of $G$ is split into two vertices $v_\tin,v_\tout$ connected
  by a directed edge $v_\tin v_\tout$.
  Each edge $uv\in E$ gives rise to an edge $u_\tout v_\tin$ in $G'$.
  Every $k$-edge $s\to t$ path $P$ in $G$ corresponds to a $2k+1$-edge path
  $P'=s_\tin\to t_\tout$ in $G'$ such that~$P$ goes through a vertex $z$ in $G$
  iff $P'$ goes through the edge $z_\tin z_\tout$ in $G'$.
  As a result, a failure of vertex $z$ of $G$ can be simulated using a failure
  of the edge $z_\tin z_\tout$ in $G'$.
  Clearly, if $F'\subseteq E(G')$ is obtained from $F\in V(G)\cup E(G)$ by replacing
  failing vertices with failing edges this way, then
  $\dist_{G-F}(s,t)=(\dist_{G'-F'}(s_\tin,t_\tout)-1)/2$.

  \subsection{Single failures}
  Let us first consider the 1-DSO problem, i.e., we only allow queries of the form
  $(s,t,F)$, where $F$ contains a single edge $f$ of $G$.
    \cite{GuR21, Ren22}~showed the following reduction of the 1-DSO problem to the \emph{$h$-truncated} 1-DSO
  problem where one is only interested in supporting queries computing $\min(\dist_{G-f}(s,t),h)$ instead.
  \begin{theorem}\label{t:truncated-reduction}{\upshape \cite[Section~3.3]{GuR21}}
    Let $G$ be an unweighted digraph. Suppose an $h$-truncated \linebreak 1-DSO $\mathcal{D}_h$ for $G$ with preprocessing time $P(n)$
    and query time $Q(n)$ is given. Then a general Monte-Carlo randomized 1-DSO $\mathcal{D}$ for~$G$ with $O(1)$ query time and $\Ot(n^2)$ space
    can be constructed in  $\Ot(P(n)+n^{2+\rho}+n^2\cdot Q(n)+n^3/h)$ time. If $\mathcal{D}_h$ produces correct answers w.h.p.,
    then so does~$\mathcal{D}$.
  \end{theorem}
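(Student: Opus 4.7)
\textbf{Proof proposal for Theorem~\ref{t:truncated-reduction}.}

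The plan is a standard hitting-set reduction: pivots take care of replacement paths of length exceeding $h$, while $\mathcal{D}_h$ handles the short-replacement case. Sample a uniformly random set $B\subseteq V$ of size $\Theta((n/h)\log n)$. A Chernoff-plus-union-bound argument then guarantees that, w.h.p., every simple path of more than $h$ vertices in $G$ contains some $b\in B$; hence any replacement path of length $>h$ goes through a pivot.

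For preprocessing, first run Zwick's APSP on $G$ in $\Ot(n^{2+\rho})$ time to obtain $\delta_G(s,t)$ and a canonical shortest-path tree per source. Second, for every pivot $b\in B$, run forward and backward BFS from $b$, collecting $\delta_G(b,\cdot)$ and $\delta_G(\cdot,b)$; this costs $\Ot(n^3/h)$ time and $\Ot(n^2)$ space. Third, for each ordered pair $(s,t)$ invoke $\mathcal{D}_h$ in a way that yields, per pair, enough short-replacement information to answer queries $(s,t,f)$ whenever $\delta_{G-f}(s,t)\leq h$; as there is one invocation per pair, this contributes the $\Ot(n^2\cdot Q(n))$ term.

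A query $(s,t,f)$ is answered in $O(1)$ as follows. Using the canonical shortest-path tree we test in $O(1)$ whether $f$ lies on the canonical $s\to t$ path; if not, return $\delta_G(s,t)$. Otherwise, report the minimum of (a) the stored short-replacement value obtained from $\mathcal{D}_h$, correct whenever $\delta_{G-f}(s,t)\leq h$; and (b) $\min_{b\in B}(\delta_G(s,b)+\delta_G(b,t))$ taken over pivots whose canonical $s\to b$ and $b\to t$ paths both avoid $f$, which, by the hitting-set property, is correct whenever $\delta_{G-f}(s,t)>h$. Monte Carlo correctness inherits from that of $\mathcal{D}_h$ together with the sampling of $B$.

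The main obstacle is realizing (b) in $O(1)$ query time with only $\Ot(n^2)$ total space. The key structural fact to exploit is that the critical edges of a pair $(s,t)$ -- those strictly increasing $\delta_G(s,t)$ upon removal -- form a chain along a single shortest $s\to t$ path, so the set of pivots whose canonical $s\to b\to t$ detour gets ``broken'' as $f$ slides along this chain changes monotonically. This allows one to precompute, for each $(s,t)$, a compact encoding of the best pivot for every position of $f$ on the chain, keeping total storage $\Ot(n^2)$. I expect this bookkeeping, rather than the APSP, BFS, or $\mathcal{D}_h$ invocations, to be the delicate component of the reduction.
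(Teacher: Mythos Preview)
Note first that the present paper does not prove Theorem~\ref{t:truncated-reduction}; it is quoted from~\cite[Section~3.3]{GuR21}, so there is no in-paper argument to compare against.

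Your high-level split into short ($\le h$) and long ($>h$) replacement paths via a random hitting set $B$ is the right backbone, but step~(b) has a genuine correctness gap. You take
\[
\min_{\substack{b\in B:\ \text{canonical }s\to b\\ \text{and }b\to t\text{ paths avoid }f}}\bigl(\delta_G(s,b)+\delta_G(b,t)\bigr)
\]
and assert equality with $\delta_{G-f}(s,t)$ whenever the latter exceeds~$h$. One direction holds: any $b$ in the restricted set contributes a valid $s\to t$ walk in $G-f$, so the displayed minimum is always $\ge\delta_{G-f}(s,t)$. For the reverse inequality you would need some pivot $b^*$ on the replacement path whose \emph{canonical} $G$-shortest paths $s\to b^*$ and $b^*\to t$ both avoid~$f$. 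The hitting-set property only places a pivot on the replacement path; nothing stops the canonical $s\to b^*$ path in~$G$ from routing through~$f$ (and thus being strictly shorter than the $s\to b^*$ prefix of the replacement path), in which case $b^*$ is excluded from your minimum with no substitute guaranteed. The construction in~\cite{GuR21} therefore works with genuine replacement distances $\delta_{G-f}(\cdot,b)$ and $\delta_{G-f}(b,\cdot)$ to and from pivots, and the $\Ot(n^3/h)$ and $\Ot(n^2\,Q(n))$ terms pay for obtaining these (via per-pivot single-source replacement-path computations, bootstrapped through~$\mathcal{D}_h$), not for plain BFS --- which, as you implicitly observe, is redundant once APSP is computed.

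A related loose end: a single $\mathcal{D}_h$-query per ordered pair $(s,t)$ answers exactly one triple $(s,t,f)$, so it cannot by itself supply ``enough short-replacement information'' for every failure on the $s\to t$ path; your accounting of the $\Ot(n^2\,Q(n))$ budget does not match how~\cite{GuR21} actually spends it. The critical-edge chain compression you sketch at the end is a real ingredient for achieving $O(1)$ query time within $\Ot(n^2)$ space, but it must be layered on top of the corrected long-path formula, not the one you wrote.
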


  \cite{GuR21} showed an $h$-truncated 1-DSO with $\Ot\left(n^\omega+n^{\omega(1,1-\alpha,1-\alpha)+2\alpha}\right)$ preprocessing time for
  $h=\Theta(n^{\alpha})$. This implies, by Theorem~\ref{t:truncated-reduction},
  a general 1-DSO with preprocessing time $O(n^{2.58})$ and $O(1)$ query time if $h$ is chosen appropriately.
  This construction time bound does not, however, match the $\Ot(n^{2+\rho})=\Ot(n^{2.529})$ time bound
  of Zwick's APSP algorithm~\cite{Zwick02}.
  The $h$-truncated 1-DSO (and also the general 1-DSO) of~\cite{GuR21} also generalizes to digraphs with integer weights $[1,W]$
  at the cost of an additional factor $W$ in the preprocessing time.
  We give an improved $h$-truncated 1-DSO for unweighted digraphs, as captured by the following lemma.
  \begin{lemma}\label{l:our-truncated}
    Let $G$ be an unweighted digraph. For $h=\Theta(n^{\alpha})$, there exists an
    $h$-truncated DSO with $\Ot(n^{\omega(1,1-\alpha,1)+\alpha})$ preprocessing
    time, $O(n^2h)$ space and $\Ot(h)$ query time.
    The data structure is Monte Carlo randomized and answers queries
    correctly with high probability.
  \end{lemma}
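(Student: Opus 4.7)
The plan is to assemble three earlier pieces: the submatrix-power query data structure of Theorem~\ref{t:frobenius-powers}, the weighted-adjacency encoding of Theorem~\ref{t:graph-matrix}, and the single-element update formula of Lemma~\ref{l:matrix-inverse}.

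For preprocessing, I would work over a field $\field$ of size $\poly(n)$ chosen large enough to absorb the polynomially many union bounds to come. I sample the random weighted adjacency matrix $A\in\field^{n\times n}$; by Theorem~\ref{t:graph-matrix}, w.h.p.\ $A$ is generic and for every $u,v,k$ one has $(A^k)_{u,v}\ne 0$ iff $\dist_G(u,v)\le k$. I then compute an FNF of $A$ via Lemma~\ref{l:fnf-simple} in $\Ot(n^\omega)$ time and invoke Theorem~\ref{t:frobenius-powers} with $S=T=[n]$ to tabulate all of $A^1,\dots,A^h$ explicitly, in $\Ot(n^{\omega(1,1-\alpha,1)+\alpha})$ time and $O(n^2 h)$ space. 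The $n^\omega$ term is dominated by $n^{\omega(1,1-\alpha,1)+\alpha}$ by standard convexity/monotonicity of the rectangular exponent in its middle argument.

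For a query $(s,t,e)$ with $e=(u,v)$, I view the failing-edge graph $G-e$ as arising from setting $A_{u,v}:=0$, giving a modified matrix $B$. I then apply Lemma~\ref{l:matrix-inverse} with $\Psi=\{(u,v,0)\}$, so that $f=1$, $S=\{u\}$, $T=\{v\}$: the precomputed submatrices $(A^k)_{T,S}$ are single scalars instantly readable from the preprocessed powers, and the update preprocessing of $\Psi$ costs only $\Ot(h)$. Then, with $X=\{s\}$ and $Y=\{t\}$ (again single-scalar submatrices), the lemma delivers all $(B^k)_{s,t}$ for $k\in[1,h]$ in $\Ot\!\left((\mm(1,1,1)+\mm(1,1,1))\cdot h\right)=\Ot(h)$ time. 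I output the smallest $k\in[1,h]$ with $(B^k)_{s,t}\ne 0$, or report that the truncated distance exceeds $h$ if no such $k$ exists.

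The only step I expect to actually need care is confirming that the analog of Theorem~\ref{t:graph-matrix}(2) still holds for $B$ and the graph $G-e$. I anticipate this to reduce to re-running the proof of Lemma~\ref{l:entry-prob} on the symbolic matrix with the $(u,v)$ entry forced to $0$: the shortest-path monomial construction used there only involves the remaining symbolic entries, which are still independent random substitutions, so Schwartz--Zippel gives the same error bound. A union bound over the $O(m n^2 h)$ combinations of queries, failing edges, and relevant values of $k$, combined with $|\field|=n^{O(1)}$ for a sufficiently large constant, keeps the total failure probability at $n^{-\Omega(1)}$. The stated time, space, and query bounds then follow.
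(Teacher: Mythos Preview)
Your proposal is correct and follows essentially the same approach as the paper: sample the weighted adjacency matrix of Theorem~\ref{t:graph-matrix}, compute its FNF, tabulate all $h$ powers via Theorem~\ref{t:frobenius-powers}, and answer each query by a single-element application of Lemma~\ref{l:matrix-inverse}. The only notable difference is that you spell out the correctness argument for~$B$ more carefully than the paper does; in fact, since zeroing $A_{u,v}$ yields precisely the weighted adjacency matrix of $G-e$ under the same random substitution, one can simply invoke Theorem~\ref{t:graph-matrix} directly for $G-e$ and union-bound over the $O(n^2)$ possible failing edges, rather than re-running Lemma~\ref{l:entry-prob}.
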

  \begin{proof}
  Fix the field $\field$ to be $\mathbb{Z}/p\mathbb{Z}$ for some prime number
    $p=\Theta(n^{4+c})$, where $c\geq 1$ is a constant controlling the error probability.
    Let $A\in \field^{n\times n}$ be a weighted adjacency matrix of Theorem~\ref{t:graph-matrix}.
    Recall that $A$ is generic with probability at least $1-1/n^c$.
  The preprocessing is simply to compute the matrix powers $A^1,\ldots,A^{h}$,
    which can be done in $\Ot(n^{\omega(1,1-\alpha,1)+\alpha})$ time by Theorem~\ref{t:powers}.
    Theorem~\ref{t:powers} requires an FNF of $A$, which can be computed in $\Ot(n^\omega)$ time by Lemma~\ref{l:fnf-simple},
    and an auxiliary matrix $R$ of Lemma~\ref{l:rep}, computed in $\Ot(n^2)$ time.
    Note that $\omega(1,1-\alpha,1)+\alpha\geq\omega$.

  Observe that if the graph $G$ is subject to a failure of a single edge $f=uv$, the weighted adjacency matrix $A$ -- assuming the same variable substitution in $\asym(G)$ -- undergoes
  a single \emph{element update} of changing the entry $A_{u,v}$, $u\neq v$, to $0$.
    Let $B$ denote the matrix $A$ after such an update.
    By Lemma~\ref{l:matrix-inverse}, for any $s,t\in V$, we can
    compute $(B^1)_{s,t},\ldots,(B^{h})_{s,t}$ in $\Ot(h)$ time.
    By Theorem~\ref{t:graph-matrix}, with high probability, if $d\leq h$ is minimal such that $(B^d)_{s,t}\neq 0$, then
    $\dist_{G-f}(s,t)=d$, and otherwise, if such a value $d$ does not exists then $\dist_{G-f}(s,t)>h$
    and thus $\min(\dist_{G-F}(s,t),h)=h$.
  \end{proof}

  By using the above lemma for $h=\Theta(n^{1-\rho})$ and applying Theorem~\ref{t:truncated-reduction}, we have\footnote{Similarly as in~\cite{GuR21, Ren22}, the size of the obtained DSO is $\Ot(n^2)$ even though superquadratic $\Ot(n^{3-\rho})$ space is used during the construction phase.}:
  {\renewcommand{\footnote}[1]{}\ouronedso*}

  Notably, the obtained data structure of Theorem~\ref{t:our-1-dso} matches Zwick's best-known APSP\linebreak
  bound~\cite{Zwick02}
  in terms of preprocessing time (up to polylog factors) and has optimal $O(1)$ query time.
  As discussed in Section~\ref{s:edge-weights}, the approach can be generalized
  to digraphs with integer weights in $[1,W]$ at the cost of $\Ot(W^{2+\rho})=O(W^{2.529})$ multiplicative overhead
  in the preprocessing time.

  \subsection{Multiple failures}

  Let us now consider supporting an arbitrary number $f$ of edge failures
  in the preprocess-update-query model. We will show the following.

    \ourmanydso*

  \paragraph{Hitting sets.} Before we continue, let us recall a standard \emph{hitting set trick}~\cite{UY91} that proved useful in solving shortest path problems across multiple settings in the past.

  \begin{lemma}\label{l:hitting}
    Let $G$ be an unweighted digraph. Let $h\in [1,n]$ be an integer.
    Let $H\subseteq V$ be a subset of $V$ obtained by sampling $\gamma\cdot (n/h)\log{n}$ vertices uniformly and independently, where $\gamma\geq 1$ is a sufficiently large constant.
    For any $s,t\in V$, let $G_{H,s,t}$ be a weighted digraph
    on $H\cup \{s,t\}$ such that for any $u,v\in V(G_{H,s,t})$,
    an edge $uv$ of weight $\dist_G(u,v)$ appears in $E(G_{H,s,t})$
    iff $\dist_G(u,v)\leq h$. Then, $\dist_G(s,t)=\dist_{G_{H,s,t}}(s,t)$
    holds with high probability depending on the constant $\gamma$.
  \end{lemma}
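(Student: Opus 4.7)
The plan is to establish the two inequalities $\dist_G(s,t) \leq \dist_{G_{H,s,t}}(s,t)$ and $\dist_{G_{H,s,t}}(s,t) \leq \dist_G(s,t)$. The first is deterministic and immediate: every edge $uv$ of weight $w = \dist_G(u,v)$ in $G_{H,s,t}$ corresponds to a genuine $u \to v$ path of $w$ edges in $G$, so any $s \to t$ walk in $G_{H,s,t}$ of total weight $W$ yields an $s \to t$ walk of length $W$ in $G$, which cannot be shorter than $\dist_G(s,t)$.

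For the reverse direction, I would first establish a hitting property via the probabilistic method. For every ordered pair $(s,t)$ with $\dist_G(s,t) < \infty$, fix arbitrarily one shortest $s \to t$ path $P_{s,t} = v_0 v_1 \ldots v_k$; its vertices are distinct since $G$ is unweighted and $P_{s,t}$ is shortest. I want to argue that, with high probability, every window $\{v_i, v_{i+1}, \ldots, v_{i+h-1}\}$ of $h$ consecutive vertices along some $P_{s,t}$ contains a vertex of $H$. For a single fixed window $W$ of $h$ distinct vertices, a uniformly random sample from $V$ misses $W$ with probability $1 - h/n$, so by independence $\Pr[H \cap W = \emptyset] \leq (1 - h/n)^{\gamma (n/h) \log n} \leq e^{-\gamma \log n} = n^{-\gamma}$. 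Union bounding over at most $n$ windows per pair and at most $n^2$ pairs, for $\gamma$ large enough every such window is hit by $H$ with probability $1 - n^{-\Omega(1)}$.

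Conditioning on this event, I would exhibit a walk in $G_{H,s,t}$ of weight exactly $k := \dist_G(s,t)$. If $k \leq h$, then by construction $G_{H,s,t}$ directly contains the edge $st$ of weight $k$. Otherwise, let $v_{i_1}, \ldots, v_{i_l}$ be the vertices of $P_{s,t} \cap H$ in path-order; the hitting property forces $i_1 \leq h$, $i_{j+1} - i_j \leq h$ for $j < l$, and $k - i_l \leq h$, since otherwise some window of $h$ consecutive vertices of $P_{s,t}$ would be disjoint from $H$. Each of the corresponding sub-paths is itself a shortest path between its endpoints (as a sub-path of a shortest path) of length at most $h$ between vertices of $\{s,t\} \cup H$, so the corresponding edge appears in $G_{H,s,t}$ with weight equal to that length. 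Summing along the walk $s \to v_{i_1} \to \cdots \to v_{i_l} \to t$ yields total weight exactly $k$, giving the desired bound. The main obstacle is organising the union bound so that a single event over the randomness in $H$ handles all pairs simultaneously, which is achieved by pre-committing one shortest path per pair so that only $O(n^3)$ windows need to be controlled.
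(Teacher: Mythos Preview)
The paper does not actually prove this lemma; it is stated without proof as the standard hitting-set trick attributed to~\cite{UY91}. Your argument is correct and is precisely the standard one: fix one shortest path per reachable pair, show by a union bound over the $O(n^3)$ length-$h$ windows that each is hit by $H$ with high probability, and then thread the shortest $s\to t$ path through its $H$-vertices to exhibit a walk of weight exactly $\dist_G(s,t)$ in $G_{H,s,t}$.
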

  Lemma~\ref{l:hitting} reduces computing $\dist_G(s,t)$ to finding
  $h$-bounded distances between $\Ot(n/h)$ vertices of $G$.
  Once the (potentially dense) auxiliary graph $G_{H,s,t}$ from Lemma~\ref{l:hitting} is constructed,
  obtaining the desired $s,t$-distance amounts to running Dijkstra's algorithm
  on $G_{H,s,t}$ in $\Ot((n/h)^2)$ time.
  Significantly, a sampled hitting set $H$ is valid for any graph on $V$, i.e.,
  with high probability, the same $H\subseteq V$ can be used with $\poly(n)$ (possibly random) different
  graphs, as long as $H$ is independent of these graphs.
  For example, in the dynamic setting, $H$ is valid for $\poly(n)$ versions
  of the evolving graph $G$ if the queries do not reveal any information about $H$.
  In particular, for the studied oracles computing \emph{exact} distances,
  the answers are uniquely determined by the input graph and thus do not
  reveal the random bits behind the used hitting sets.

  \paragraph{Preprocessing.} The only preprocessing is to construct a weighted adjacency
  matrix $A$ (over a sufficiently large $\field$ for $G$ as described in Theorem~\ref{t:graph-matrix}) and its FNF along with
  the auxiliary matrix~$R$ from~Lemma~\ref{l:rep}, which costs $O(n^2)$ space.
  The preprocessing takes $\Ot(n^\omega)$ time by Lemma~\ref{l:fnf-simple}.

  \paragraph{Update.} Given a batch $F$ of $f=\Theta(n^\beta)$ failing edges, $\beta<1$, we proceed as follows.
  Let $S$ contain all the endpoints of the failing edges $F$. We have $|S|=O(f)$.
  Let $H$ be a sampled hitting set from Lemma~\ref{l:hitting} for $h=\lceil n/f\rceil$.
  As a result, $|H|=\Ot(n/h)=\Ot(f)$.
  Using Theorem~\ref{t:powers}, we compute the submatrices
  $(A^1)_{S\cup H},\ldots,(A^h)_{S\cup H}$ in $\Ot(n^{\omega(\beta,\beta,\beta)+(1-\beta)})=\Ot(n^{1+\beta\cdot (\omega-1)})=\Ot(nf^{\omega-1})$ time.

  Consider the weighted adjacency matrix $B$ of $G-F$. $B$ is obtained from $A$ by zeroing
  the entries $A_{u,v}$ for all $uv\in F$. Therefore, $B$ is obtained from $A$ via $f$ element updates.
  By Lemma~\ref{l:matrix-inverse}, for any $x,y\in H$, the elements
  $(B^1)_{x,y},\ldots,(B^h)_{x,y}$ can be computed, given the preprocessed submatrices of the powers of $A$,
  in $\Ot(f^\omega \cdot h)=\Ot(nf^{\omega-1})$ time.
  Recall that by Theorem~\ref{t:graph-matrix}, the submatrices  $(B^1)_H,\ldots,(B^h)_H$
  encode the $h$-bounded distance between $H$ in $G-F$.

  The matrices stored upon update use $\Ot(nf)=O(n^2)$ space.

  \paragraph{Query.} Suppose we want to compute $\dist_{G-F}(s,t)$ for query vertices $s,t\in V$.
  We construct the graph $G_{H,s,t}$ from Lemma~\ref{l:hitting}.
  Observe that we have precomputed most of the edges of $G_{H,s,t}$ in the update phase.
  It remains to compute the (weights) of edges incident to $s$ and $t$ in $G_{H,s,t}$.
  To this end, we first compute the submatrices $(A^1)_{\{s,t\},S\cup H\cup \{s,t\}},\ldots,(A^h)_{\{s,t\},S\cup H\cup \{s,t\}}$
  and the submatrices $(A^1)_{S\cup H,\{s,t\}},\ldots,(A^h)_{S\cup H,\{s,t\}}$
  in $\Ot(n^{\omega(0,\beta,\beta)+(1-\beta)})=\Ot(n^{\beta+1})=\Ot(nf)$ time.
  Afterwards, we can apply Lemma~\ref{l:matrix-inverse} to compute
  the submatrices $(B^1)_{\{s,t\},H\cup \{s,t\}},\ldots,(B^h)_{\{s,t\},H\cup \{s,t\}}$
  and the submatrices $(B^1)_{H,\{s,t\}},\ldots,(B^h)_{H,\{s,t\}}$
  encoding the remaining $h$-bounded distances required for constructing
  $G_{H,s,t}$ in $\Ot(n^{\omega(0,\beta,\beta)+(1-\beta)})=\Ot(nf)$ time.
  Having $G_{H,s,t}$ constructed, the final step is to run
  Dijkstra's algorithm to compute the shortest $s,t$ path
  in $G_{H,s,t}$ in $\Ot(f^2)$ time.

   \section{Dynamic distances}\label{s:dyn-distances1}

   In this section, we describe three different distance oracles for fully
   dynamic unweighted digraphs.

   \subsection{Tweaking the data structure of~\cite{BrandFN22}}

  \begin{theorem}\label{t:dynamic-inverse}{\upshape\cite{BrandFN22}}
    Let $B\in \field^{n\times n}$ and let $0\leq \nu\leq \mu\leq 1$. Let $h\in [1,n]$ be an integer.
    Let $S,T\subseteq [n]$.
    There exists a data structure maintaining
    the $S\times T$ submatrix of the
    inverse of the polynomial matrix $I-X\cdot B\in\field[X]/(X^{h+1})$
    under element updates to $B$ and
    single-element changes (additions or removals) to the sets $S$ and $T$ as long as $|S|,|T|\leq n^\mu$.
    The initialization time is $\Ot(h\cdot n^\omega)$ and the worst-case
    update time is
    $\Ot((n^{\omega(1,\mu,1)-\mu}+n^{\omega(1,\nu,\mu)-\nu}+n^{\mu+\nu}+|S|\cdot |T|)\cdot h)$.
  \end{theorem}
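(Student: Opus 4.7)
The plan is to adapt the lazy dynamic matrix inverse framework of Sankowski~\cite{Sankowski04,Sankowski05} to the polynomial ring $\field[X]/(X^{h+1})$, with an additional layer for tracking an $S\times T$ output submatrix. The cost structure in the statement exactly matches a two-level buffered-rebuild scheme with thresholds $n^\nu$ and $n^\mu$, so I would aim for precisely that.

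First, I would initialize $M_0 := (I-X\cdot B)^{-1}\bmod X^{h+1}$ explicitly; note that $M_0=\sum_{k=0}^h X^k B^k$. This can be computed in $\Ot(1)$ polynomial matrix multiplications via the product expansion $(I-Y)^{-1}=\prod_{j\geq 0}(I+Y^{2^j})$ truncated modulo $X^{h+1}$, costing $\Ot(h\cdot n^\omega)$ field operations. Next, I would maintain the current inverse implicitly via two lazy buffers. Every element update $B_{u,v}\leftarrow B_{u,v}+\delta$ corresponds to a rank-$1$ perturbation $-X\delta\,e_ue_v^T$ of $I-X\cdot B$, which by Sherman-Morrison-Woodbury becomes a rank-$1$ additive correction to the inverse of the form $\alpha(X)\cdot(Me_u)(e_v^TM)$ for some $\alpha\in\field[X]/(X^{h+1})$, just as in Lemma~\ref{l:matrix-inverse}. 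Accumulated corrections are kept in an inner buffer of width at most $n^\nu$ and an outer buffer of width at most $n^\mu$.

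When the inner buffer fills, I flush it into the outer buffer via a single multiplication of polynomial matrices of shapes $n\times n^\nu$ and $n^\nu\times n^\mu$, costing $\Ot(n^{\omega(1,\nu,\mu)}\cdot h)$ and amortizing to $\Ot(n^{\omega(1,\nu,\mu)-\nu}\cdot h)$ per update. When the outer buffer fills, I rebuild $M_0$ by multiplying out the rank-$n^\mu$ correction in $\Ot(n^{\omega(1,\mu,1)}\cdot h)$ time, amortizing to $\Ot(n^{\omega(1,\mu,1)-\mu}\cdot h)$ per update. A standard background/de-amortization trick turns the amortized bounds into worst-case. The output submatrix $M_{S,T}$ itself is stored explicitly; after every update it is refreshed by combining $(M_0)_{S,T}$ with contributions from both buffers, accessed as small row/column slices of the stored factors, costing $\Ot((|S|+|T|)\cdot n^\nu\cdot h)=\Ot(n^{\mu+\nu}\cdot h)$ plus $\Ot(|S|\cdot|T|\cdot h)$ to fold in the newest correction onto the stored entries. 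A single addition/deletion to $S$ or $T$ corresponds to inserting or removing one row/column, for which the new entries are recovered from the current representation within the same $\Ot(n^{\mu+\nu}\cdot h)$ budget.

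The main obstacle, and what dictates the exact exponents, is managing the Woodbury inversion of $I+X\Delta\cdot(VM_0U)$ inside the polynomial ring: this is an up to $n^\mu\times n^\mu$ polynomial matrix, and its inverse modulo $X^{h+1}$ must be computed via the product-of-squares expansion used in Lemma~\ref{l:matrix-inverse}, so that every polynomial operation runs in $\Ot(h)$ field operations via FFT. Balancing these costs with the flush/rebuild schedule above yields precisely the stated worst-case update bound $\Ot((n^{\omega(1,\mu,1)-\mu}+n^{\omega(1,\nu,\mu)-\nu}+n^{\mu+\nu}+|S|\cdot|T|)\cdot h)$.
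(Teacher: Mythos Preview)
The paper does not prove this theorem at all; it is quoted verbatim as a black-box result from~\cite{BrandFN22} and used as a subroutine in Section~\ref{s:dyn-distances1}. So there is no ``paper's own proof'' to compare against.

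That said, your sketch is a faithful outline of the two-level lazy dynamic matrix inverse scheme underlying~\cite{BrandFN22} (and its predecessors~\cite{Sankowski04,BrandNS19}), lifted coefficient-wise to $\field[X]/(X^{h+1})$. The four cost terms you isolate---amortized full rebuild $n^{\omega(1,\mu,1)-\mu}$, amortized inner flush $n^{\omega(1,\nu,\mu)-\nu}$, per-update buffer bookkeeping $n^{\mu+\nu}$, and output refresh $|S|\cdot|T|$, each multiplied by $h$ for degree-$h$ polynomial arithmetic---are exactly right, and the de-amortization to worst case is standard. One small imprecision: the phrase ``flush it into the outer buffer via a single multiplication of shapes $n\times n^\nu$ and $n^\nu\times n^\mu$'' does not quite describe what happens. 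Merging the inner buffer into the outer one is just concatenation; the $n^{\omega(1,\nu,\mu)}$ cost actually arises from pushing the rank-$n^\mu$ outer correction through the $n^\nu$ new column/row slices needed to form the inner-buffer factors (shapes $n\times n^\mu$ by $n^\mu\times n^\nu$, same exponent by symmetry of $\omega$). This does not affect the bound but would matter in a full write-up.
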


  The data structure of Theorem~\ref{t:dynamic-inverse} can be used to maintain
  an unweighted digraph $G$ under single-edge insertions and deletions and answer
  $s,t$-distance queries in $G$ as follows~(see also \cite[Section~C]{dynamic-st-dist}).
  For a parameter $\lfloor n^\alpha\rfloor=h\in [1,n]$, sample a random hitting set $H\subseteq V$
  of size $\widetilde{\Theta}(n/h)$ as in Lemma~\ref{l:hitting}.
  The data structure of Theorem~\ref{t:dynamic-inverse} is set up for the (unweighted)
  adjacency matrix $A^*$ of $G$ and $S=T=H$, and the field $\field$ is chosen
  to be $\mathbb{Z}/p\mathbb{Z}$ for a sufficiently large random prime $p\in n^{\Theta(1)}$.
  As discussed in the proof of Lemma~\ref{l:matrix-inverse}, the maintained $H\times H$ submatrix of
  $(I-XA^*)^{-1}$, encodes the submatrices $((A^*)^1)_{H},\ldots,((A^*)^h)_{H}$.
  Those, in turn, encode, with high probability, the $h$-bounded
  distances between the vertices $H$ in $G$.

  To compute $\dist_G(s,t)$ for query vertices $s,t\in V$, we first temporarily add $s,t$
  to the sets $S$ and~$T$, at the cost of $O(1)$ updates issued to the data structure.
  Afterwards, the maintained submatrix can be used to construct the graph $G_{H,s,t}$
  of Lemma~\ref{l:hitting}, and consequently $\dist_G(s,t)$ can be computed
  in $\Ot((n/h)^2)$ additional time by running Dijkstra's algorithm on $G_{H,s,t}$.
  After $\dist_G(s,t)$ is computed, we remove the temporarily added vertices $\{s,t\}\setminus H$
  from $S$ and $T$.

  Both updates and queries
  are processed in
  $\Ot(n^{\omega(1,\mu,1)-\mu+\alpha}+n^{\omega(1,\nu,\mu)-\nu+\alpha}+n^{\mu+\nu+\alpha}+n^{2-\alpha})$ worst-case time.
  By setting $\mu=0.862$, $\nu=0.543$, and $h=n^{0.297}$, \cite{BrandFN22} obtain
  $\Ot(n^{1.703})$ update and query bound.\footnote{One can use the online term balancer~\cite{Complexity} to reproduce this bound for the given parameters.}

  Using Theorems~\ref{t:graph-matrix}~and~\ref{t:powers}, we can obtain an improved
  bound by slightly altering how the data structure behind Theorem~\ref{t:dynamic-inverse}
  operates when initialized with the \emph{weighted adjacency matrix}~$A$ from Theorem~\ref{t:graph-matrix}
  instead of the unweighted adjacency matrix $A^*$.
  Specifically, the dynamic matrix inverse data structure of Theorem~\ref{t:dynamic-inverse}
  operates, at the topmost level, in phases of $\Theta(n^\mu)$ element updates.
  At the end of each phase, the inverse $(I-XA)^{-1}$ is explicitly recomputed
  from the inverse at the beginning of the phase and the
  $\Theta(n^\mu)$ most recent updates using fast rectangular matrix multiplication
  in $\Ot(n^{\omega(1,\mu,1)}\cdot h)$ time. In a standard way, this cost can be distributed
  over the $\Theta(n^\mu)$ updates and hence the
  $\Ot(n^{\omega(1,\mu,1)-\mu}\cdot h)$ term in the update bound.
  However, we can as well recompute
  $(I-XA)^{-1}\pmod{X^{h+1}}$ \emph{from scratch} using Theorem~\ref{t:powers}
  in $\Ot(n^{\omega(1,1-\alpha,1)+\alpha})$ time as
  \begin{equation*}
    (I-XA)^{-1}\bmod{X^{h+1}}=I+X\cdot A+\ldots+X^h\cdot A^h
  \end{equation*}
  (see Section~\ref{s:inverse}).
  Since this recomputation happens every $\Theta(n^\mu)$ updates, we
  obtain a slightly different
  $\Ot(n^{\omega(1,1-\alpha,1)-\mu+\alpha}+n^{\omega(1,\nu,\mu)-\nu+\alpha}+n^{\mu+\nu+\alpha}+n^{2-\alpha})$
  update/query bound for $h=n^\alpha$, as long as $1-\alpha\leq \mu$.
  By using the online term balancing tool~\cite{Complexity},
  we find that for
  $\mu=0.793$, $\nu=0.552$, and $\alpha=0.328$, the bound is $O(n^{1.673})$.

  \tweaked*

\subsection{Another data structure for single-edge updates}
If $\omega=2$, both our data structure of Theorem~\ref{t:tweaked} and that of~\cite{BrandFN22}
yield an $\Ot(n^{1+5/8})$ update/query bound if properly optimized.
In this section, we show a different dynamic distance oracle
summarized as follows.

\dyndist*

The bound $\Ot\left(n^{\frac{\omega+1}{2}}\right)=O(n^{1.687})$ of Theorem~\ref{t:dyndist} is slightly worse
than the $O(n^{1.673})$ bound obtained in Theorem~\ref{t:tweaked}, but leads to a more natural $O(n^{1.5})$ bound under the optimistic
assumption $\omega=2$. Moreover, it breaks through the theoretical $O(n^{1+5/8})$ limit of the other discussed approaches
already if $\omega<2.25$.

\paragraph{Update.} The algorithm operates in phases of $\lceil n^{1-\alpha}\rceil$ edge updates, for $\alpha\in [0,1]$ to be set later.
At any point of time, we denote by $A$ the weighted adjacency matrix (see Theorem~\ref{t:graph-matrix}) of the graph~$G$ from the beginning
of the current phase, and by $B$
a weighted adjacency matrix of the current graph~$G$.
The matrix~$B$ equals $A$ immediately after the phase starts
and evolves by single-element updates corresponding to the edge updates
issued to $G$. In particular, if an edge $uv$ is inserted into~$G$,
a fresh random field element $x_{u,v}$ is sampled to guarantee
that $B$ is always obtained from $\asym(G)$ via random variable substitution (see Theorem~\ref{t:graph-matrix}).

When a phase starts, we compute in $\Ot(n^\omega)$ time the weighted adjacency matrix $A$ of the graph~$G$
along with an FNF of $A$ (Lemma~\ref{l:fnf-simple}) and the auxiliary matrix $R$ of Lemma~\ref{l:rep}.
This costly computation happens once per phase and thus takes $\Ot(n^{\omega-1+\alpha})$ amortized time per update.
Moreover, for $h=\lceil n^{\alpha} \rceil$
we also sample a hitting set $H\subseteq V$ of size $\Theta(n/h\log{n})=\Ot(n^{1-\alpha})$.

When a phase proceeds, let us denote by $S\subseteq V$ the set of endpoints of
the edges inserted or deleted in the current phase.
At the beginning of a phase, $S=\emptyset$ and we always have $|S|=O(n^{1-\alpha})$.
Throughout, we make sure that all the submatrices
$(A^1)_{H\cup S},\ldots,(A^h)_{H\cup S}$ are stored explicitly.
To this end, when a phase starts, we compute the submatrices
$(A^1)_H,\ldots,(A^h)_H$. This takes $\Ot(n^{\omega(1-\alpha,1-\alpha,1-\alpha)+\alpha})=\Ot(n^{(1-\alpha)\omega+\alpha})\subseteq \Ot(n^{\omega})$ time
by Theorem~\ref{t:powers}.
Amortized over the $\Theta(n^{1-\alpha})$ updates in a phase, this costs
$\Ot(n^{\omega-1+\alpha})$ time.
Upon an update of an edge $uv$, $u$ and $v$ are inserted into $S$,
so we only need to compute the submatrices
$\left((A^j)_{\{u,v\},H\cup S\cup \{u,v\}}\right)_{j=1}^h$ and
$\left((A^j)_{H\cup S,\{u,v\}}\right)_{j=1}^h$
to satisfy the invariant.
This costs $\Ot(n^{\omega(0,1-\alpha,1-\alpha)+\alpha})=\Ot(n^{2-\alpha})$ time by Theorem~\ref{t:powers}.

We also maintain the submatrices $(B^1)_H,\ldots,(B^h)_H$. They are initialized trivially
to the corresponding computed submatrices
$(A^1)_H,\ldots,(A^h)_H$ when a phase starts.
By Lemma~\ref{l:matrix-inverse}, they
can be updated subject to an element change $(u,v)$ (corresponding to an insertion
or deletion of the edge $uv$ in $G$) in $B$ in $\Ot(|H|^2\cdot h)=\Ot(n^{2-\alpha})$ time
if the (current) submatrices $(B^1)_{H\cup\{u,v\}},\ldots,(B^h)_{H\cup\{u,v\}}$ are provided.
To provide those, we only need to construct the submatrices
$\left((B^j)_{\{u,v\},H\cup\{u,v\}}\right)_{j=1}^h$ and
$\left((B^j)_{H,\{u,v\}}\right)_{j=1}^h$,
as the other entries are maintained explicitly.
Again, by Lemma~\ref{l:matrix-inverse}, those can be obtained
from the submatrices $\left((A^j)_{H\cup S\cup \{u,v\}}\right)_{j=1}^h$ (that are off by at most $|S|$ element updates to $A$) in
$\Ot\left(\left(\mm(2,|S|,|H|)+\mm(|S|,|S|,2)\right)\cdot h\right)=$
$\Ot(n^{2-2\alpha}\cdot n^\alpha)=\Ot(n^{2-\alpha})$ time.

\paragraph{Query.} Finally, to answer a distance query, we construct a graph $G_{H,s,t}$
of Lemma~\ref{l:hitting} and run Dijkstra's algorithm
on it in $\Ot(n^{2-2\alpha})$ time. As the $h$-bounded distances between
the vertices~$H$ are all encoded in the maintained
submatrices $(B^1)_H,\ldots,(B^h)_H$, we only need to compute
$h$-bounded distances between $\{s,t\}$ and $H\cup \{s,t\}$.
These, again, can be devised from the submatrices
$\left((B^j)_{\{s,t\},H\cup\{u,v\}}\right)_{j=1}^h$ and
$\left((B^j)_{H,\{s,t\}}\right)_{j=1}^h$.
To construct those, we proceed identically as if an update of the edge $st$
was issued: we can temporarily add $\{s,t\}$ to $S$, recompute the
missing submatrices of the powers of $A$ and $B$, and revert
this process at the end.
This way, constructing the $O(|H|)$ missing edges of $G_{H,s,t}$ takes
$\Ot(n^{2-\alpha})$ time.

The amortized update time of the data structure is $\Ot(n^{2-\alpha}+n^{\omega-1+\alpha})$,
which is optimized for $\alpha=\frac{3-\omega}{2}$.
Observe that the heavy $\Ot(n^\omega)$-time computation, the only
source of amortization here, happens
only when a phase starts, once per $\lceil n^{1-\alpha}\rceil$ updates.
As a result, the amortized bound can be converted into a worst-case
bound using a standard technique, see, e.g.,~\cite{AbrahamCK17, BrandNS19}.

\subsection{Vertex updates}\label{s:vertexupd}
Finally, we show that the dynamic Frobenius form algorithm of Section~\ref{s:rank1} leads
to the first distance oracle supporting distance queries in $\Ot(n)$ time
and \emph{vertex updates} (i.e., changing all the edges incident to a single vertex) significantly faster than $\Ot(n^\omega)$ in the worst-case.
We note that a \emph{static} distance oracle supporting queries in $\Ot(n)$ time
can be constructed in $\Ot(n^\omega)$ time~\cite{YusterZ05}.

\vertexupd*
\begin{proof}
  The data structure is very simple. We maintain a weighted adjacency matrix $A$
  of $G$, as given by Theorem~\ref{t:graph-matrix}.
  We also maintain a Frobenius form of $A$ and $A^T$.
  Since updating all the incoming edges or all the outgoing edges
  of a vertex $v\in V(G)$ can be encoded using a rank-1 update
  of $A$, a vertex update translates to at most $2$ rank-1 updates of $A$.
  Hence, by Theorem~\ref{t:rank1}, the Frobenius forms of $A$ and $A^T$ can be updated
  subject to a vertex update on $G$ in $\Ot(n^2)$ time.
  After each update, we also recompute the auxiliary matrix $R$ of Lemma~\ref{l:rep}
  in $\Ot(n^2)$ time.
  Given an FNF and the auxiliary matrix, for any $s,t\in V$, we can compute
  the entries $(A^1)_{s,t},\ldots,(A^{n-1})_{s,t}$ in $\Ot(n)$ time.
  By Theorem~\ref{t:graph-matrix}, w.h.p., $\dist_{G}(s,t)$
  equals the minimal $d\geq 0$ such that $(A^d)_{s,t}\neq 0$.
\end{proof}

\section*{Acknowledgment}
We would like to thank Maciej Gałązka for important clarifications regarding linear algebra, and anonymous
FOCS reviewers for valuable comments.

\bibliographystyle{alpha}

\bibliography{references}

\end{document}